\newcolumntype{Y}{>{\centering\arraybackslash}X} 
\newcolumntype{P}[1]{>{\RaggedRight\hspace{0pt}}p{#1}}
\newcolumntype{X}[1]{>{\RaggedRight\hspace*{0pt}}p{#1}}
\colorlet{linecol}{black!75}
\newcommand{\thicktoprule}{\Xhline{4\arrayrulewidth}}
\newcommand{\thickbottomrule}{\Xhline{4\arrayrulewidth}}
\newcolumntype{Y}{>{\centering\arraybackslash}X}
\newcolumntype{Z}{>{\raggedright\arraybackslash}X}
\colorlet{mhpurple}{Plum!80}
\DeclareMathOperator*{\argmax}{arg\,max}
\journal{Transportation Research Part C: Emerging Technologies}
\newtheorem{lemma}{Lemma}
\newtheorem{definition}{Definition}
\begin{document}

\begin{frontmatter}

\title{Unraveling Stochastic Fundamental Diagrams with Empirical Knowledge: Modeling, Limitations, and Future Directions}

\author[1]{Yuan-Zheng Lei}
\author[1]{Yaobang Gong}
\author[1]{Xianfeng Terry Yang*} 
\ead{xtyang@umd.edu}

\address[1]{Department of Civil and Environmental Engineering, University of Maryland, 1173 Glenn L.Martin Hall, College Park, MD 20742, United States}
\begin{abstract}
\par Traffic flow modeling relies heavily on fundamental diagrams. However, deterministic fundamental diagrams, such as single or multi-regime models, cannot capture the underlying uncertainty in traffic flow. To address this limitation, this study proposes a non-parametric Gaussian process model to formulate the stochastic fundamental diagram. Unlike parametric models, the non-parametric approach is insensitive to parameters, flexible, and widely applicable. The computational complexity and high memory requirements of Gaussian process regression are also mitigated by introducing sparse Gaussian process regression. This study also examines the impact of incorporating empirical knowledge into the prior of the stochastic fundamental diagram model and assesses whether such knowledge can enhance the model's robustness and accuracy. By using several well-known single-regime fundamental diagram models as priors and testing the model's performance with different sampling methods on real-world data, this study finds that empirical knowledge benefits the model only when small inducing samples are used with a relatively clean dataset. In other cases, a purely data-driven approach is sufficient to estimate and describe the density-speed relationship pattern.
\end{abstract}
  \begin{keyword}
    Stochastic fundamental diagram \sep Sparse Gaussian process regression \sep Empirical knowledge
  \end{keyword}
\end{frontmatter}

\section{Introduction} \label{1}
\par The traffic fundamental diagram is a crucial concept that provides a quantitative framework for understanding and estimating traffic behavior in different scenarios. Its application is widespread, including traffic flow theory, highway capacity analysis, and the design of intelligent transportation systems. Therefore, it plays a vital role in the planning and operating of efficient transportation networks.
\par The fundamental diagram is used to describe the following relationship:
\begin{equation}
    q = \rho \times v \label{eq:1}
\end{equation}
where $q$, $\rho$, and $v$ represent flow, density, and space mean speed. Based on \cite{ni2015traffic}, it can also be regarded as an equilibrium traffic flow model since the fundamental diagram represents a steady-state or equilibrium relationship between flow, density, and speed. In this paper, the density-speed relationship is the primary focus. Given this relationship, the other two relationships can be determined using Eq \ref{eq:1}. Traditional density-speed fundamental diagram models can be classified into single-regime and multi-regime models, as shown in Figure \ref{figure:3}. Single-regime models usually describe the relationship between traffic density and speed using a linear or nonlinear function. However, single-regime models may not always fit well in all density ranges. This is why multi-regime models use piecewise functions to represent the density-speed relationship across the entire range of densities.
\par Almost all traditional single and multi-regime fundamental density-speed diagram models are shown in a deterministic form, which can be briefly represented by:
\begin{equation}
    v = f(\rho) \label{eq:2}
\end{equation}
\par In real-life traffic situations, it is difficult to find a steady-state condition. Therefore, when a traffic density $\rho$ is given, the actual observed speed is more likely to fall within a specific range instead of a fixed value. Adopting the idea presented in \cite{ni2015traffic}, the stochastic fundamental diagram model operates on the principle that the density-speed relationship is not a function of density but a function of density and its distribution. This relationship is expressed mathematically in equation \ref{eq:3}, where $w(\rho)$ indicates the density distribution.
\begin{equation}
    v = f(\rho, w(\rho)) \label{eq:3}
\end{equation}
\par Figure \ref{figure:4} shows the stochastic fundamental diagram, where a certain speed distribution corresponds to a given density $\rho$. In numerous studies on network-level control, such as \cite{gupta2023simple} and \cite{li2024beyond}, the fundamental diagram is utilized to depict the traffic patterns of a route, section, or zone. Many of these studies focus on maximizing flow through signal control or route guidance strategies. In a deterministic fundamental diagram model, the target density after control is a constant value since the deterministic fundamental diagram models a curve. In the stochastic fundamental diagram model, the maximum mean flow will associate with a distribution indicating all possibilities, and it may have a wide confidence interval, indicating that it could lead to a very high maximum flow and a very low minimum flow. Therefore, the stochastic fundamental diagram model provides insight into the robust network-level control, suggesting that it is necessary to simultaneously consideration of both the mean and variance of the flow, rather than focusing solely on the mean, and it is better to steer the network toward a traffic density with a relatively high mean flow but a narrow confidence interval.
\begin{figure}[htbp]
    \centering
    \begin{subfigure}{.49\textwidth}
        \centering
        \includegraphics[width=\linewidth]{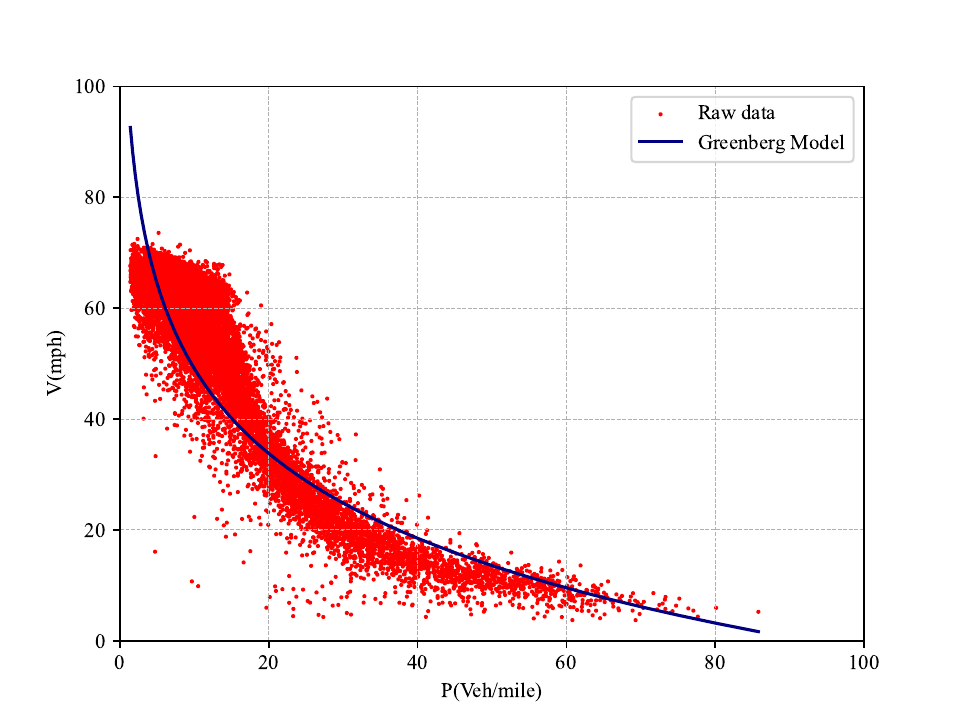}
        \caption{Greenberg Model}
        \label{figure:1}
    \end{subfigure}
    \begin{subfigure}{.49\textwidth}
        \centering
        \includegraphics[width=1.0\textwidth]{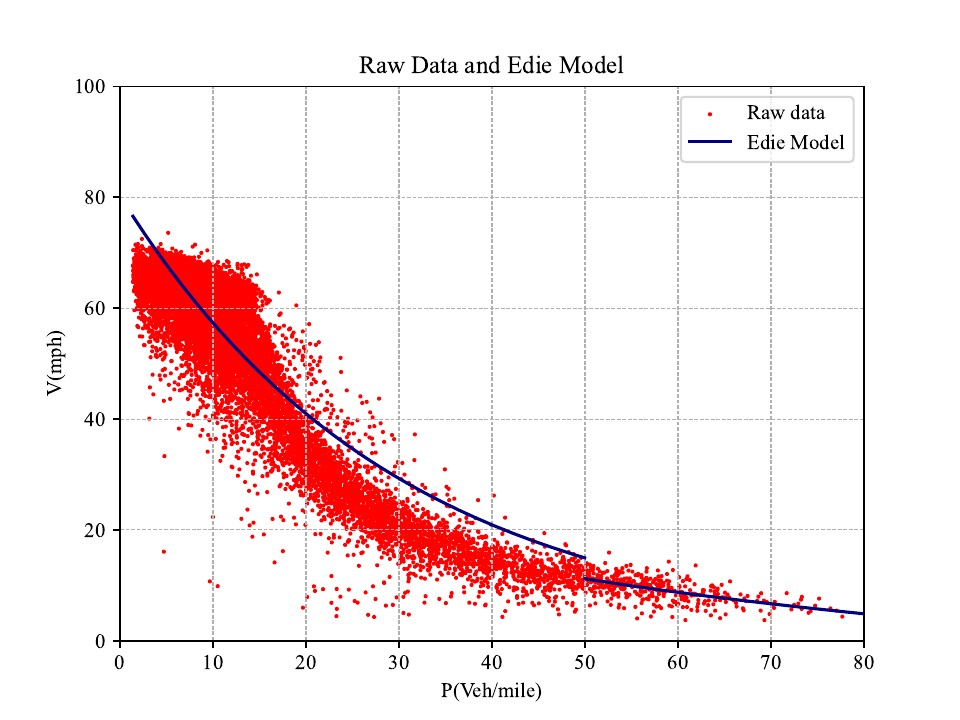}
        \caption{Edie Model}
        \label{figure:2}
    \end{subfigure}
    \caption{Single-regime and multi-regime models}\footnotemark
    \label{figure:3}
\end{figure}
\footnotetext{Both models are calibrated based on a dataset collected on the Georgia State Route 400 and weighted least square method proposed in \cite{qu2015fundamental}}
\begin{figure}
    \centering   \includegraphics[width=0.5\textwidth]{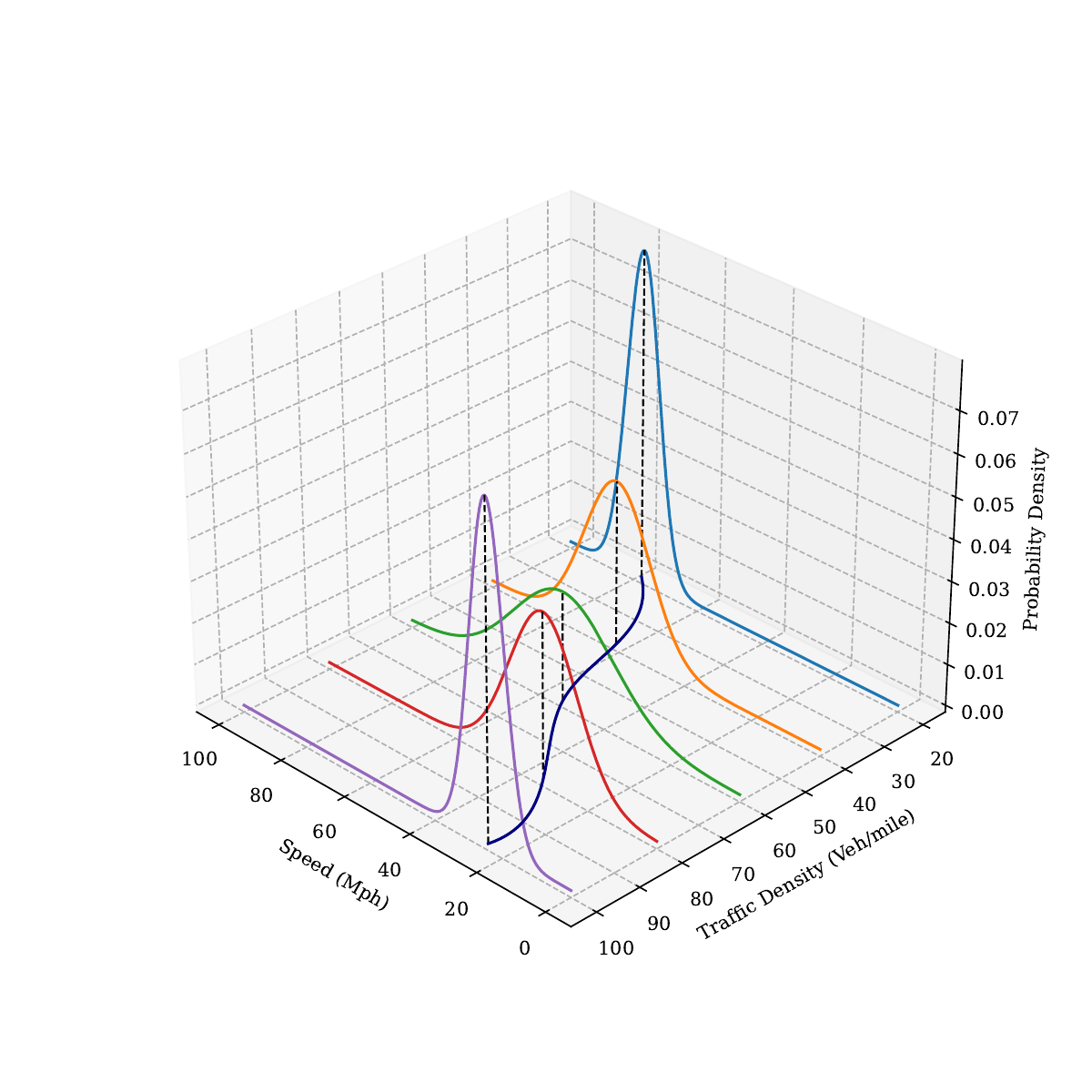}
    \caption{Representation of the stochastic fundamental diagram}
    \label{figure:4}
\end{figure}
\par Stochastic and deterministic fundamental diagram models are essentially regression models. Assuming some data has been collected, which can be expressed as $\mathcal{D} = \{(x_{1},y_{1}),...,(x_{n},y_{n})\}$, a regression model is a model that used to describe the relationship between $\mathbf{y}$ and $\mathbf{x}$ that can be written as:
\begin{equation}
    \mathbf{y} = f(\mathbf{x},\bm{\theta}) + \epsilon \label{eq:4}
\end{equation}
where $\bm{\theta}$ is a vector of $k$ parameters, $\epsilon$ denotes an error term whose distribution may or may not be Gaussian, and $f(\cdot)$ is a function that delineates the relationship between $\mathbf{y}$ and $\mathbf{x}$.
\par Most existing stochastic or deterministic fundamental diagram models are parametric regression models. For a parametric regression model, the $\bm{\theta}$ in Equation \ref{eq:4} is a vector in finite-dimensional space, and $f(\mathbf{x},\bm{\theta})$ need to be known functions and distribution of $\epsilon$ is known prior for the inference (\cite{mahmoud2019parametric}). However, for a non-parametric regression model, the $\bm{\theta}$ in Equation \ref{eq:4} is a subset of an infinite-dimensional vector of vector spaces, and $f(\mathbf{x},\bm{\theta})$ is an unknown function.
\par The flexibility of parametric models is limited by their fixed functional form and assumptions on error distributions, making it challenging to capture complex relationships. Non-parametric models, however, are better suited for modeling complex, nonlinear relationships. Gaussian process regression (GPR) is a powerful non-parametric model for the stochastic density-speed relationship (\cite{yuan2021traffic,yuan2021macroscopic}). Nonetheless, handling extensive datasets presents a challenge for Gaussian Process Regression (GPR), as it demands significant computational resources. Specifically, for a dataset comprising $n$ samples, GPR's computational complexity scales cubically, with a time complexity of $\mathcal{O}(n^3)$. Meanwhile, its space complexity, which indicates the amount of memory needed, grows quadratically, marked by $\mathcal{O}(n^2)$. This makes GPR less feasible for very large datasets without employing variational techniques or approximation methods. To overcome this issue, a sparse Gaussian process regression (SGPR) is used to approximate the full GPR using a subset of data points, for example, $m (m << n)$, and the time complexity will fall into $\mathcal{O}(mn^2)$. Although the accuracy of SGPR is generally considered lower than GPR, it can still be a viable solution for deterministic fundamental diagram models. In the majority of cases, deterministic models are parametric in nature. Consequently, employing the complete dataset for their calibration does not pose significant challenges because its time complexity is generally considered as $\mathcal{O}(c^2n)$ (\cite{li1996new}), where $c$ is the total number of features, $c << m$ in practical. This observation leads to an insightful proposition: calibrating deterministic models initially and subsequently utilizing them as priors over the unknown function may offer a viable strategy. Such an approach potentially facilitates the incorporation of empirical knowledge derived from the entire dataset without incurring additional time and space complexities associated with Sparse Gaussian Process Regression (SGPR).
\par The objective of this paper is twofold. Firstly, inspired by \cite{wang2013stochastic}, \cite{qu2017stochastic}, and \cite{cheng2021s}, it aims to develop a non-parametric model for the density-speed relationship. This model will reflect the equilibrium density-speed relationship based on the mean function curve (the navy curve on the z = 0 plane in Figure \ref{figure:4}) and traffic patterns based on different speed distributions. Secondly, the paper will explore the possibility of bridging the gap between previous deterministic and stochastic models. Specifically, it will investigate the feasibility of incorporating empirical models as a prior or direct empirical model term into the loss function of the stochastic model. The paper will answer the following questions: \textbf{\textit{Is it possible to merge empirical knowledge from deterministic models into the stochastic model?}} \textbf{\textit{If so, how can it be done? And, is this empirical knowledge really helpful for the stochastic model?}}
\par In this paper, our contribution can be concluded as follows:
\begin{itemize}
    \item A sparse non-parametric regression framework is proposed to model the stochastic density-speed relationship. The mean function of the generated Gaussian process can be regarded as the density-speed relationship under the equilibrium state, and the different Gaussian distributions distributed at every single density point represent the overall traffic pattern.
    \item This paper describes how to model empirical knowledge into a stochastic model but also highlights the limitations caused by deterministic model parameters, which require clean data for calibration.
    \item In the first time, to the best of our knowledge, this paper shows that even taking an extremely well-calibrated empirical model as the prior, when the size of inducing samples is over a certain range, the advantages embedded in it may not exist.
\end{itemize}
 \par The structure can be organized as follows: section \ref{2} will review deterministic fundamental diagram models and stochastic fundamental diagram models; section \ref{3} will introduce how deterministic models are calibrated and demonstrate the non-parametric technique that will be used for sparse regression and briefly discuss the sampling algorithms that will be used to select inducing variables for making the approximation; section \ref{4} will discuss the potential influence of empirical models.
\section{Literature review} \label{2}
\subsection{Deterministic fundamental diagram models}
\par Most well-known single and multi-regime fundamental diagram models are deterministic models. The density-speed relationship is represented by a linear or nonlinear function, which can be considered to reflect the density-speed relationship under equilibrium. Given a certain density $\rho$, the corresponding speed is fixed. Based on the modeling methods of the density-speed relationship corresponding to different traffic conditions, the deterministic fundamental diagram models can be classified into single-regime models and multi-regime models.
\subsubsection{Single-regime models}
\par In 1935, \cite{greenshields1935study} proposed the first fundamental traffic flow diagram. This diagram established a linear relationship between speed and density. Since then, a lot of research has been done to enhance both the accuracy and interpretability of this relationship, such as Greenberg's model (\cite{greenberg1959analysis}), Underwood's model (\cite{underwood1961speed}), Newell's model (\cite{newell1961nonlinear}), Drake's model (\cite{drake1965statistical}) and Drew's model (\cite{drew1968traffic}). 
\par There are different types of fundamental diagrams that are classified based on the functional forms of the speed-density relationship. These diagrams can be categorized into three main groups - exponential function-related models, logarithmic function-related models, and polynomial function-related models along with their variants. Exponential function-related models include the models proposed by \cite{newell1961nonlinear}, \cite{drake1965statistical}, \cite{kerner1994structure}, \cite{del1995functional-part-1}, and \cite{wang2011logistic}. On the other hand, \cite{ni2016vehicle} proposed logarithmic function-related models. Polynomial function-related models and their variants are proposed by \cite{pipes1966car}, \cite{drew1968traffic}, \cite{jayakrishnan1995dynamic}, \cite{macnicholas2011simple}, and \cite{cheng2021s}.
\par According to the (\cite{ni2015traffic}), it has been observed that none of the early single-regime models are capable of fitting the empirical data reasonably well over the entire density range. Each model is only suitable for a specific density range. Due to this limitation, researchers started exploring the idea of fitting the data in a piecewise manner using multiple equations, which resulted in the development of multi-regime models. 
\begin{table}[hthp]
  \centering
  \caption{Some single-regime traffic flow model}
  \label{Table:1}
  \begin{tabular}{>{\centering\arraybackslash}p{0.3\linewidth} >{\centering\arraybackslash}p{0.45\linewidth} >{\centering\arraybackslash}p{0.25\linewidth}}
    \thicktoprule
    \textbf{Model} & \textbf{Relationship Equation} & \textbf{Parameters}\\
    \midrule
    \cite{greenshields1935study} & 
    $\begin{aligned}
      v &= v_{f}\left(1 - \frac{\rho}{\rho_{j}}\right)
    \end{aligned}$ & $v_{f},\rho_{j}$ \\
    \cite{greenberg1959analysis} & $\begin{aligned}
      v &= v_{\text{critical}}\log\left(\frac{\rho_{j}}{\rho}\right)
    \end{aligned}$ & $v_{critical},\rho_{j}$ \\
    \cite{underwood1961speed} & $\begin{aligned}
      v &= v_{f}\exp\left(-\frac{\rho}{\rho_{\text{critical}}}\right)
    \end{aligned}$ & $v_{f},\rho_{critical}$\\
    \cite{newell1961nonlinear} & $\begin{aligned}
      v &= v_{f}\left\{1 -\exp\left[-\frac{\lambda}{v_{f}}(\frac{1}{\rho} - \frac{1}{\rho_{j}})\right]\right\}
    \end{aligned}$ & $v_{f},\rho_{j},\lambda$\\
        \cite{drake1965statistical} & $\begin{aligned}
      v &= v_{f}\exp\left(-(\frac{\rho}{\rho_{{critical}}})^{2}\right)
    \end{aligned}$ & $v_{f},\rho_{critical}$ \\
        \cite{pipes1966car} & $\begin{aligned}
      v &= v_{f}\left(1 - \frac{\rho}{\rho_{j}}\right)^{n}
    \end{aligned}$ & $v_{f},\rho_{j},n$ \\
        \cite{drew1968traffic} & $\begin{aligned}
      v &= v_{f}\left[1 - \left(\frac{\rho}{\rho_{j}}\right)^{m_{1}}\right]^{m_{2}}
    \end{aligned}$ & $v_{f},\rho_{j}, m_{1}, m_{2}$\\
        \cite{papageorgiou1989macroscopic} & $\begin{aligned}
      v &= v_{f}\exp\left[- \frac{1}{\alpha} \left(\frac{\rho}{\rho_{j}}\right)^{\alpha}\right]
    \end{aligned}$ & $v_{f},\rho_{j}, \alpha$\\
        \cite{kerner1994structure} & $\begin{aligned}
      v &= v_{f}\left[\frac{1}{1 + \exp \left(\frac{\frac{\rho}{\rho_{citical}} - 0.25}{0.06}\right)- 372 \times 10 ^{-8}}\right]
    \end{aligned}$ & $v_{f},\rho_{critical}$ \\
       \cite{del1995functional-part-1} & $\begin{aligned}
      v &= v_{f}\left\{1 - \exp[\frac{v_{j}}{v_{f}}(1 - \frac{\rho_{j}}{\rho})]\right\}
    \end{aligned}$  & $v_{f},\rho_{j}, v_{j}$ \\
       \cite{jayakrishnan1995dynamic} & $\begin{aligned}
      v &= v_{min} + (v_{f} - v_{min})(1 - \frac{\rho}{\rho_{j}})
    \end{aligned}$   & $v_{f},v_{min},\rho_{j}$ \\
       \cite{ardekani2008modified} & $\begin{aligned}
      v &= v_{critical}\log(\frac{\rho_{j} + \rho_{min}}{\rho + \rho_{min}})
    \end{aligned}$ & $v_{critical},\rho_{j}, \rho_{min}$ \\
       \cite{macnicholas2011simple} & $\begin{aligned}
      v &= v_{f}\left(\frac{\rho_{j}^{n} - \rho^{n}}{\rho_{j}^{n} + m\rho^{n}} \right)
    \end{aligned}$ & $v_{f},\rho_{j}, n, m$ \\
       \cite{wang2011logistic} & $\begin{aligned}
      v &=  v_{critical} + \frac{v_{f} - v_{critical}}{\left\{1 + \exp[\frac{\rho - \rho_{critical}}{\theta_{1}}] \right\}^{\theta_{2}}}
    \end{aligned}$ & $v_{f},v_{critical},\rho_{critical},\theta_{1},\theta_{2}$ \\
       \cite{cheng2021s} & $\begin{aligned}
      v &=  \frac{v_{f}}{[1 + (\frac{\rho}{\rho_{critical}})^{m}]^{\frac{2}{m}}},1\leq m \leq 8.53
    \end{aligned}$ & $v_{f},\rho_{critical},m$ \\
    \thickbottomrule
  \end{tabular}
\end{table}
\subsubsection{Multi-regime models}
\par Multi-regime models incorporate a series of piecewise functions. Common multi-regime models (\cite{edie1961car}, \cite{drake1965statistical}, \cite{may1990traffic} and \cite{sun2005development}) include two-regime or three-regime models.Recent studies have adopted non-parametric regression frameworks. For instance, \cite{sun2014data} adopted a B-spline non-parametric regression structure. An incomplete list of multi-regime models is shown in Table \ref{Table:2}. The basic idea of multi-regime models is to divide density into two or three groups and then use different functions to fit corresponding empirical data located in different density ranges. Drawing on the findings of \cite{ni2015traffic}, it is evident that the multi-regime model exhibited limited efficacy in accurately fitting the empirical data. Moreover, these models often present issues of discontinuity and lack of differentiability.
\par There has been significant progress in improving the accuracy of both single and multi-regime models. With various attempts and exploration of both parametric and non-parametric regression structures, recent deterministic models have already shown a good fit for empirical data in all density ranges. Therefore, accuracy is no longer the greatest weakness of these models. The deterministic approach has an inherent deficiency in reflecting the density-speed relationship under different states rather than a steady state.
\begin{table}[hthp]
  \centering
  \caption{Some multi-regime traffic flow model\textsuperscript{*}}
  \label{Table:2}
  \begin{tabular}{>{\centering\arraybackslash}p{0.25\linewidth} >{\centering\arraybackslash}p{0.25\linewidth} >
  {\centering\arraybackslash}p{0.25\linewidth} >{\centering\arraybackslash}p{0.25\linewidth}}
    \thicktoprule
    \textbf{Model} & \textbf{Free flow} & \textbf{Transitional} & \textbf{Congested}\\
    \midrule
    Edie model & 
    $\begin{aligned}
      v = 54.9e^{-\frac{\rho}{163.9}}, \rho \leq 50
    \end{aligned}$ & - & $\begin{aligned}
      v = 26.8\ln{\frac{162.5}{\rho}}, \rho > 50
    \end{aligned}$  \\
    Two-regime model & 
    $\begin{aligned}
      v = 60.9 -0.515\rho, \rho \leq 65
    \end{aligned}$ & - & $\begin{aligned}
      v = 40 - 0.265\rho, \rho > 65
    \end{aligned}$  \\
        Modified Greenberg model & 
    $\begin{aligned}
      v = 48, \rho \leq 35
    \end{aligned}$ & - & $\begin{aligned}
      v = 32\ln{\frac{145.5}{\rho}}, \rho > 35
    \end{aligned}$  \\
        Three-regime model & 
    $\begin{aligned}
      v = 50 - 0.098\rho, \rho \leq 40
    \end{aligned}$ & $\begin{aligned}
      v = 81.4 - 0.913\rho, 40 < \rho \leq 65
    \end{aligned}$ & $\begin{aligned}
      v = 40 -0.265\rho, \rho > 65
    \end{aligned}$  \\
    \thickbottomrule
  \end{tabular}
  \footnotesize\textsuperscript{*} The speed of vehicles is measured in miles per hour, while traffic density is quantified as vehicle per mile.
\end{table}
\subsection{Stochastic fundamental diagram models}
\par 
In the 1970s, \cite{soyster1973stochastic} proposed an SFD model that utilized a probabilistic density distribution to describe specific flow rates based on a birth-and-death stochastic process. This model, however, relied on highly restrictive assumptions, including the notion that vehicle arrivals followed a Poisson process. Building upon this, \cite{kerner1998experimental} expanded the framework by offering speed ranges for various densities through three-phase traffic flow models. Further advancements came from \cite{alperovich2008stochastic}, who developed a multi-lane stochastic traffic flow model grounded in microscopic dynamics. Nevertheless, this model did not specify an explicit SFD functional form, omitting a crucial noise or error term. \cite{wang2013stochastic} later introduced a macroscopic stochastic method to depict the speed-flow equilibrium relationship, employing Karhunen-Loeve expansion for estimating variance and mean, yet it lacked explicit stochastic speed-density relationships.
\par Subsequent research efforts, including those by \cite{macnicholas2011simple}, \cite{fan2013data}, \cite{jabari2014probabilistic}, \cite{qu2017stochastic}, and \cite{wang2021model}, succeeded in generating distributions for speed or flow as functions of density, achieving relations based on percentiles. However, the models proposed by the first three studies were limited to specific traffic flow scenarios, which restricted their general applicability in broader traffic flow stochasticity contexts. Specifically, \cite{macnicholas2011simple} introduced a probabilistic graphical approach to the fundamental traffic diagram based on the triangular deterministic model. \cite{fan2013data} incorporated a novel variable, the empty-road velocity, to capture randomness within the ARZ model. \cite{jabari2014probabilistic} crafted a probabilistic stationary speed-density relation using Newell's simplified car-following model.
\par Further, \cite{qu2017stochastic} proposed a percentile-based SFD calibration method, delving into the continuity, differentiability, and convexity of their model. \cite{wang2021model} extended this percentile-based SFD, proposing a comprehensive model capable of coordinating a family of speed–density curves. This model transformed a convex but nondifferentiable model into a linear programming model that was efficiently solvable with advanced solvers. Exploring other aspects, \cite{laval2014distribution} utilized a stochastic triangular fundamental diagram to examine kinematic wave models, while \cite{siqueira2016effect} analyzed the SFD model from a mesoscopic perspective, focusing on vehicle speed transitions. Building on the triangular fundamental diagram concept, \cite{zhang2018empirical} established an SFD model that primarily captured the distributions of shockwave speeds.
\par More recent studies, such as \cite{ni2018modeling}, investigated the SFD through a stochastic process inspired by the Maxwell–Boltzmann distribution, aiming to encapsulate the scattering features in the fundamental diagram. \cite{zhou2020modeling} applied a Gaussian mixture model to depict a triangular fundamental diagram, introducing stochasticity primarily through stochastic headways in a mixed human-driven and automated environment. \cite{ahmed2021fundamental} developed a multimodal SFD model for heterogeneous and undisciplined traffic flows, leveraging unmanned aerial vehicle data from Pakistan. Addressing the uncertainty of speed heterogeneity and rainfall intensity, \cite{bai2021calibration} proposed an SFD model to uncover both observable and unobservable heterogeneities in traffic flows. Most recently, \cite{cheng2024analytical} tackled the limitations of deterministic models in highway traffic flow theory by introducing two stochastic fundamental diagram (SFD) models that employ lognormal and skew-normal distributions. These models aim to accurately represent traffic state variations through stochasticity in free-flow speeds and critical density speeds. Validated with real-world loop detector data, these innovative models effectively replicate empirical data scatterings, demonstrating significant potential in analyzing freeway capacities and identifying sources of stochasticity. However, a stochastic model holds both a stochastic form to reflect the overall traffic pattern and a deterministic form to represent the equilibrium state is needed.
\par Contrary to the majority of macroscopic traffic models, which often use first or second-order approaches and are typically formulated through partial differential equations (PDEs) — much like numerous models in physics, the deterministic models explored in this study do not adhere to a PDE-based framework. Moreover, these models are functions with physics and non-physics parameters, introducing a significant challenge: the imperative for high-quality data to calibrate these empirical models accurately. In the absence of such data, the effectiveness of these models is at risk. Although it is possible to incorporate empirical models into the loss function, doing so will fall into a modeling dilemma and will not necessarily enhance the model's robustness. A robust model can provide accurate estimations and predictions even when the input data is noisy. However, to obtain a robust model, clean data are needed to establish the entire loss function in our case. Once clean data has been identified and processed, using noisy data as input is pointless. The successful enhancement of model robustness through this integration is contingent upon the availability of pristine data. This research proposes the empirical models into the prior as the mean function in sparse Gaussian process regression (SGPR), with the aim of improving the model's precision rather than its robustness. Most deterministic fundamental models can be directly utilized as this mean function. Further into the text, an examination is conducted on whether this inclusion effectively elevates the performance of the model and offsets the limitations of sparse approximation. The primary objective is to facilitate non-parametric models in extracting more valuable insights from the empirical prior — derived from the entire dataset, even when they are trained on only a subset of this data due to significant computational demands. This strategy seeks to enhance the accuracy of non-parametric regression models by making full use of the empirical data, which cannot be done if only rely on non-parametric regression. The primary goal of utilizing empirical models as the mean function is to enhance the precision of non-parametric models when fitting the empirical data. While this approach shares similarities with methods employed in physics-informed machine learning, it is important to note that this research does not classify as physics-informed machine learning. 
\section{Methodology} \label{3}
\par In this section, we will discuss how to generate the non-parametric stochastic fundamental diagram model. The basic methodology consists of three main parts: calibration of the empirical deterministic model, non-parametric stochastic modeling, and integration of empirical prior.
\subsection{Empirical deterministic model}
\par In our overall modeling framework, the empirical deterministic model plays a very important role in bridging the traditional deterministic models and the statistics model. As the prior of the statistics model, we need to ensure that all empirical deterministic models must be well-calibrated. The least square method (LSM) has been widely used to calibrate regression-based models in transportation research areas, such as \cite{sun2003use} and \cite{meng2013bus}. When calibrating the single-regime fundamental diagram model, the LSM may encounter many drawbacks. The real-world data used to calibrate the single-regime fundamental diagram model is unbalanced distributed, and most density-speed pairs are located in the low-traffic density area. As a result, the fitted model based on LSM will be well fitted in high-density areas but skewed in low-density areas and may not be able to reflect the accurate real pattern of the density-speed relations. Therefore, following \cite{qu2015fundamental}, a weighted least square (WLSM) method is adopted to calibrate all the single-regime and multi-regime fundamental diagram models in this paper. Considering a density-speed pairs data set $\mathcal{D}$, $\mathcal{D} = \left\{(\rho_{1},v_{1}),(\rho_{2},v_{2}),...,(\rho_{m},v_{m}) \right\}$ and a function $v = f(\rho,\theta)$, the LSM is aiming to minimize the sum of squared errors as shown in Eq \ref{eq:5}, due to the continuously differentiable assumption of this unconstrained optimization problem, its first-order optimality condition can be expressed as Eq \ref{eq:6}. Unlike LSM, WLSM minimizes Eq \ref{eq:7}, whose first-order optimality condition can now be written as Eq \ref{eq:8}.
\begin{equation}
  \min \mathcal{L}(\theta) = \sum_{i=1}^{n} (v_i - f(\rho_i, \theta))^2   \label{eq:5}
\end{equation}
\begin{equation}
  \frac{\partial \mathcal{L}(\theta)}{\partial \theta_{j}} = -2\sum_{i = 1}^{n} [v_{i} - f(\rho_{i},\theta)] \frac{\partial f(\rho_{i},\theta)}{\partial \theta_{j}} = 0, \quad j = 1, \dots, n  \label{eq:6}
\end{equation}
\begin{equation}
  \min \mathcal{L}(\theta) = \sum_{i=1}^{n} \varpi_{i} (y_{i} - f(x_{i}, \theta))^2   \label{eq:7}
\end{equation}
\begin{equation}
  \frac{\partial \mathcal{L}(\theta)}{\partial \theta_{j}} = -2\sum_{i=1}^{n} \varpi_{i} [v_{i} - f(\rho_{i}, \theta)] \frac{\partial f(\rho_{i}, \theta)}{\partial \theta_{j}} = 0, \quad j = 1, \ldots, n  \label{eq:8}
\end{equation}
\par In our paper, all the single-regime fundamental diagram models will be calibrated through WLSM proposed by \cite{qu2015fundamental}, and the calibration\footnote{The calibration code is shown in
\href{URL}{https://github.com/YuanzhengLei/Weighted-the-least-square-method-for-single-regime-fundamental-diagram-models}} will be done by an optimizer in \cite{virtanen2020scipy}.
\par Several classic and recent single-regime traffic flow models have been listed in Table \ref{Table:1}. The calibrated parameters of each model are presented in Table \ref{Table:3}. However, the jam density of MacNicholas's Model \cite{macnicholas2011simple}) has lost its physical meaning. Therefore, only a portion of the calibrated models are shown in Figures \ref{figure:5} to \ref{figure:16}.
\begin{table}[hthp]
  \centering
  \caption{Parameters of single-regime traffic flow model after calibration}
  \label{Table:3}
  \begin{tabular}{>{\centering\arraybackslash}p{0.35\linewidth} >{\centering\arraybackslash}p{0.65\linewidth}}
    \thicktoprule
    \textbf{Model} &  \textbf{Parameters\textsuperscript{*}}\\
    \midrule
    \cite{greenshields1935study} & $v_{f} = 52.12\ \rho_{j} = 76.68$ \\
    \cite{greenberg1959analysis} & $v_{critical} = 22.06\ \rho_{j} = 92.49$ \\
    \cite{newell1961nonlinear} & $v_{f} = 69.69\ \rho_{j} = 25.000\ \lambda = 1209.02$\\
        \cite{drake1965statistical} & $v_{f} = 80.50\ \rho_{critical} = 50.01$ \\
        \cite{underwood1961speed} & $v_{f} = 80.51\ \rho_{critical} = 92.49$ \\
        \cite{papageorgiou1989macroscopic} & $v_{f} = 79.49\ \rho_{j} = 24.83\ \alpha = 1.02$\\
        \cite{kerner1994structure} & $v_{f} = 60.17\ \rho_{critical} = 106.27$ \\
       \cite{del1995functional-part-1}   & $v_{f} = 69.69\ \rho_{j} = 108.41\ v_{j} = 11.15$ \\
       \cite{jayakrishnan1995dynamic}  & $v_{f} = 52.1198\ v_{min} = 35.0052\ \rho_{j} = 25.1779$ \\
       \cite{ardekani2008modified} & $v_{critical} = 40.41\ \rho_{j} = 56.84\ \rho_{min} = 0.01$ \\
       \cite{macnicholas2011simple} & \sout{$v_{f} = 70.17\ \rho_{j} = 2410.54\ n = 2.00\ m = 13730.07$} \\
       \cite{pipes1966car} & $v_{f} = 76.05\ \rho_{j} = 51.00\ n = 1.22$ \\
       \cite{wang2011logistic} & $v_{f} = 65.23\ v_{critical} = 6.02\ \rho_{critical} = 9.73\ \theta_{1} = 1.53\ \theta_{2} = 0.10$ \\
       \cite{cheng2021s} & $v_{f} = 68.70\ \rho_{critical} = 20.02\ m = 2.21$ \\
    \thickbottomrule
  \end{tabular}
  \footnotesize\textsuperscript{*} The speed of vehicles is measured in miles per hour (mph), while traffic density is quantified as vehicle per mile.
\end{table}
\subsection{Non-parametric stochastic modeling}
\par In the section, we will introduce the non-parametric method that we used to generate our stochastic model. Sparse Gaussian process regression (SGPR), as a powerful non-parametric regression tool, has been selected to archive stochastic modeling.
\par Before formally introducing the SGPR, let us quickly review the Gaussian process regression. Following \cite{williams2006gaussian}, the Gaussian process (GP) is employed to depict a distribution over functions, allowing GP to serve as a powerful analysis method from a function-space perspective. Formally:
\begin{definition}[Gaussian process]
A Gaussian process is a collection of random variables, any finite number of which have a Gaussian distribution.
\end{definition}
\par For any Gaussian process, it can be specified by its mean function $m(\mathbf{x})$ and covariance function $k(\mathbf{x},\mathbf{x'})$. This function is commonly called the kernel of the Gaussian process (\cite{jakel2007tutorial}), and a prevalent choice of a kernel function is the radial basis function kernel, which is defined as $k(\mathbf{x},\mathbf{x'}) = \sigma^{2}e^{-\frac{\|\mathbf{x} - \mathbf{x'}\|^{2}}{2\lambda^{2}}}$, $\sigma$ and $\lambda$ are hyper-parameters, which will decide the shape of the kernel function. The mean function and the covariance function of a real process $f(\mathbf{x})$ can be defined as:
\begin{equation}
    m(\mathbf{x}) = \mathbb{E}[f(\mathbf{x})] \label{eq:9}
\end{equation}
\begin{equation}
    k(\mathbf{x},\mathbf{x'}) = \mathbb{E}[(f(\mathbf{x}) - m(\mathbf{x}))(f(\mathbf{x'}) - m(\mathbf{x'}))] \label{eq:10}
\end{equation}
and the Gaussian process can be expressed as:
\begin{equation}
    f(\mathbf{x}) \sim \mathcal{GP}(m(\mathbf{x}),k(\mathbf{x},\mathbf{x'})) \label{eq:11}
\end{equation}
\par Typically, the mean function is assumed to be zero. Consequently, in subsequent sections of this study, for a pure Gaussian process regression (GPR) model, its mean function $m(\mathbf{\rho}) = 0$, for other empirical prior Gaussian process regression (EPGPR) models, because they are based some deterministic fundamental diagram model $f(\rho)$, their mean function $m(\mathbf{\rho}) = f(\rho)$.
\par Based on \cite{wang2023intuitive}, if points $x_{i}$ and $x_{j}$ are considered similar by the covariance function, their corresponding function outputs, $f(x_{i})$ and $f(x_{j})$, are expected to be similar too. The Gaussian process regression is illustrated as: given observed data $(\mathbf{x}_{i},f_{i}|i = 1,2,3,...,n)$ and a mean function $\mathbf{f}$ estimated from the observed data, to predict at new points $\mathbf{x}_{*}$ as $\mathbf{f}_{*}$. The joint distribution of $\mathbf{f}$ and $\mathbf{f}_{*}$ is actually a multivariate Gaussian/normal distribution, which can be expressed as:
\begin{equation}
\begin{bmatrix}
\mathbf{f} \\
\mathbf{f}_{*}
\end{bmatrix} \sim \mathcal{N} \left(\begin{bmatrix}
m(\mathbf{x})\\
m(\mathbf{x}_{*})
\end{bmatrix}, \begin{bmatrix}
\mathbf{K}_{nn} & \mathbf{K}_{n*} \\
\mathbf{K}_{*n} & \mathbf{K}_{**}
\end{bmatrix} \right)\label{eq:12}
\end{equation}
where $\mathbf{K}_{nn} = \mathbf{K}(\mathbf{x},\mathbf{x})$\footnote{If there are $n$ training points, $\mathbf{K}(\mathbf{x},\mathbf{x}) = \begin{bmatrix}
k(x_{1},x_{1}) & k(x_{1},x_{2}) & k(x_{1},x_{3}) & \cdots & k(x_{1},x_{n}) \\
k(x_{2},x_{1}) & k(x_{2},x_{2}) & k(x_{2},x_{3}) & \cdots & k(x_{2},x_{n}) \\
\cdots & \cdots & \cdots & \cdots & \cdots \\
k(x_{n},x_{1}) & k(x_{n},x_{2}) & k(x_{n},x_{3}) & \cdots & k(x_{n},x_{n})
\end{bmatrix}$}, $\mathbf{K}_{n*} = \mathbf{K}(\mathbf{x},\mathbf{x^{*}})$\footnote{If there are $n$ testing points, $\mathbf{K}(\mathbf{x},\mathbf{x^{*}}) = \begin{bmatrix}
k(x_{1},x_{1}^{*}) & k(x_{1},x_{2}^{*}) & k(x_{1},x_{3}^{*}) & \cdots & k(x_{1},x_{n}^{*}) \\
k(x_{2},x_{1}^{*}) & k(x_{2},x_{2}^{*}) & k(x_{2},x_{3}^{*}) & \cdots & k(x_{2},x_{n}^{*}) \\
\cdots & \cdots & \cdots & \cdots &\cdots \\
k(x_{n},x_{1}^{*}) & k(x_{n},x_{2}^{*}) & k(x_{n},x_{3}^{*}) & \cdots & k(x_{n},x_{n}^{*})
\end{bmatrix}$}, $\mathbf{K}_{**} = \mathbf{K}(\mathbf{x^{*}},\mathbf{x^{*}})$\footnote{If there are $n$ training and testing points, $\mathbf{K}(\mathbf{x^{*}},\mathbf{x^{*}}) = \begin{bmatrix}
k(x_{1}^{*},x_{1}^{*}) & k(x_{1}^{*},x_{2}^{*}) & k(x_{1}^{*},x_{3}^{*}) & \cdots & k(x_{1}^{*},x_{n}^{*}) \\
k(x_{2}^{*},x_{1}^{*}) & k(x_{2}^{*},x_{2}^{*}) & k(x_{2}^{*},x_{3}^{*}) & \cdots & k(x_{2}^{*},x_{n}^{*}) \\
\cdots & \cdots & \cdots & \cdots &\cdots \\
k(x_{n}^{*},x_{1}^{*}) & k(x_{n}^{*},x_{2}^{*}) & k(x_{n}^{*},x_{3}^{*}) & \cdots & k(x_{n}^{*},x_{n}^{*})
\end{bmatrix}$}, Equation \ref{eq:13} gives the joint probability distribution of $\mathbf{f},\mathbf{f}_{*}|\mathbf{x},\mathbf{x}_{*}$ over $\mathbf{f}$ and $\mathbf{f}_{*}$. However, in a regression problem, what we really need is the conditional distribution $\mathbf{f}_{*}|\mathbf{f},\mathbf{x},\mathbf{x}_{*}$ over $\mathbf{f}_{*}$, which can be written as\footnote{Details of how to obtain $\mathbf{f}_{*}|\mathbf{f},\mathbf{x},\mathbf{x}_{*}$ can be found in \cite{bishop2006pattern} and \cite{williams2006gaussian}}:
\begin{equation}
    \mathbf{f}_{*}|\mathbf{f},\mathbf{x},\mathbf{x}_{*} \sim \mathcal{N}(\mathbf{K}_{*n}\mathbf{K}_{nn}^{-1}\mathbf{f},\mathbf{K}_{**} - \mathbf{K}_{*n}\mathbf{K}_{nn}^{-1}\mathbf{K}_{*n}) \label{eq:13}
\end{equation}
\par However, in a more realistic modeling scenarios, $\mathbf{f}$ itself is intractable. The tractable one is the noisy version expressed as $\mathbf{y} = f(\mathbf{x}) + \epsilon$. Assuming $\epsilon$ is an additive independent and identically distributed Gaussian noise, $\epsilon \sim \mathcal{N}(0, \sigma_{\epsilon}^{2})$, then the prior on the noisy observations becomes:
 
and Eq \ref{eq:12} can be rewritten as:
\begin{equation}
\begin{bmatrix}
\mathbf{y} \\
\mathbf{f}_{*}
\end{bmatrix} \sim \mathcal{N} \left(\begin{bmatrix}
m(\mathbf{x})\\
m(\mathbf{x}_{*})
\end{bmatrix}, \begin{bmatrix}
\mathbf{K}_{nn} + \sigma_{\epsilon}^{2}\mathbf{I} & \mathbf{K}_{n*} \\
\mathbf{K}_{*n} & \mathbf{K}_{**}
\end{bmatrix} \right)\label{eq:14}
\end{equation}
\par By deriving the conditional distribution, the predictive equations for Gaussian process regression will be expressed as:
\begin{equation}
    \mathbf{f}_{*}|\mathbf{y},\mathbf{x},\mathbf{x}_{*} \sim \mathcal{N}(\overline{\mathbf{f}}_{*},cov(\mathbf{f}_{*})) \label{eq:15}
\end{equation}
where $\overline{\mathbf{f}}_{*}$ and $cov(\mathbf{f}_{*})$ can be further expressed as:
\begin{equation}
    \overline{\mathbf{f}}_{*} = \mathbf{K}_{*n}[\mathbf{K}_{nn} + \sigma_{\epsilon}^{2}\mathbf{I}]^{-1}\mathbf{y} \label{eq:16}
\end{equation}
\begin{equation}
    cov(\mathbf{\mathbf{f}}_{*}) = \mathbf{K}_{**} - \mathbf{K}_{*n}[\mathbf{K}_{nn} + \sigma_{\epsilon}^{2}\mathbf{I}]^{-1}\mathbf{K}_{n*} \label{eq:17}
\end{equation}
\par Similarly, the prediction of the output $\mathbf{y}_{*},\mathbf{y}_{*} = \mathbf{f}_{*} + \epsilon$ is described by (\cite{titsias2009variational}):
\begin{equation}
    \mathbf{y}_{*}|\mathbf{y},\mathbf{x},\mathbf{x}_{*} \sim \mathcal{N}(\mathbf{y}_{*}|\mathbf{K}_{*n}[\mathbf{K}_{nn} + \sigma_{\epsilon}^{2}\mathbf{I}]^{-1}\mathbf{y},\mathbf{K}_{**} - \mathbf{K}_{*n}[\mathbf{K}_{nn} + \sigma_{\epsilon}^{2}\mathbf{I}]^{-1}\mathbf{K}_{n*})) \label{eq:18}
\end{equation}
\par The posterior GP also depends on the values of the hyper-parameters $\Theta = (\sigma, \lambda)$ (Shown in kernel function), which are determined by maximizing the log marginal likelihood as follows:
\begin{equation}
    \Theta_{*} = \argmax_{\Theta}\log{(\mathbf{y}|\mathbf{x},\Theta)} = \argmax_{\Theta}\log{[\mathcal{N}(\mathbf{y}|m(\mathbf{x}),\mathbf{K}_{nn} + \sigma^{2}\mathbf{I}))]}\label{eq:19}
\end{equation}
\par However, although GP is a compelling and popular non-parametric method and has been shown to perform well on various tasks (\cite{walder2008sparse}), the application of GP models in large datasets is still tricky (\cite{titsias2009variational}) due to the time complexity as $\mathcal{O}(n^{3})$ and the storage as $\mathcal{O}(n^{2})$, where $n$ is the size of samples. Many approximate or sparse methods (\cite{williams2000using}, \cite{smola2000sparse}, \cite{csato2002sparse}, \cite{lawrence2002fast}, \cite{seeger2003fast}, \cite{schwaighofer2002transductive}, \cite{snelson2005sparse}, \cite{quinonero2005unifying} and \cite{titsias2009variational}) have been proposed to solve this dilemma. The core idea of those studies is to employ a small set with $m$ inducing variables to form a reduction of the time complexity from $\mathcal{O}(n^{3}) \dashrightarrow \mathcal{O}(m^{2}n)$.
\par In this paper, we adopted the variational framework proposed in \cite{titsias2009variational}. Suppose $m$ inducing variables are selected to conduct the sparse approximation, which can be expressed as $\mathbf{f}_{m} = \{f_{1}^{m},f_{2}^{m},...,f_{m}^{m}\}$ that are latent function values evaluated at input points $\mathbf{x}_{m}, \mathbf{x}_{m} = (x_{1}^{m}, x_{2}^{m}, ..., x_{m}^{m})$. For the posterior GP, it can be written by:
\begin{equation}
    p(\mathbf{f}_{*}|\mathbf{y}) = \int p(\mathbf{f}_{*}|\mathbf{f})p(\mathbf{f}|\mathbf{y})d\mathbf{f} =  \int p(\mathbf{f}_{*}|\mathbf{f}_{m},\mathbf{f})p(\mathbf{f}|\mathbf{f}_{m},\mathbf{y})p(\mathbf{f}_{m}|\mathbf{y})d\mathbf{f}d\mathbf{f}_{m} \label{eq:20}
\end{equation}
where $\mathbf{f}_{*}$ is the predict function values at new inputs $\mathbf{x}_{*}$. Referring to \cite{tiao2020svgp} and \cite{krasser2020gaussian}, then $p(\mathbf{f}_{*}|\mathbf{y})$ can be further written as an approximate posterior (For detailed derivatives, please refer to Lemma \ref{lemma:1}):
\begin{equation}
        q(\mathbf{f}_{*}) = \int p(\mathbf{f}_{*}|\mathbf{f}_{m})\phi(\mathbf{f}_{m})d\mathbf{f}_{m}\label{eq:21}
\end{equation}
where $\phi(\mathbf{f}_{m})$ is an approximation to $p(\mathbf{f}_{m}|\mathbf{y})$, which is intractable and can be further written as:
\begin{equation}
    \phi(\mathbf{f}_{m}) = \mathcal{N}(\mathbf{f}_{m}|\boldsymbol{\mu}_{m},\mathbf{A}_{m}) \label{eq:22}
\end{equation}
The mean and covariance matrix of the Gaussian approximate posterior $q(\mathbf{f}_{*})$ can be further defined as:
\begin{equation}
    q(\mathbf{f}_{*}) = \mathcal{N}(\mathbf{f}_{*}|\boldsymbol{\mu}_{*}^{q},\mathbf{\Sigma}_{*}^{q}) \label{eq:23}
\end{equation}
\begin{equation}
    \boldsymbol{\mu}_{*}^{q} =  \mathbf{K}_{*m}\mathbf{K}_{mm}^{-1}\boldsymbol{\mu}_{m}\label{eq:24}
\end{equation}
\begin{equation}
    \mathbf{\Sigma}_{*}^{q} = \mathbf{K}_{**} - \mathbf{K}_{*m}\mathbf{K}_{mm}^{-1}\mathbf{K}_{m*} + \mathbf{K}_{*m}\mathbf{K}_{mm}^{-1}\mathbf{A}_{m}\mathbf{K}_{mm}^{-1}\mathbf{K}_{m*} \label{eq:25}
\end{equation}
Then, by minimizing the  Kullback-Leibler (KL) divergence (\cite{kullback1951information}) between the approximate $q(\mathbf{f})$ and the exact posterior $p(\mathbf{f}|\mathbf{y})$, which is relevant to minimize the KL divergence between the augmented variational posterior $q(\mathbf{f},\mathbf{f}_{m}|\mathbf{y})$ and the augmented true posterior $p(\mathbf{f},\mathbf{f}_{m}|\mathbf{y})$. Based on \cite{titsias2009variational} and \cite{titsias2009variational-1}, the minimization is equivalently expressed as the maximization of the following variational lower bound of the true log marginal likelihood (Please check the detailed derivative in Lemma \ref{lemma:2}):
\begin{equation}
F_{V}(\mathbf{x}_{m},\phi) = \int p(\mathbf{f}|\mathbf{f}_{m})\phi(\mathbf{f}_{m})\log\frac{p(\mathbf{f}|\mathbf{y})p(\mathbf{f}_{m})}{\phi(\mathbf{f}_{m})}d\mathbf{f}d\mathbf{f}_{m} \label{eq:26}
\end{equation}
And we can finally have the optimal values of $\boldsymbol{\mu}_{m}$ and $\mathbf{A}_{m}$ as ((Please check the detailed derivative in Lemma \ref{lemma:3} and \ref{lemma:4}):
\begin{equation}
  \boldsymbol{\mu}_{m} = \frac{1}{\sigma^{2}}\mathbf{K}_{mm}\mathbf{\Sigma}\mathbf{K}_{mn}\mathbf{y}  \label{eq:27}
\end{equation}
\begin{equation}
  \mathbf{A}_{m} = \mathbf{K}_{mm}\mathbf{\Sigma}\mathbf{K}_{mm}  \label{eq:28}
\end{equation}
where $\mathbf{\Sigma} = (\mathbf{K}_{mm} + \sigma^{-2}\mathbf{K}_{mn}\mathbf{K}_{nm})^{-1}$, and after substituted $\boldsymbol{\mu}_{m}$ and $\mathbf{A}_{m}$ into equations \ref{eq:24} and \ref{eq:25}, we can compute the approximate posterior $ q(\mathbf{f}_{*})$ at new inputs $\mathbf{x}_{*}$.
\par As mentioned in the preceding section, sparse Gaussian Process Regression (SGPR) requires $m$ inducing variables to approximate the exact GP model closely. Thus, in the paper, four sampling methods, including simple random sampling, systematic sampling, cluster sampling, and weighted random sampling, are used to choose inducing variables from the entire dataset to conduct comparison studies, and details of sampling methods can be found in the appendix \ref{8}.
\subsection{Empirical prior integration}
\par In this section, the methodology for integrating an empirical prior into a non-parametric model is explored. As previously delineated in Section \ref{3}, a Gaussian Process (GP) is defined by its mean function \(m(\mathbf{x})\) and covariance function \(k(\mathbf{x}, \mathbf{x'})\). Given a regression model \(\mathbf{y} = f(\mathbf{x}) + \epsilon\), the mean function \(m(\mathbf{x})\) encapsulates prior knowledge regarding the unknown function \(f(\mathbf{x})\). In most cases, the prior is assumed to be zero, and integrating an empirical prior into a non-parametric model essentially replaces the prior with empirical functions about the speed-density relationship. After the change in the assumption on the prior, the variation distribution $\phi(\mathbf{f}_{m})$ shown in equation \ref{eq:22} becomes (Please check the detailed derivative in Lemma \ref{lemma:5}):
\begin{equation}
        \phi(\mathbf{f}_{m}) = \mathcal{N}(\mathbf{f}_{m}|\boldsymbol{\mu}_{*},\boldsymbol{\Sigma}_{*}) \label{eq:29}
\end{equation}
\begin{equation}
        \boldsymbol{\Sigma}_{*} =  \left(\mathbf{K}_{mm}^{-1}+\frac{1}{\sigma^{2}}\mathbf{K}_{mm}^{-1}\mathbf{K}_{mn}\mathbf{K}_{nm}\mathbf{K}_{mm}\right)^{-1}  \label{eq:30}
\end{equation}
\begin{equation}
        \boldsymbol{\mu}_{*} = m(\mathbf{x}_m) + \mathbf{\Sigma}_{*}\left(\frac{1}{\sigma^{2}}\mathbf{K}_{mm}^{-1}\mathbf{K}_{nm}(\mathbf{y} - m(\mathbf{x}))\right) \label{eq:31}
\end{equation}
\par The $q(\mathbf{f}_{*})$ now becomes(Please check the detailed derivative in Lemma \ref{lemma:6}):
\begin{equation}
    q(\mathbf{f}_{*}) = \mathcal{N}(m(\mathbf{x}_{*}) + \mathbf{K}_{*m}\mathbf{K}_{mm}^{-1}(\boldsymbol{\mu}_{*}-m(\mathbf{x}_{m}),\mathbf{K}_{**}-\mathbf{K}_{*m}\mathbf{K}_{mm}^{-1}\mathbf{K}_{m*} + \mathbf{K}_{*m}\mathbf{K}_{mm}^{-1}\boldsymbol{\mu}_{*}\mathbf{K}_{mm}^{-1}\mathbf{K}_{m*}  \label{eq:32}
\end{equation}
\par Predominantly, deterministic models are parametric, devoid of partial derivatives, and describe the equilibrium speed-density relationship. Consequently, these models can be seamlessly employed as the mean function within SGPR. Namely, the mean function \(m(\mathbf{v}) = 0\) will be replaced by \(m(\mathbf{v}) = f(\rho)\), $f(\rho)$ is a deterministic fundamental diagram model. For example, if Greenshield's model (\cite{greenshields1935study}), Greenberg's model (\cite{greenberg1959analysis}), and Newell's model (\cite{newell1961nonlinear}) will be used as the prior, then the corresponding mean function of the corresponding SGPR will be set as:
\begin{equation}
    m_{1}(v) = v_{f}\left(1 - \frac{\rho}{\rho_{j}}\right) \label{eq:33}
\end{equation}
\begin{equation}
    m_{2}(v) = v_{\text{critical}}\log\left(\frac{\rho_{j}}{\rho}\right) \label{eq:34}
\end{equation}
\begin{equation}
    m_{3}(v) = v_{f}\left\{1 -\exp\left[-\frac{\lambda}{v_{f}}(\frac{1}{\rho} - \frac{1}{\rho_{j}})\right]\right\} \label{eq:35}
\end{equation}
\par The diagram in figure \ref{figure: flow chart} summarizes the process of integrating an empirical prior into a non-parametric model. Modules highlighted with green backgrounds utilize the entire dataset, whereas those with blue backgrounds only employ a subset of the dataset. Owing to the computationally intensive nature of Gaussian processes, a subset of the dataset is used for sparse modeling to manage computational demands. In contrast, the calibration of deterministic models exploits the full dataset. By employing calibrated empirical models as the prior, we aim to mitigate the accuracy loss inherent in the variational approximation.
\begin{figure}
    \centering   
    \includegraphics[width=1.0\textwidth]{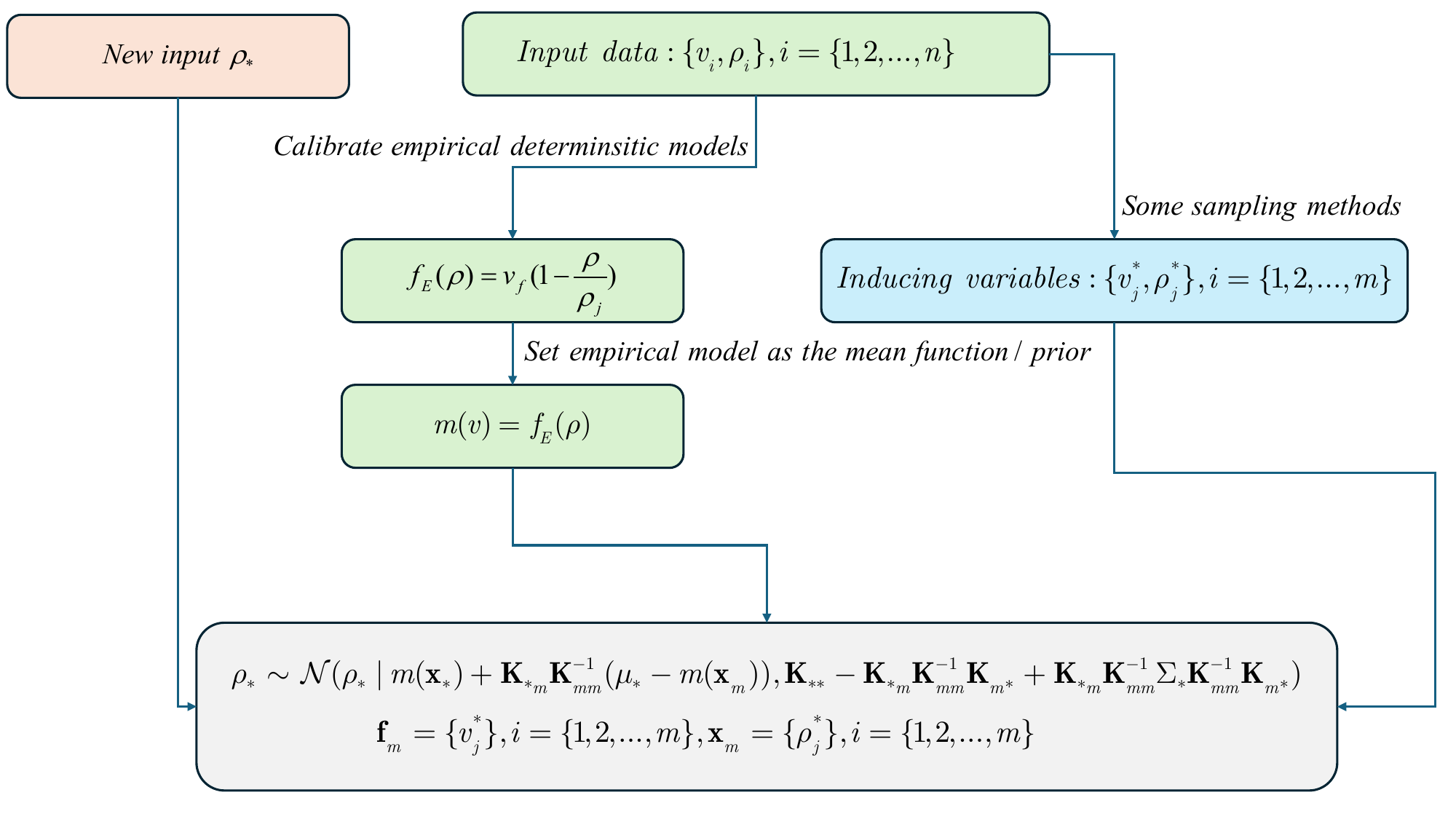}
    \caption{Empirical prior integration}
    \label{figure: flow chart}
\end{figure}
\par Furthermore, kernel functions play a crucial role in SGPR. After evaluating various kernel functions such as the Squared Exponential Kernel, Rational Quadratic Kernel, Mat$\acute{e}$rn kernels, and the Exponential Kernel, it was determined that the Exponential Kernel demonstrated the highest accuracy across all sampling methodologies. Therefore, this kernel function has been selected for implementation, as shown below:
\begin{equation}
    k(\mathbf{x}, \mathbf{x'}) =  \sigma^{2}e^{-\frac{\|\mathbf{x} - \mathbf{x'}\|}{2}} \label{eq:36}
\end{equation}
\begin{figure*}[tp]
    \centering
    \begin{subfigure}{0.32\textwidth}
        \includegraphics[width=\linewidth]{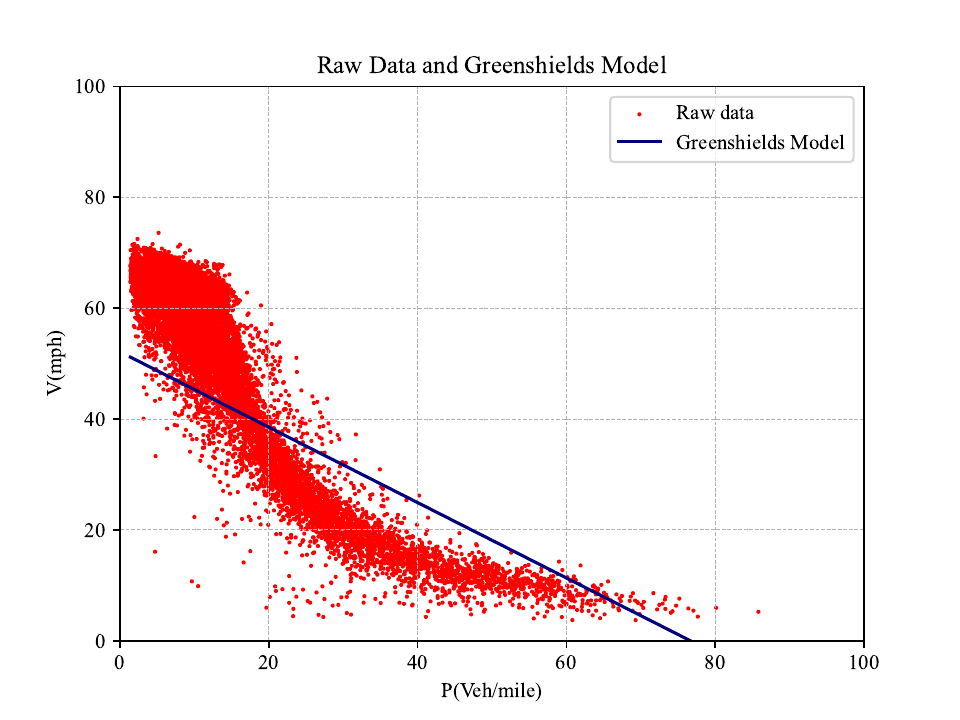}
        \caption{Greenshields model}
        \label{figure:5}
    \end{subfigure}
    \begin{subfigure}{0.32\textwidth}
        \includegraphics[width=\linewidth]{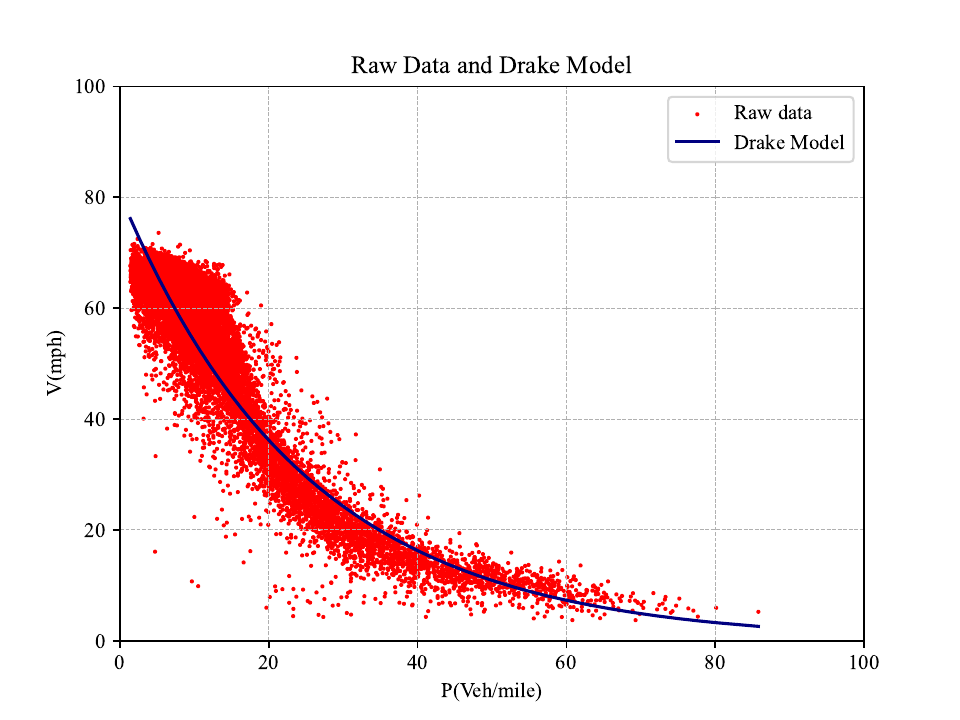}
        \caption{Drake model}
        \label{figure:6}
    \end{subfigure}
    \begin{subfigure}{0.32\textwidth}
        \includegraphics[width=\linewidth]{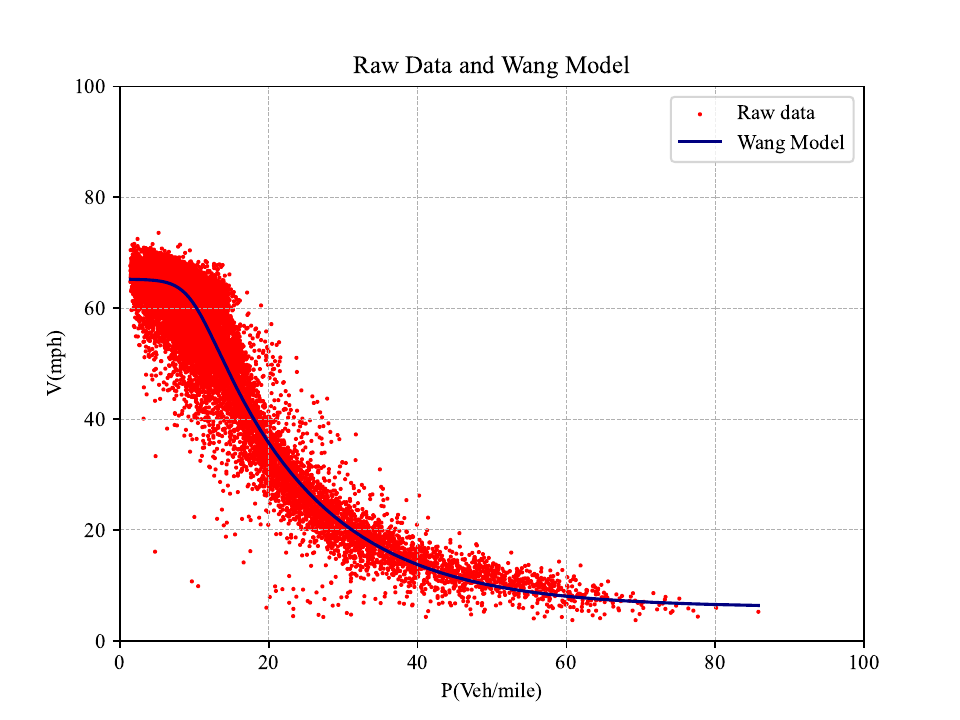}
        \caption{Wang model}
        \label{figure:7}
    \end{subfigure}
    
    \begin{subfigure}{0.32\textwidth}
        \includegraphics[width=\linewidth]{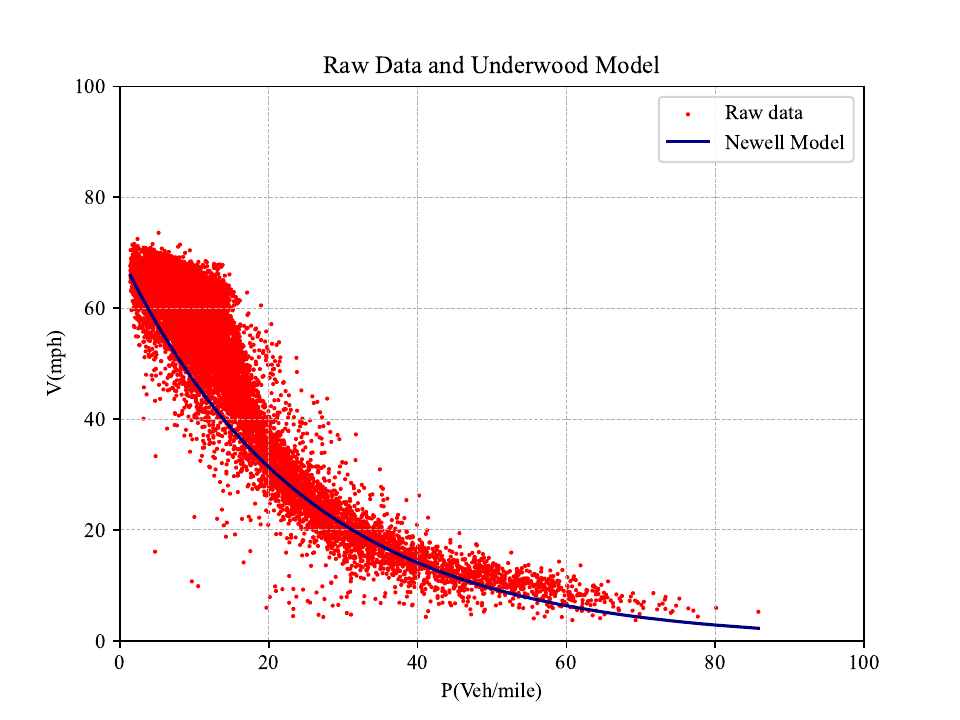}
        \caption{Newell model}
         \label{figure:8}
    \end{subfigure}
    \begin{subfigure}{0.32\textwidth}
        \includegraphics[width=\linewidth]{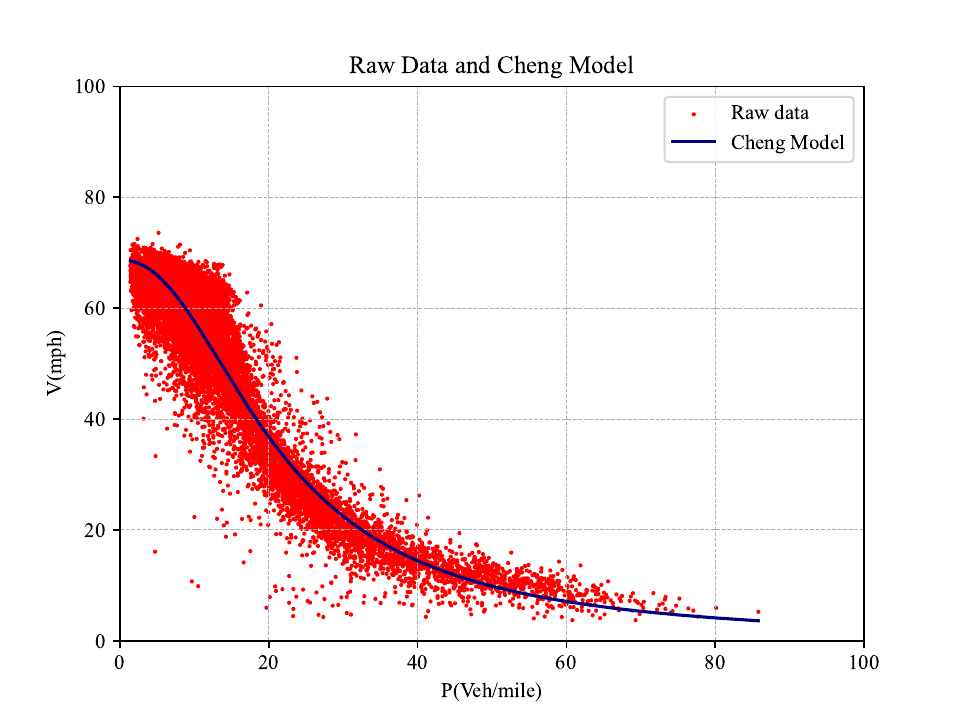}
        \caption{Cheng model}
         \label{figure:9}
    \end{subfigure}
    \begin{subfigure}{0.32\textwidth}
        \includegraphics[width=\linewidth]{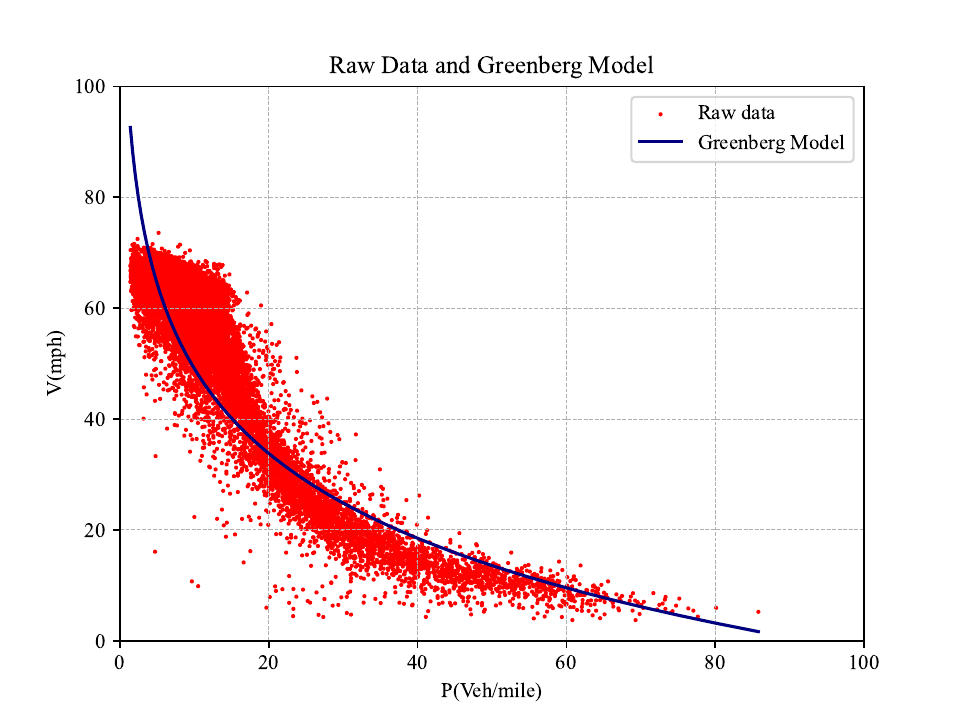}
        \caption{Greenberg model}
        \label{figure:10}
    \end{subfigure}

    \begin{subfigure}{0.32\textwidth}
        \includegraphics[width=\linewidth]{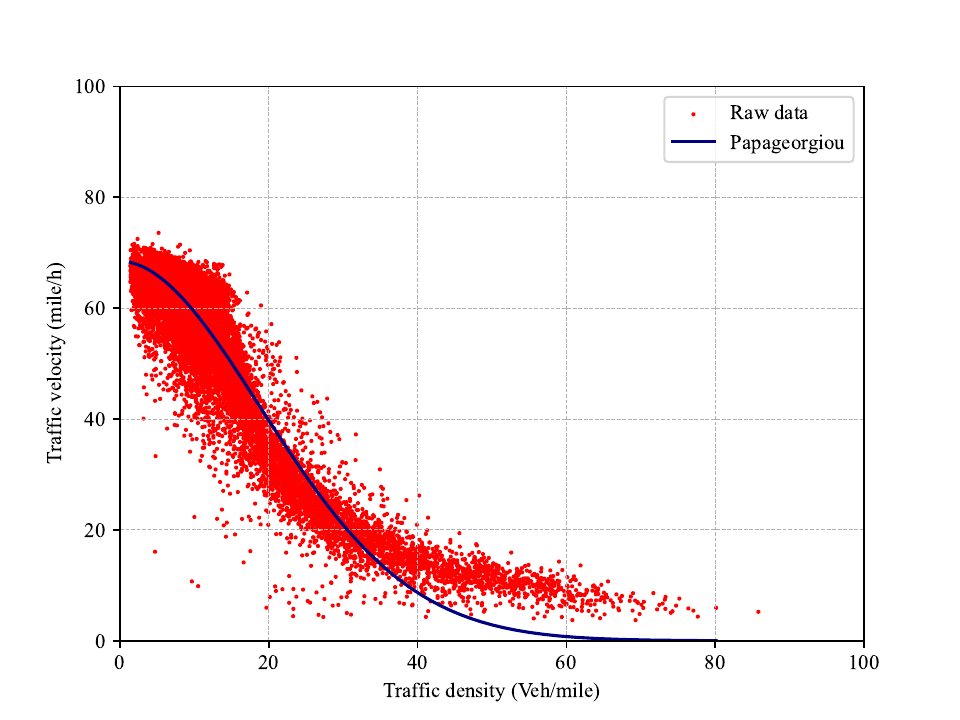}
        \caption{Papageorgiou model}
        \label{figure:11}
    \end{subfigure}
    \begin{subfigure}{0.32\textwidth}
        \includegraphics[width=\linewidth]{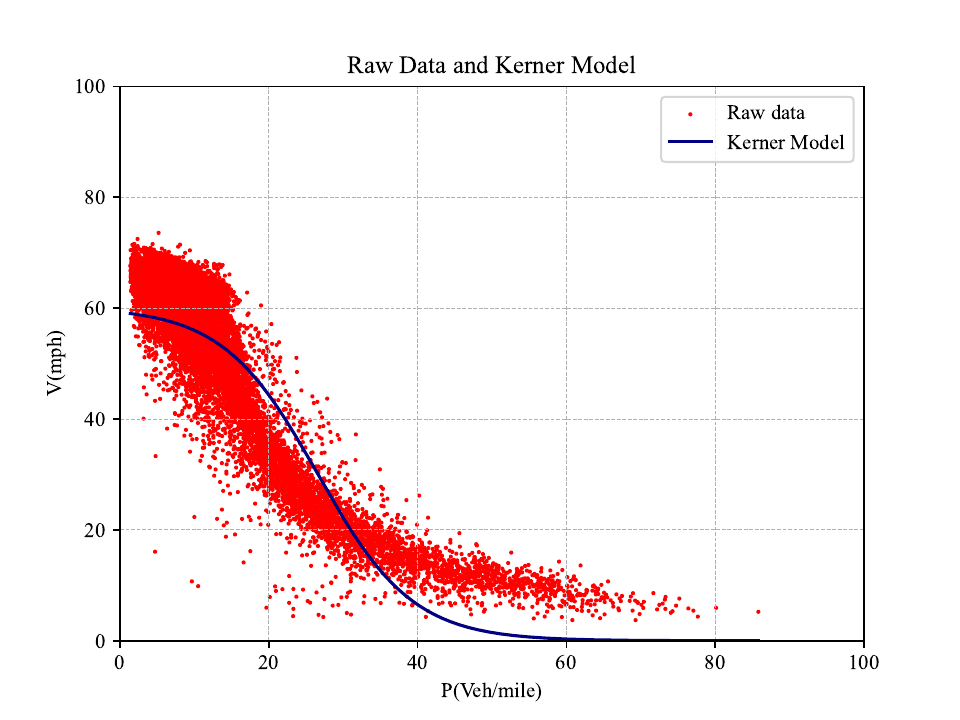}
        \caption{Kerner model}
        \label{figure:12}
    \end{subfigure}
    \begin{subfigure}{0.32\textwidth}
        \includegraphics[width=\linewidth]{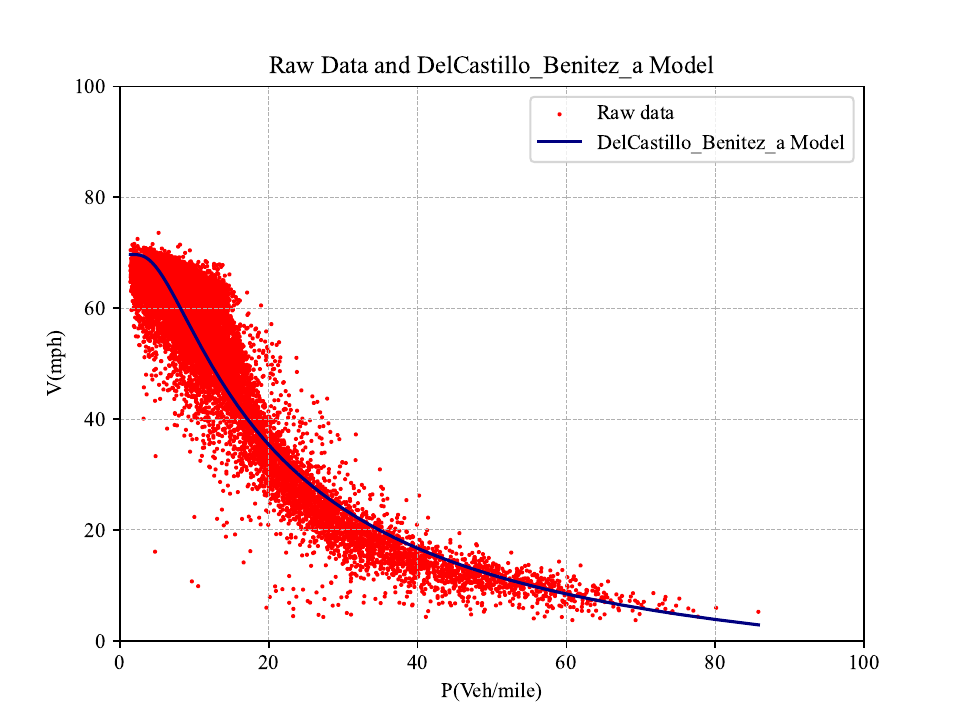}
        \caption{Del Castillo and Benitez model}
        \label{figure:13}
    \end{subfigure}
    
    \begin{subfigure}{0.32\textwidth}
        \includegraphics[width=\linewidth]{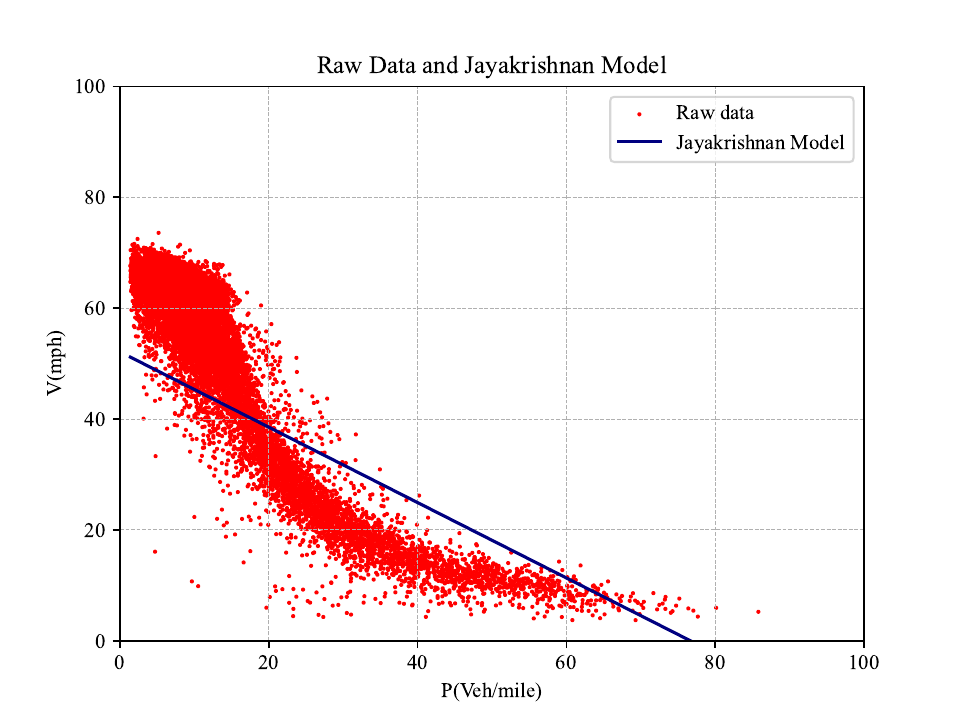}
        \caption{Jayakrishnan model}
        \label{figure:14}
    \end{subfigure}
    \begin{subfigure}{0.32\textwidth}
        \includegraphics[width=\linewidth]{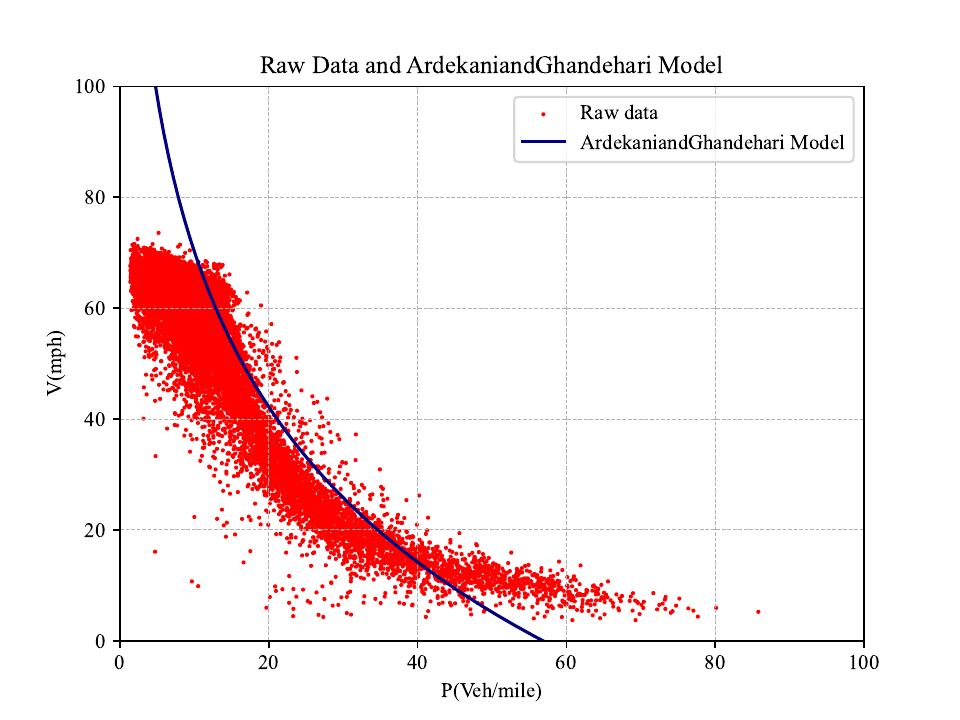}
        \caption{Ardekani-Ghandehar model}
        \label{figure:15}
    \end{subfigure}
    \begin{subfigure}{0.32\textwidth}
        \includegraphics[width=\linewidth]{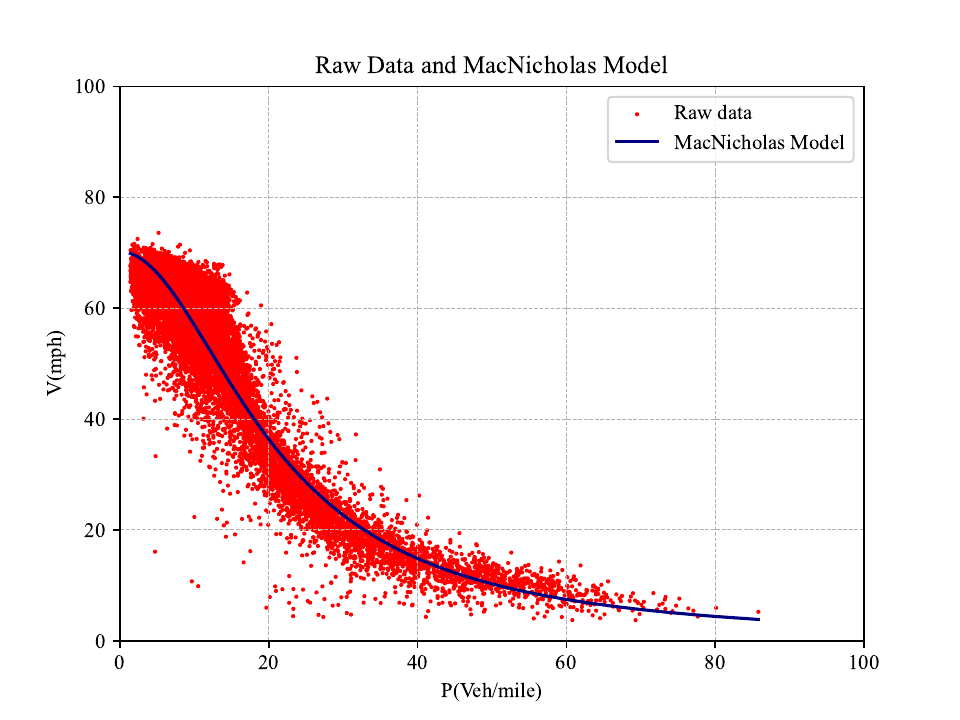}
        \caption{Mac Nicholas model}
        \label{figure:16}
    \end{subfigure}
    \caption{Single-regime traffic flow models after calibration}
    \label{figure:17}
\end{figure*}
\section{Case study} \label{4}
\subsection{Data description}
\par In this case study, data collected from loop detectors across 76 stations on the Georgia State Route 400 serve as the empirical dataset (Denotes as GA400). This data has been extensively utilized within the research community for developing fundamental diagram models, as evident by studies such as \cite{wang2011logistic}, \cite{wang2013stochastic},\cite{qu2015fundamental}, and \cite{qu2017stochastic}. The GA400 dataset comprises 44,787 pairs of density-speed observations, indicating a dataset size of \( n = 44,787 \). The choice of \( m = 288 \), where \( 288 \ll 44,787 \), is justified by the data collection frequency: with a pair recorded every 5 minutes, a single day yields \( \frac{24 \times 60}{5} = 288 \) data pairs. This parameterization comprehensively represents daily traffic patterns within a manageable computational framework. 
\begin{figure}[htbp]
    \centering
    \begin{subfigure}{.28\textwidth}
        \centering
        \includegraphics[width=\linewidth]{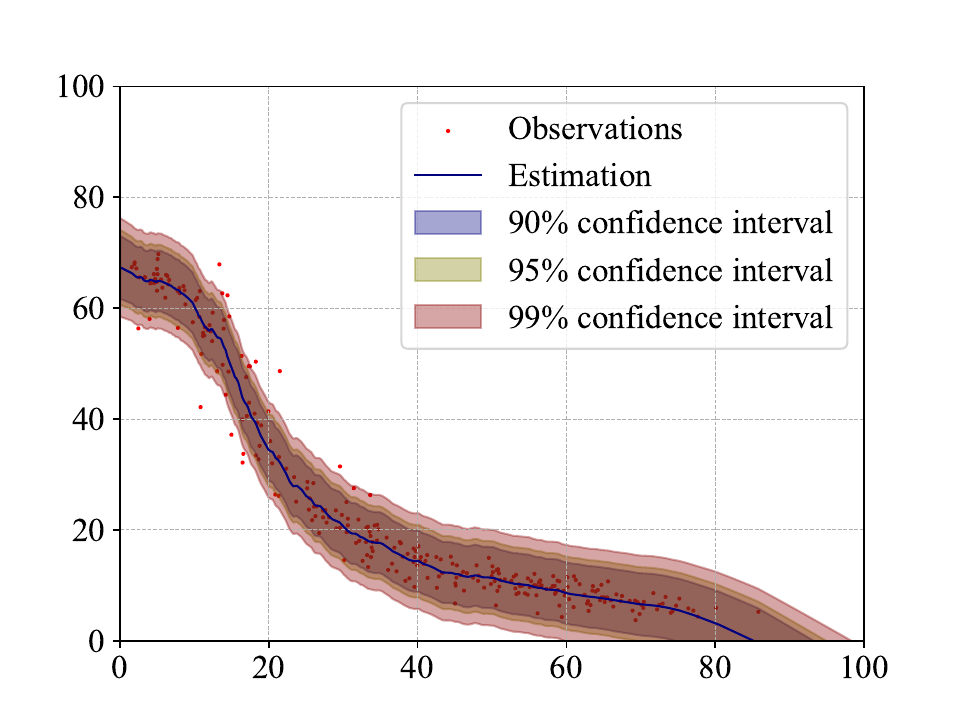}
        \caption{Greensheilds prior}
        \label{figure:18}
    \end{subfigure}
    \begin{subfigure}{.28\textwidth}
        \centering
        \includegraphics[width=1.0\textwidth]{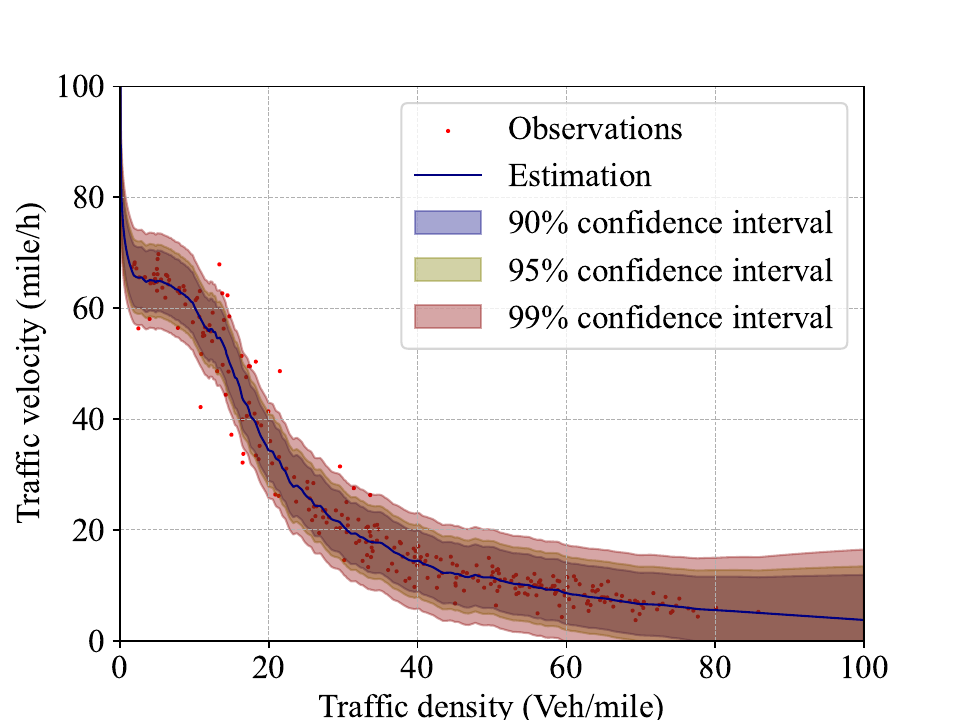}
        \caption{Greenberg prior}
        \label{figure:19}
    \end{subfigure}
    \begin{subfigure}{.28\textwidth}
        \centering
        \includegraphics[width=1.0\textwidth]{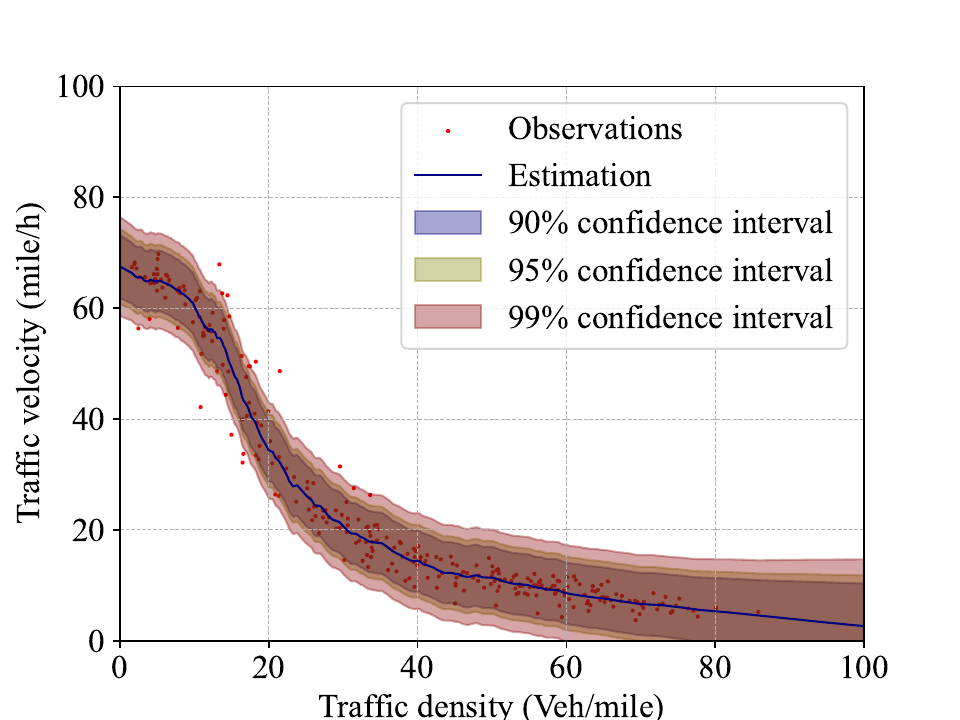}
         \caption{Underwood prior}
        \label{figure:20}
        \label{fig:newell}
    \end{subfigure}

    \begin{subfigure}{.28\textwidth}
        \centering
        \includegraphics[width=1.0\textwidth]{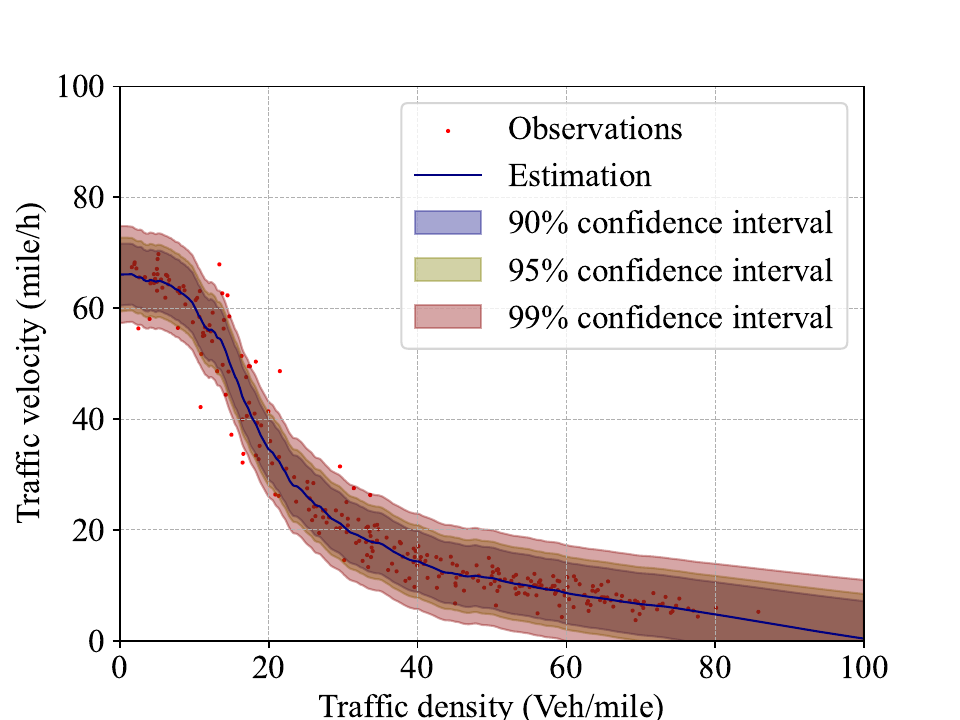}
        \caption{Newell prior}
        \label{figure:21}
    \end{subfigure}
    \begin{subfigure}{.28\textwidth}
        \centering
        \includegraphics[width=1.0\textwidth]{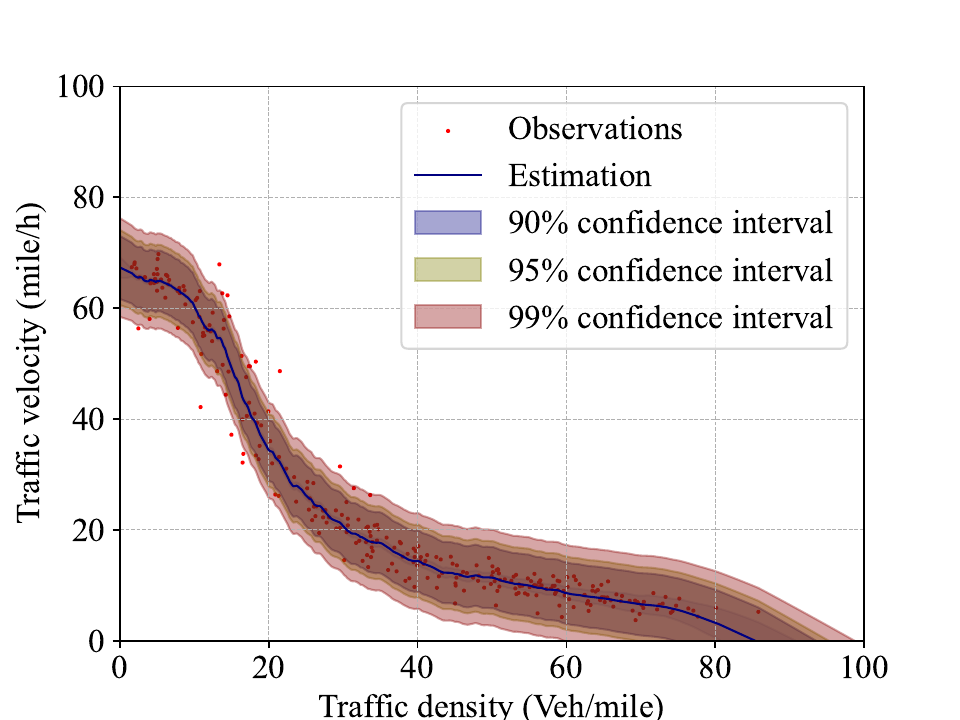}
        \caption{Jayakrishnan prior}
        \label{figure:22}
    \end{subfigure}
    \begin{subfigure}{.28\textwidth}
        \centering
        \includegraphics[width=1.0\textwidth]{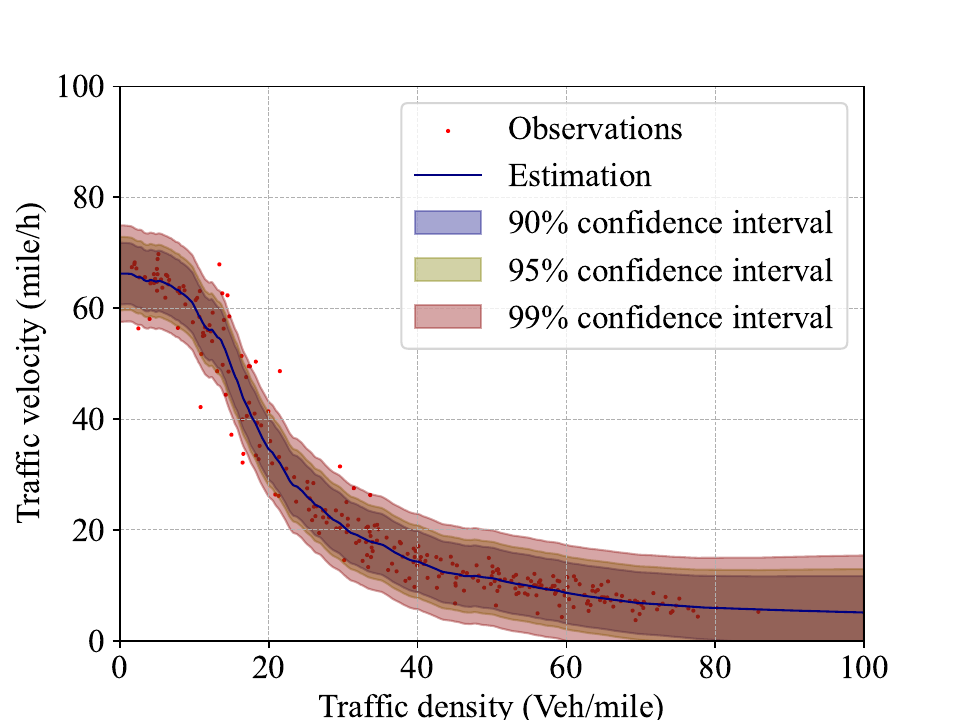}
        \caption{Drew prior}
        \label{figure:23}
    \end{subfigure}

    \begin{subfigure}{.28\textwidth}
        \centering
        \includegraphics[width=1.0\textwidth]{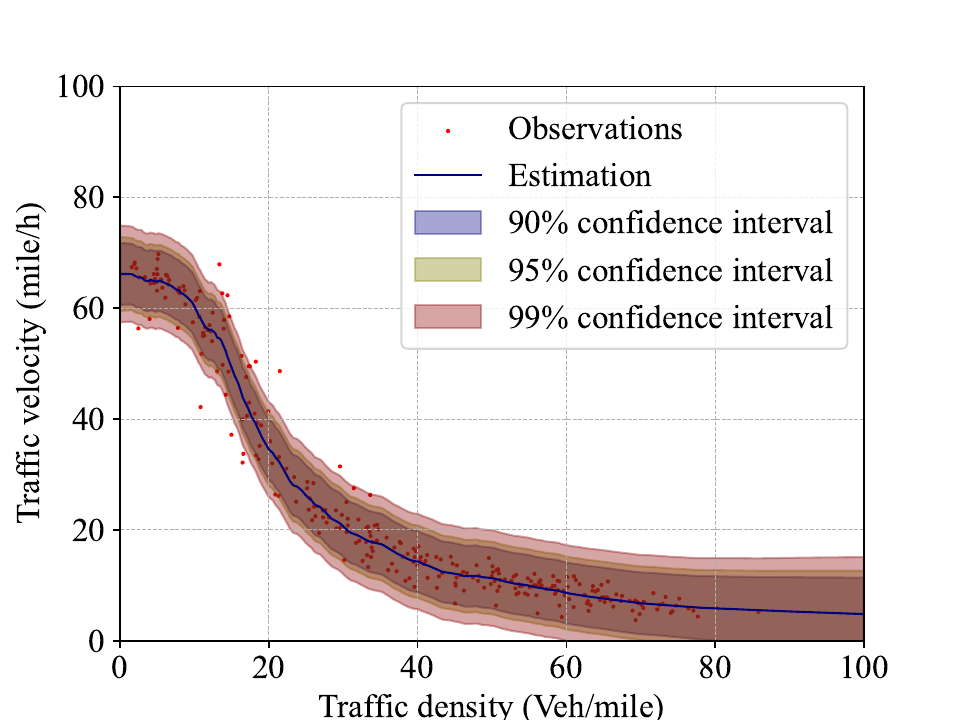}
        \caption{Papageorgiou prior}
        \label{figure:24}
    \end{subfigure}
    \begin{subfigure}{.28\textwidth}
        \centering
        \includegraphics[width=1.0\textwidth]{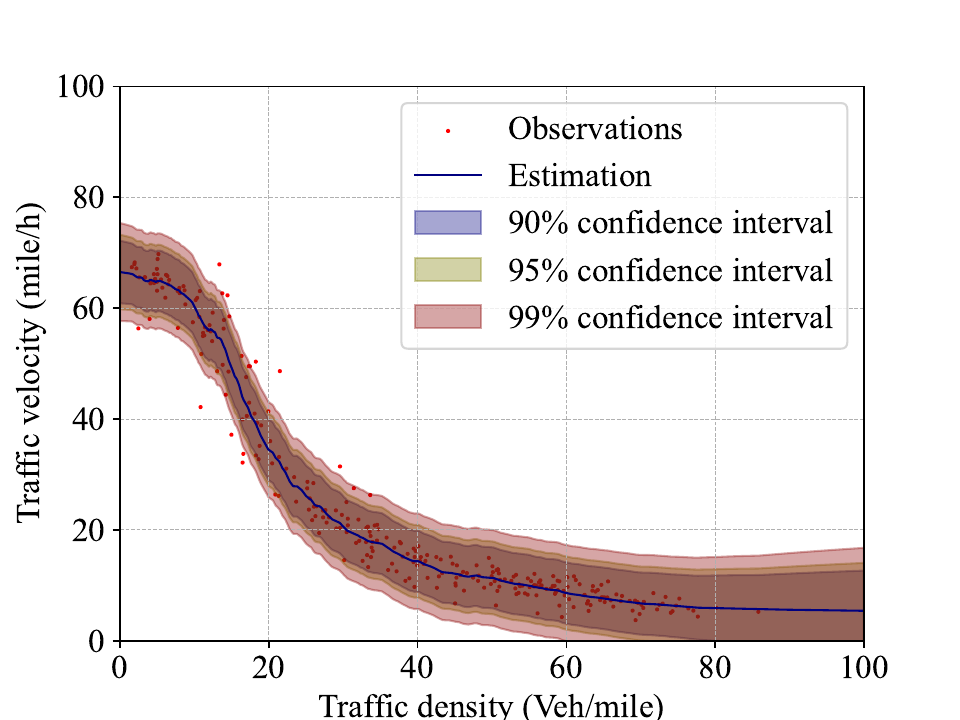}
        \caption{Kerner-Konhauser prior}
        \label{figure:25}
    \end{subfigure}
    \begin{subfigure}{.28\textwidth}
        \centering
        \includegraphics[width=1.0\textwidth]{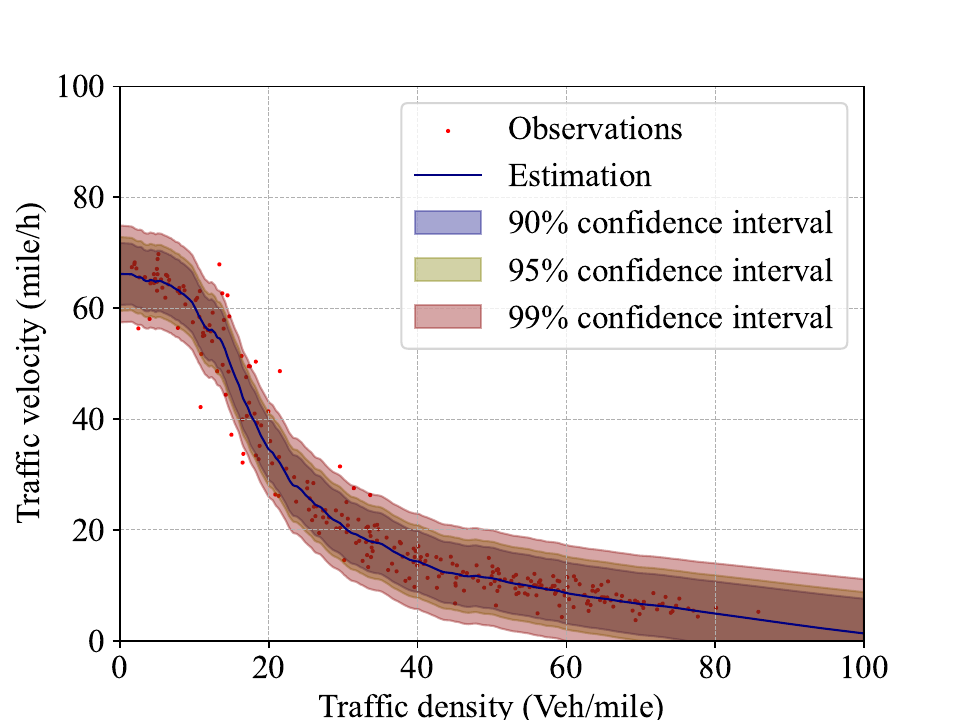}
    \caption{Del-Castillo-Benitez prior}
    \label{figure:26}
    \end{subfigure}
    \begin{subfigure}{.28\textwidth}
        \centering
        \includegraphics[width=1.0\textwidth]{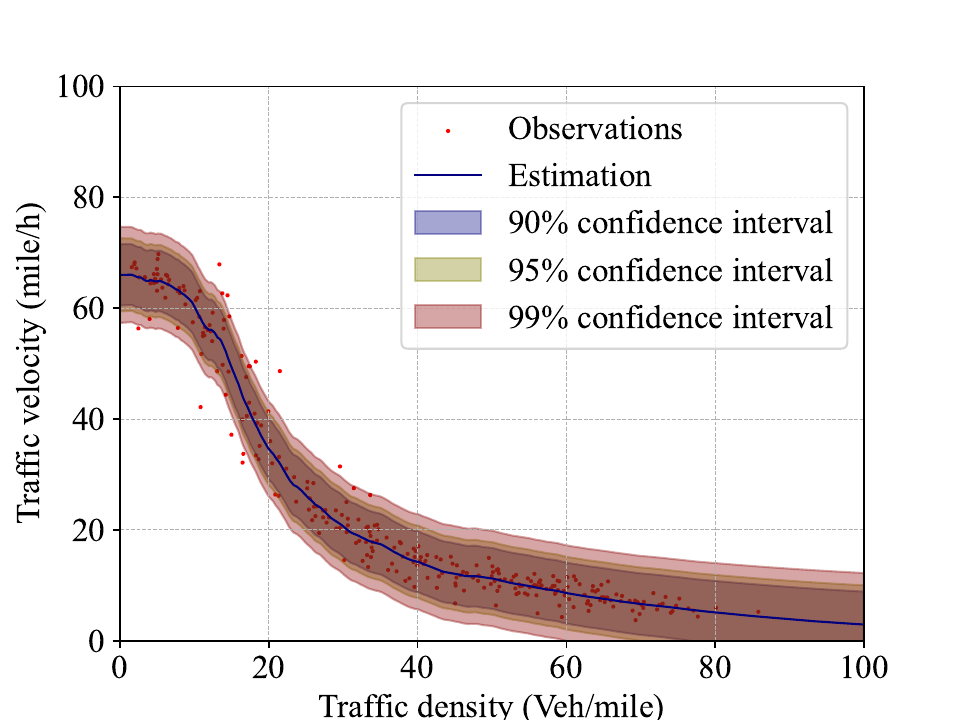}
        \caption{Cheng prior}
        \label{figure:27}
    \end{subfigure}
    \begin{subfigure}{.28\textwidth}
        \centering
        \includegraphics[width=1.0\textwidth]{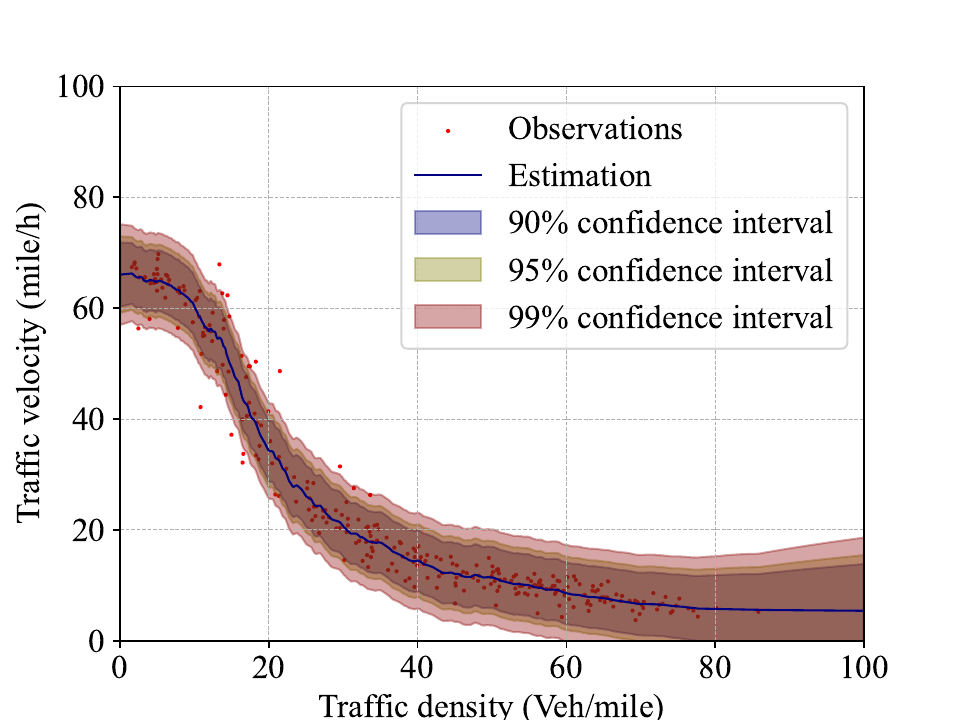}
        \caption{Mac-Nicholas prior}
        \label{figure:28}
    \end{subfigure}
    \begin{subfigure}{.28\textwidth}
        \centering
        \includegraphics[width=1.0\textwidth]{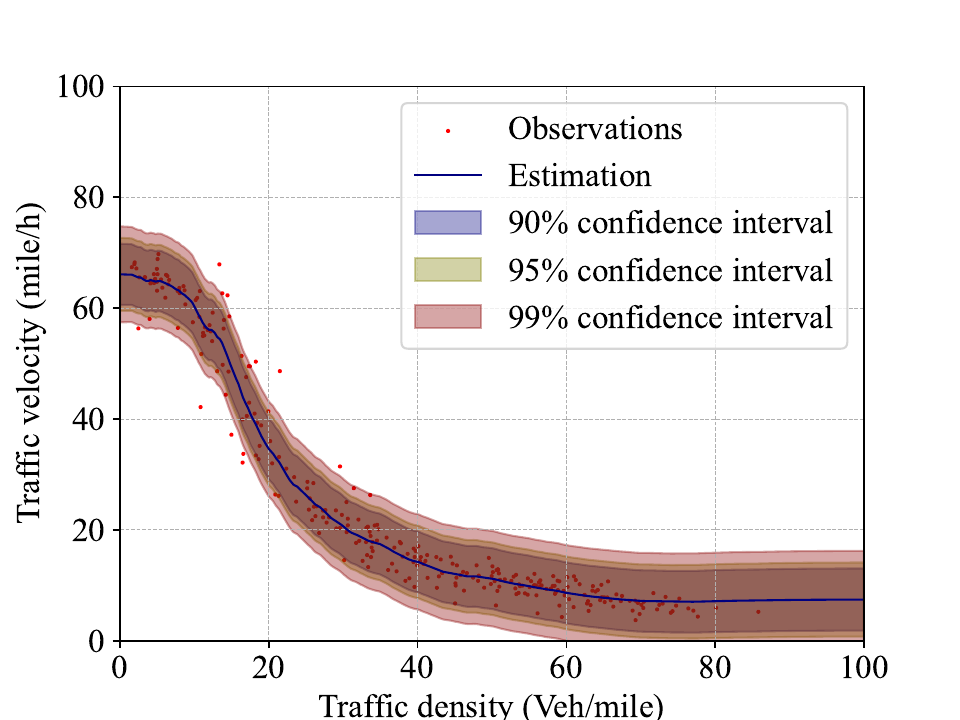}
    \caption{Wang model prior}
    \label{figure:29}
    \end{subfigure}
    \begin{subfigure}{.28\textwidth}
        \centering
        \includegraphics[width=1.0\textwidth]{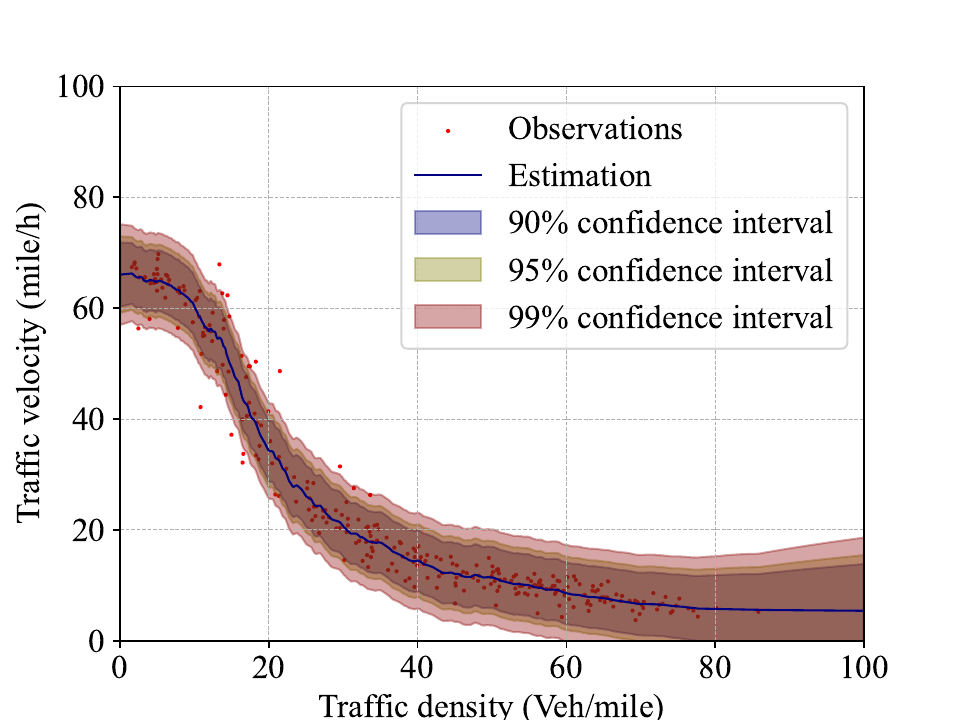}
        \caption{Pure GP regression}
        \label{figure:30}
    \end{subfigure}
    \caption{2D Visualization of speed-density relationships based on different priors}
    \label{figure:31}
\end{figure}

\begin{figure}[htbp]
    \centering
    \begin{subfigure}{.27\textwidth}
        \centering
        \includegraphics[trim=5cm 1cm 4cm 3cm, clip, width=1.0\textwidth]{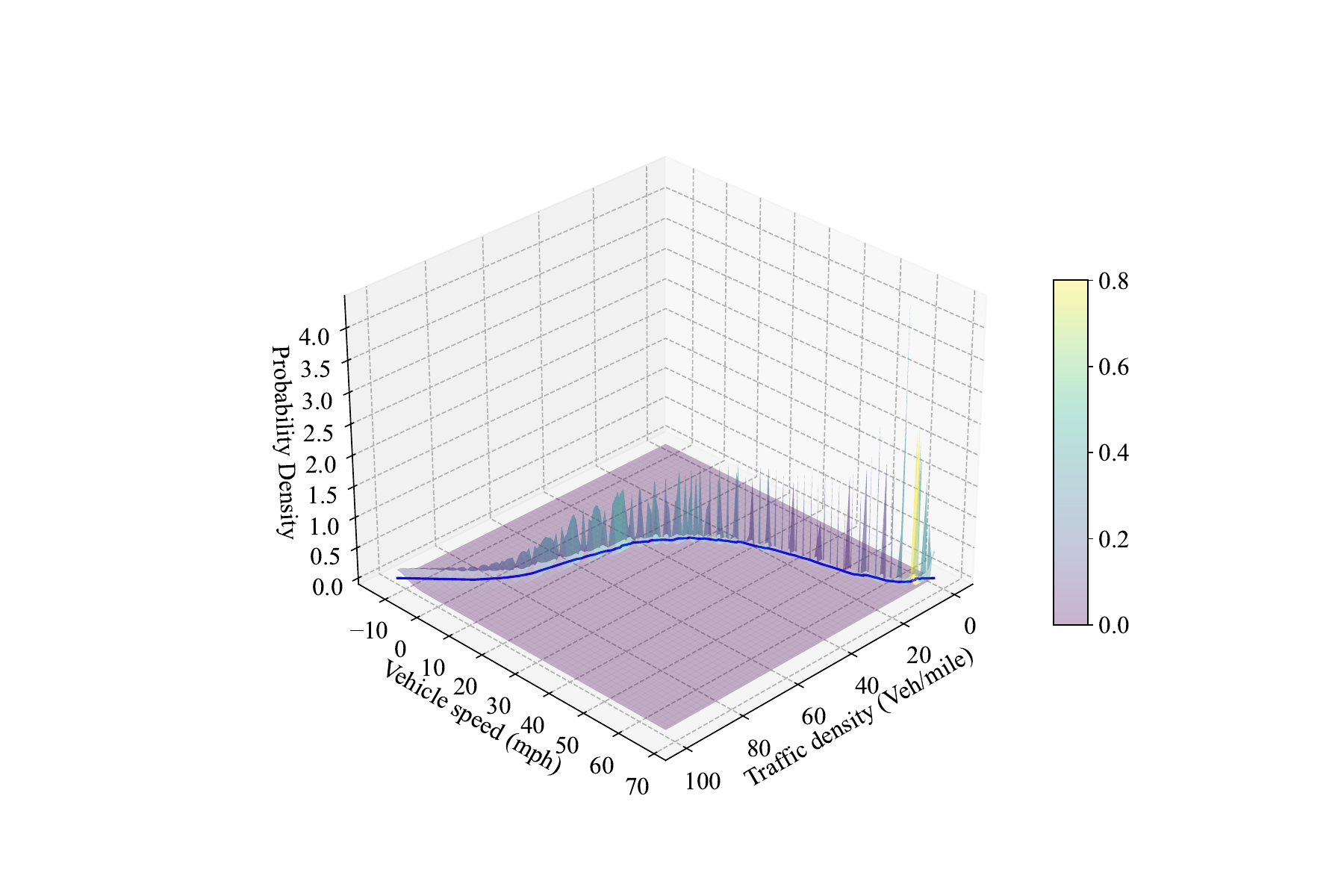}
        \caption{Greensheilds prior}
        \label{figure:32}
    \end{subfigure}
    \begin{subfigure}{.27\textwidth}
        \centering
        \includegraphics[trim=5cm 1cm 4cm 3cm, clip, width=1.0\textwidth]{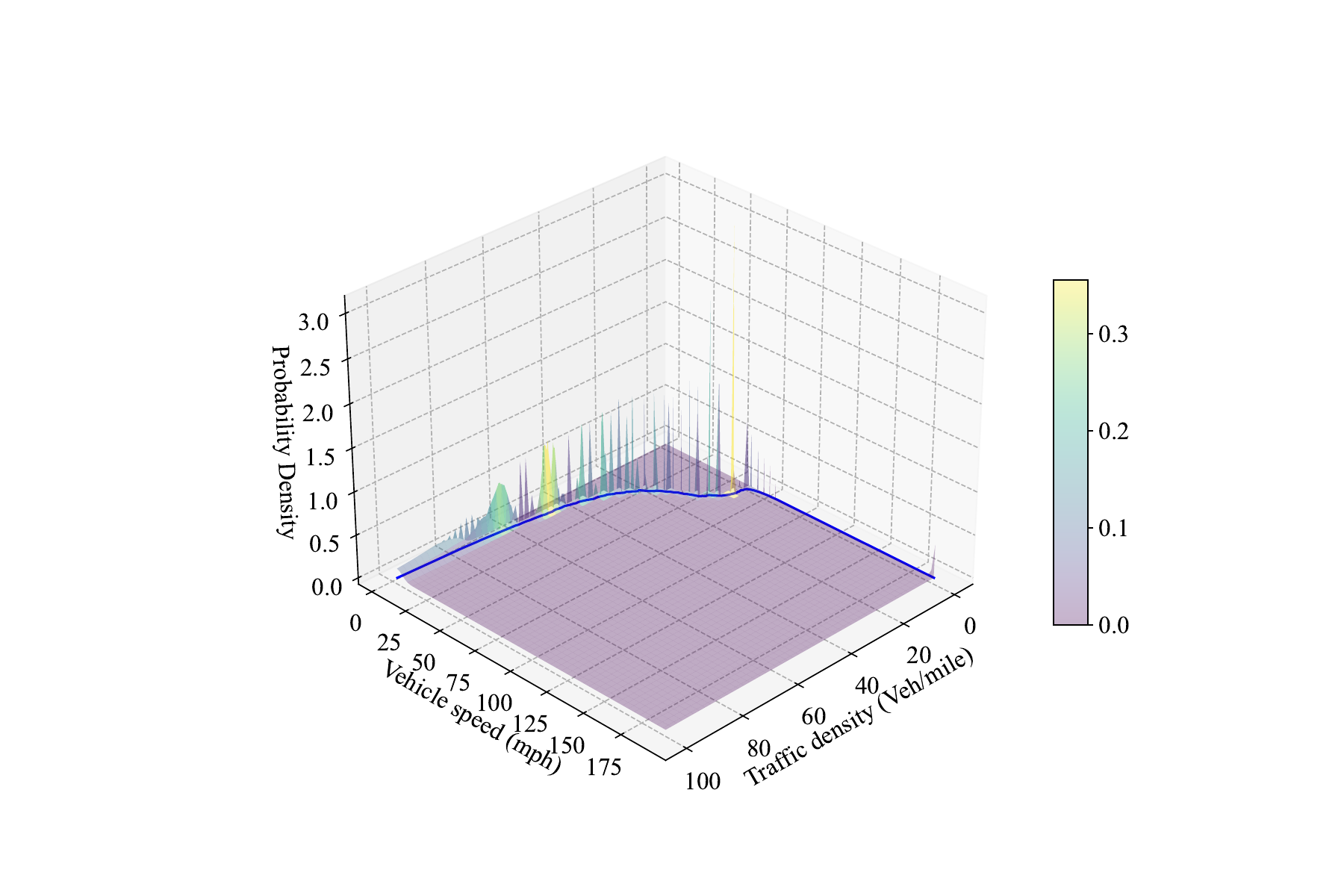}
        \caption{Greenberg prior}
        \label{figure:33}
    \end{subfigure}
    \begin{subfigure}{.27\textwidth}
        \centering
        \includegraphics[trim=5cm 1cm 4cm 3cm, clip, width=1.0\textwidth]{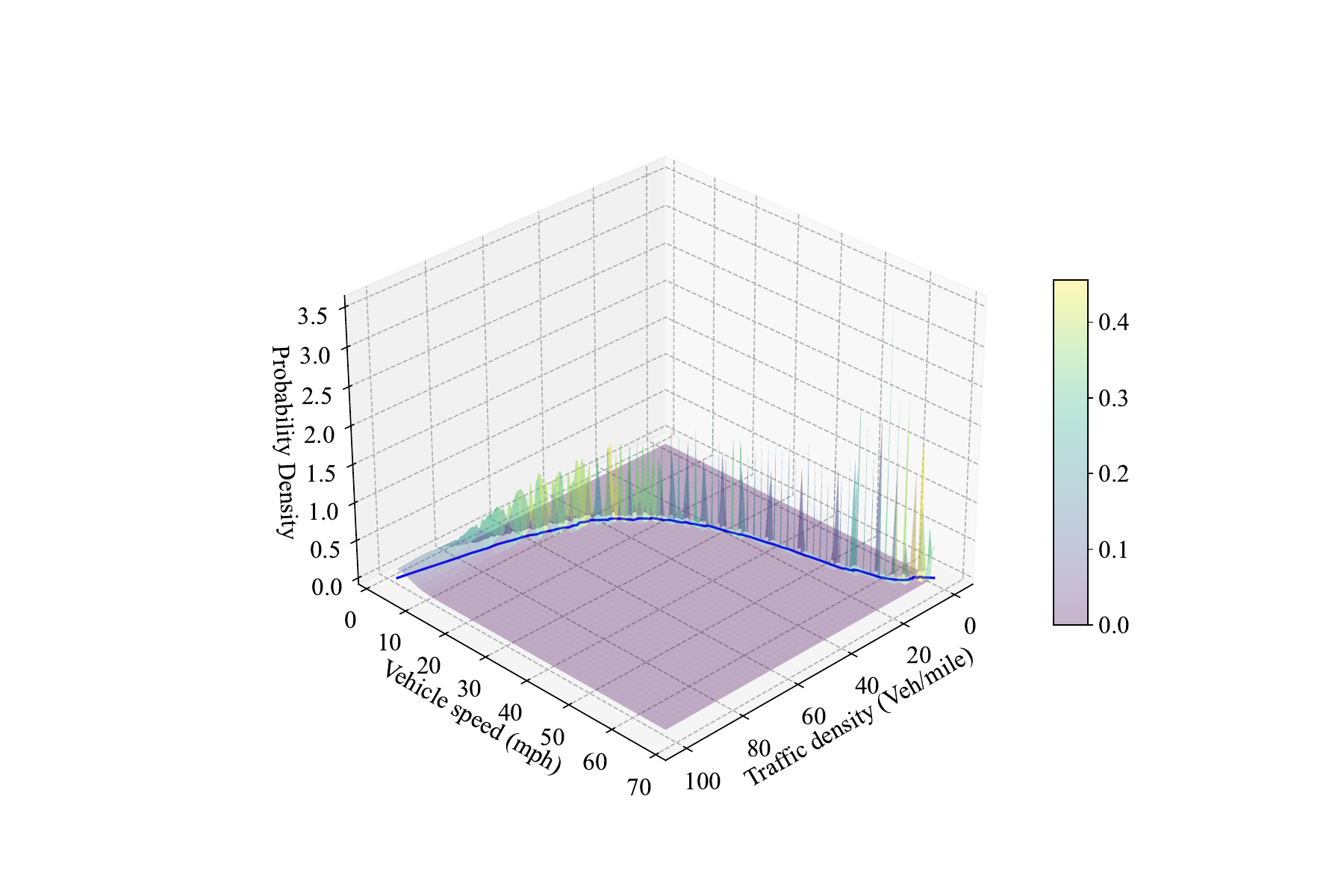}
         \caption{Underwood prior}
        \label{figure:34}
        \label{fig:newell}
    \end{subfigure}

    \begin{subfigure}{.27\textwidth}
        \centering
        \includegraphics[trim=5cm 1cm 4cm 3cm, clip, width=1.0\textwidth]{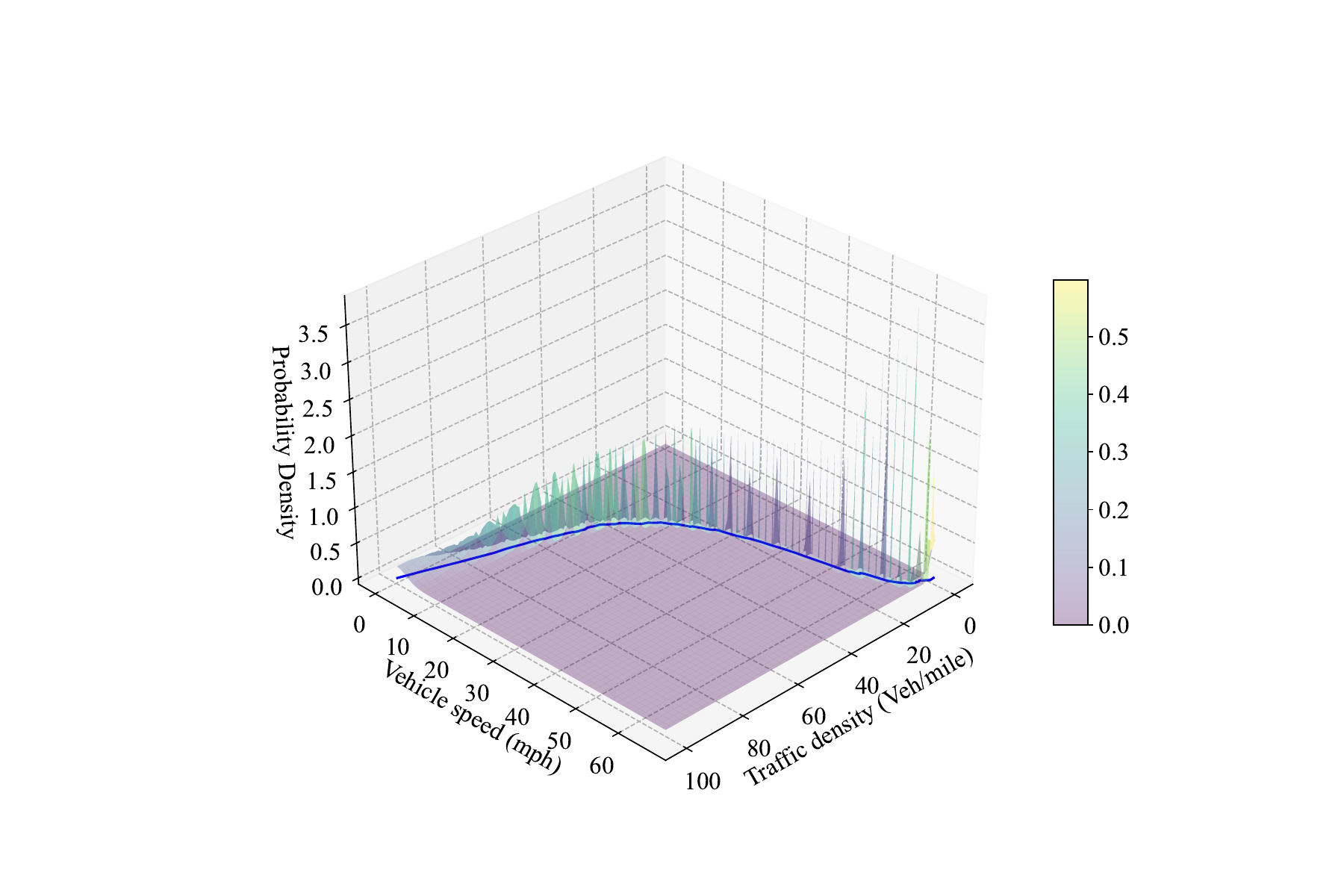}
        \caption{Newell prior}
        \label{figure:35}
    \end{subfigure}
    \begin{subfigure}{.27\textwidth}
        \centering
        \includegraphics[trim=5cm 1cm 4cm 3cm, clip, width=1.0\textwidth]{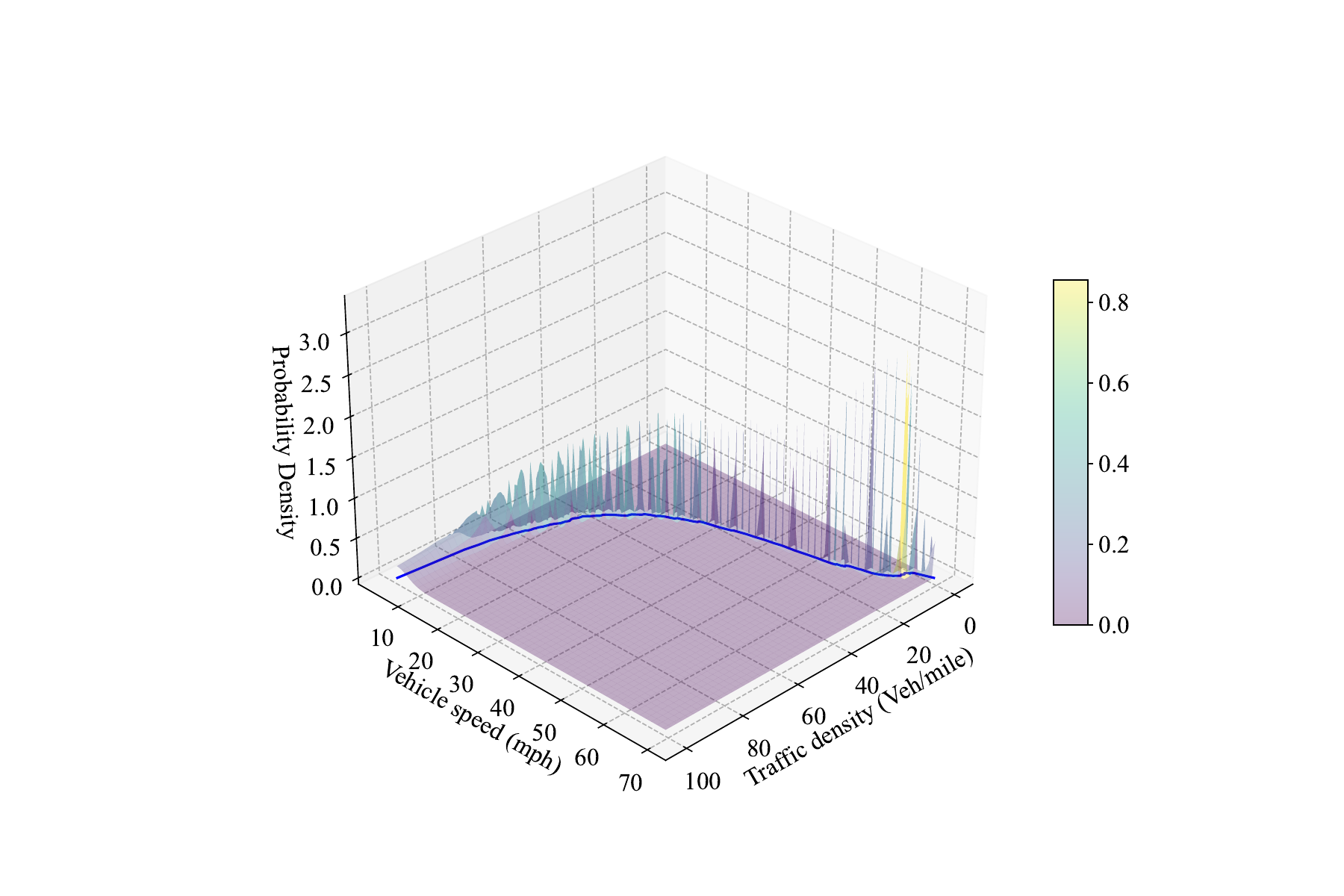}
        \caption{Pipes prior}
        \label{figure:36}
    \end{subfigure}
    \begin{subfigure}{.27\textwidth}
        \centering
        \includegraphics[trim=5cm 1cm 4cm 3cm, clip, width=1.0\textwidth]{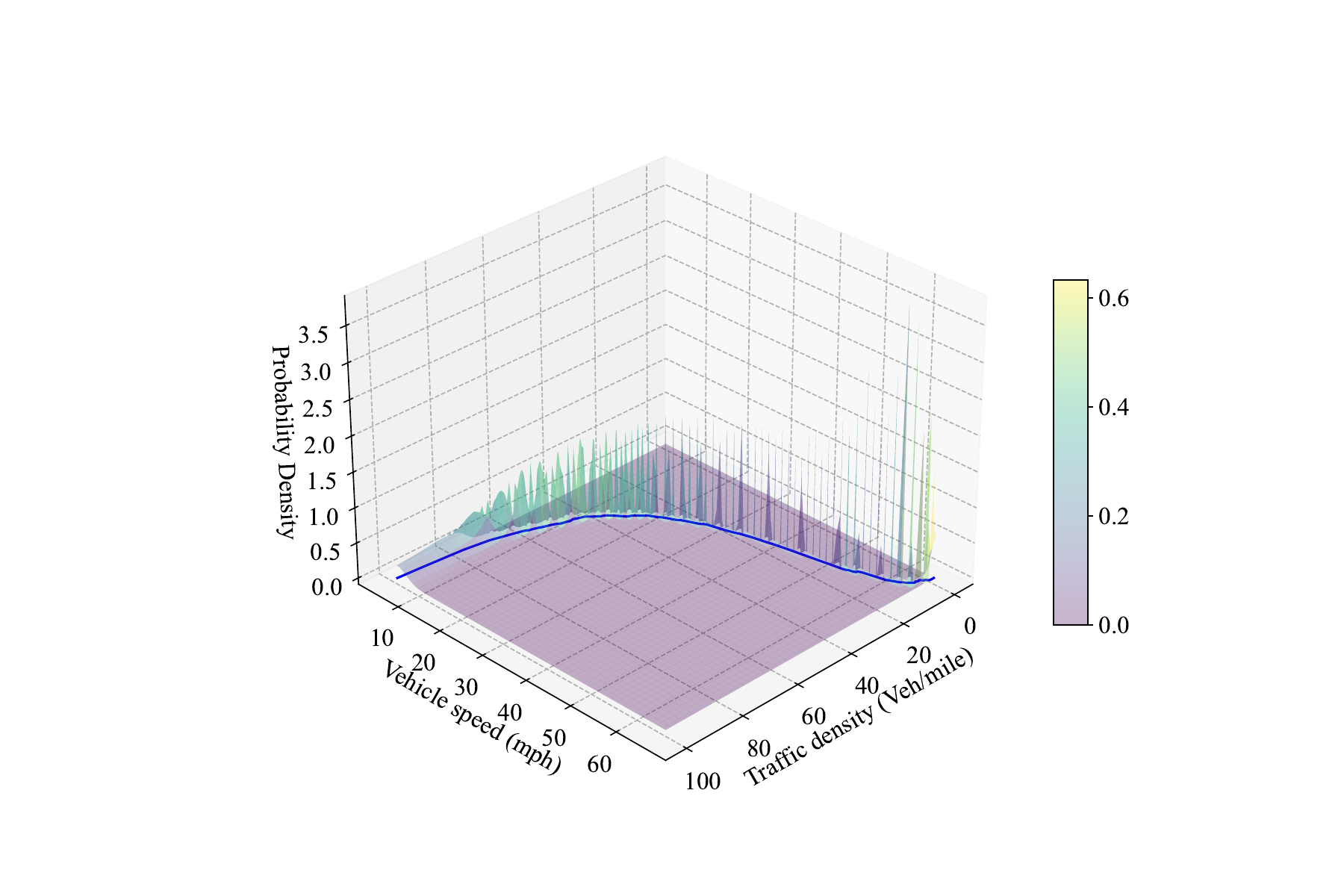}
        \caption{Papageorgiou prior}
        \label{figure:37}
    \end{subfigure}

    \begin{subfigure}{.27\textwidth}
        \centering
        \includegraphics[trim=5cm 1cm 4cm 3cm, clip, width=1.0\textwidth]{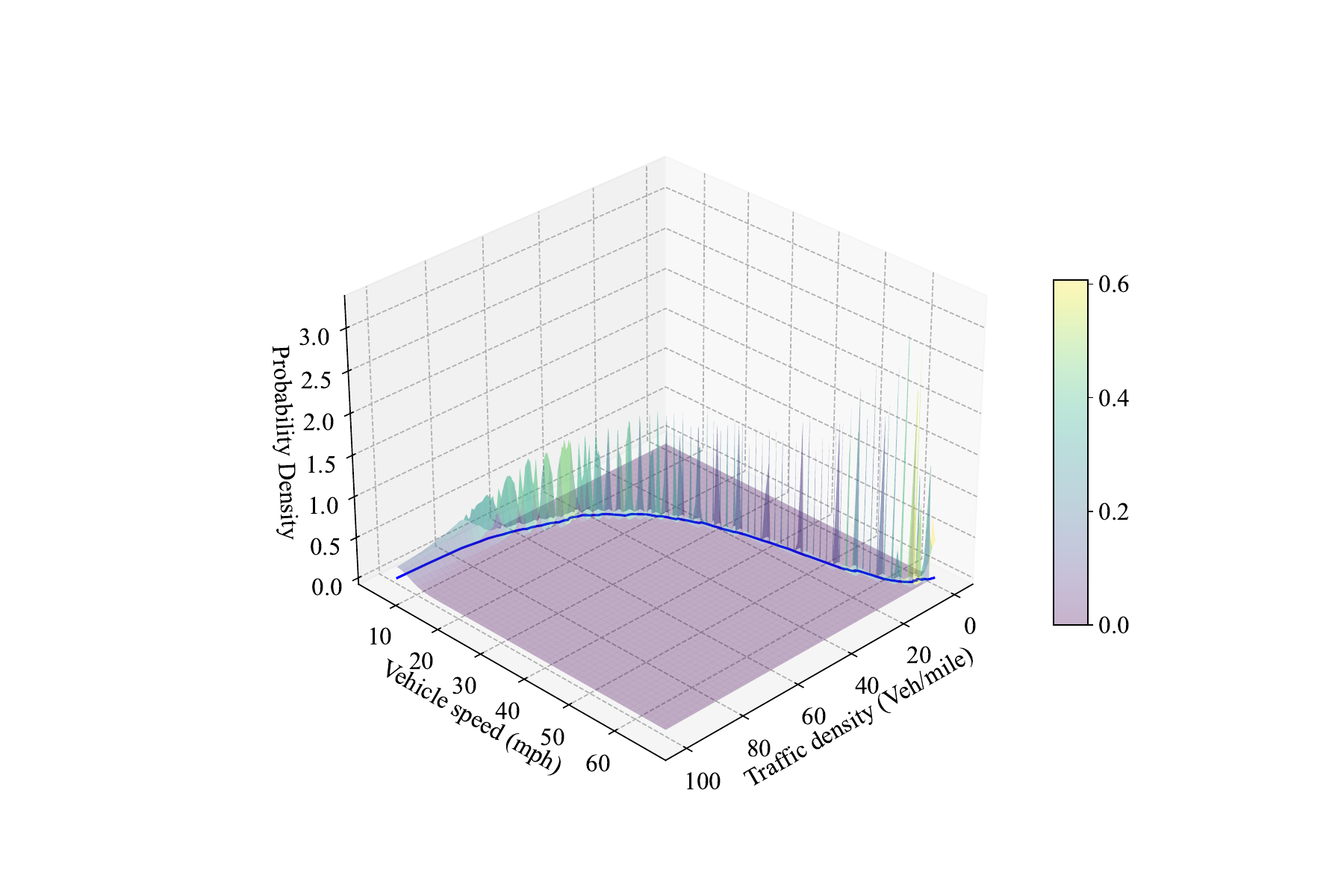}
        \caption{Kerner-Konhauser prior}
        \label{figure:38}
    \end{subfigure}
    \begin{subfigure}{.27\textwidth}
        \centering
        \includegraphics[trim=5cm 1cm 4cm 3cm, clip, width=1.0\textwidth]{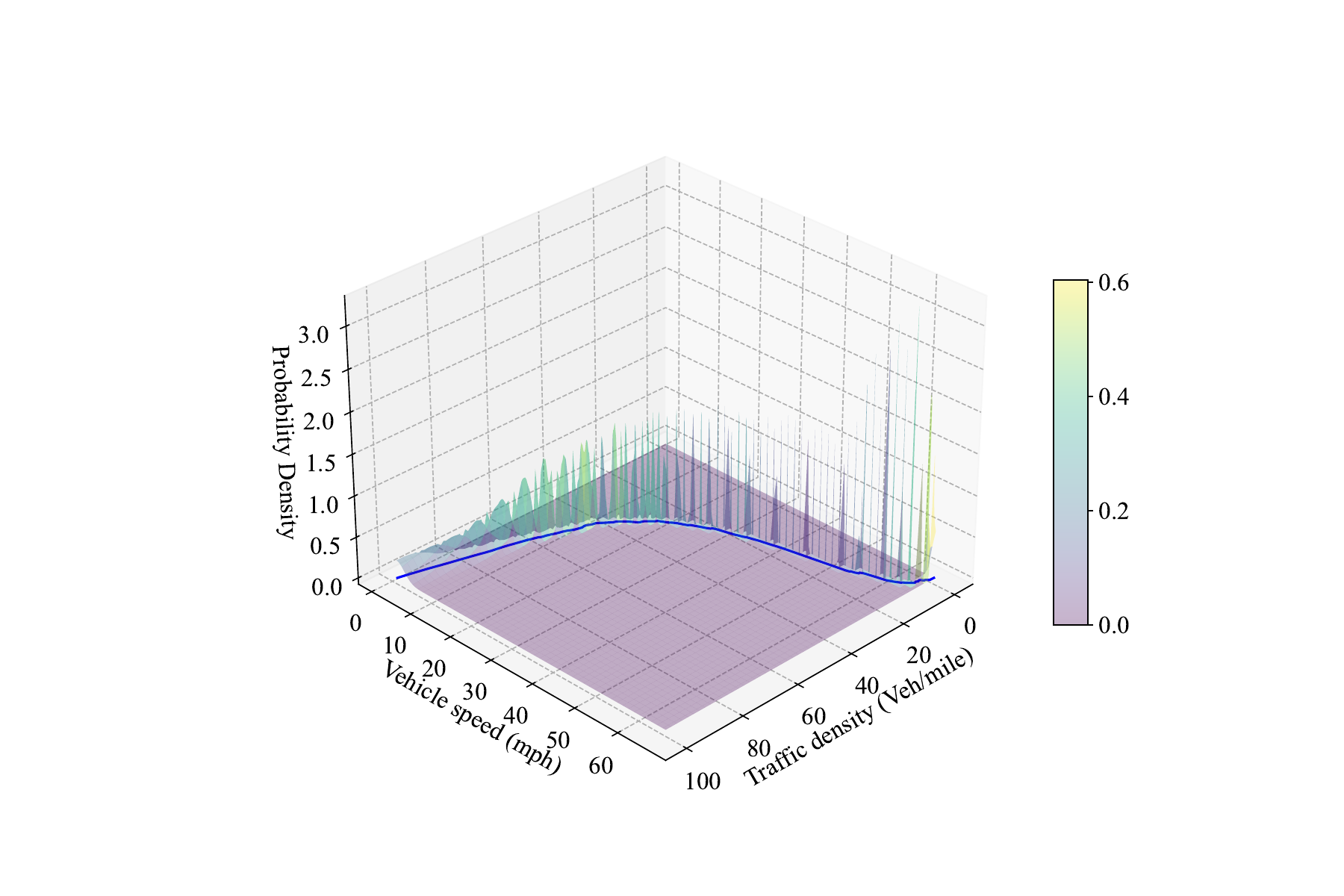}
    \caption{Del-Castillo-Benitez prior}
    \label{figure:39}
    \end{subfigure}
    \begin{subfigure}{.27\textwidth}
        \centering
        \includegraphics[trim=5cm 1cm 4cm 3cm, clip, width=1.0\textwidth]{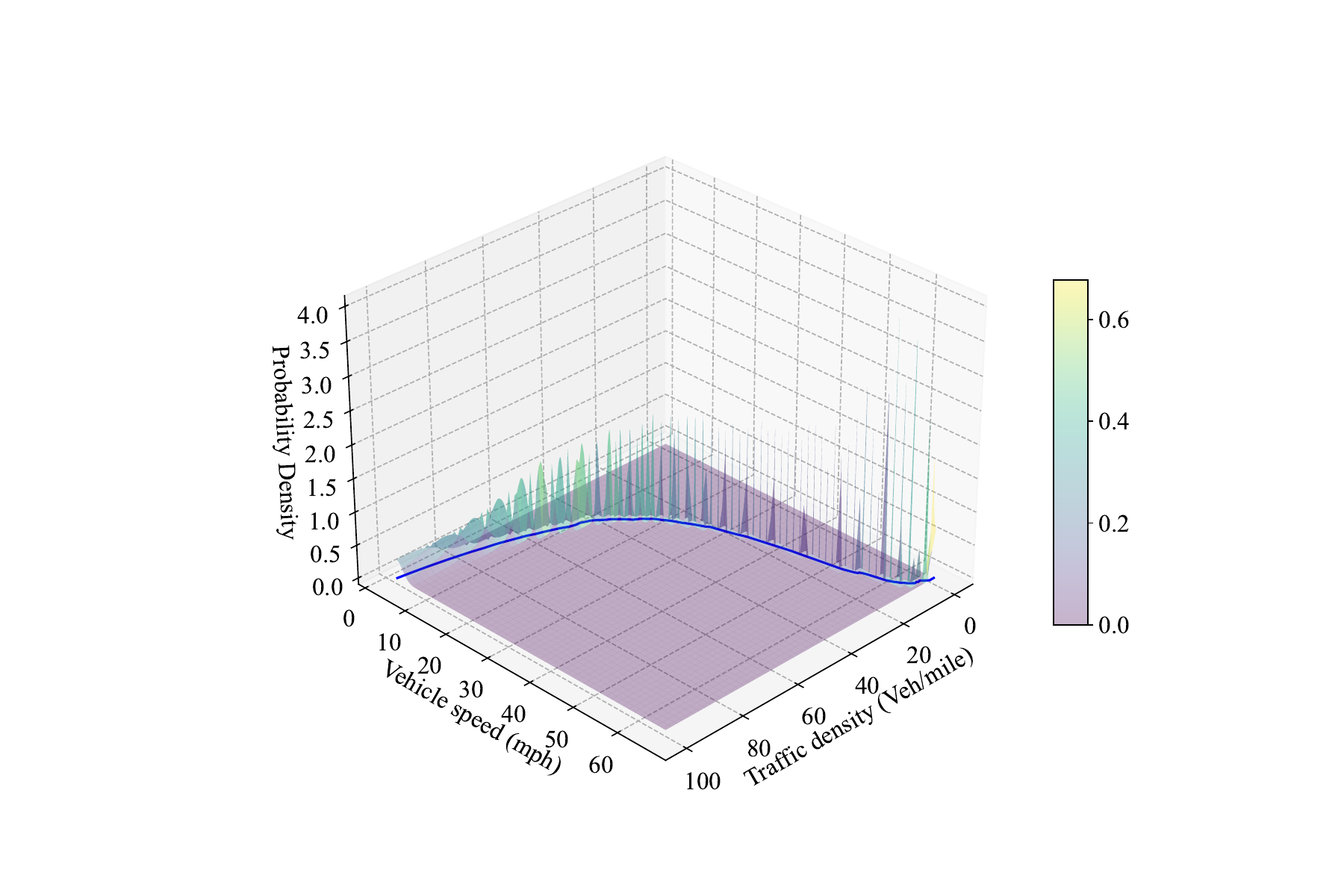}
        \caption{Cheng prior}
        \label{figure:40}
    \end{subfigure}
    
    \begin{subfigure}{.27\textwidth}
        \centering
        \includegraphics[trim=5cm 1cm 4cm 3cm, clip, width=1.0\textwidth]{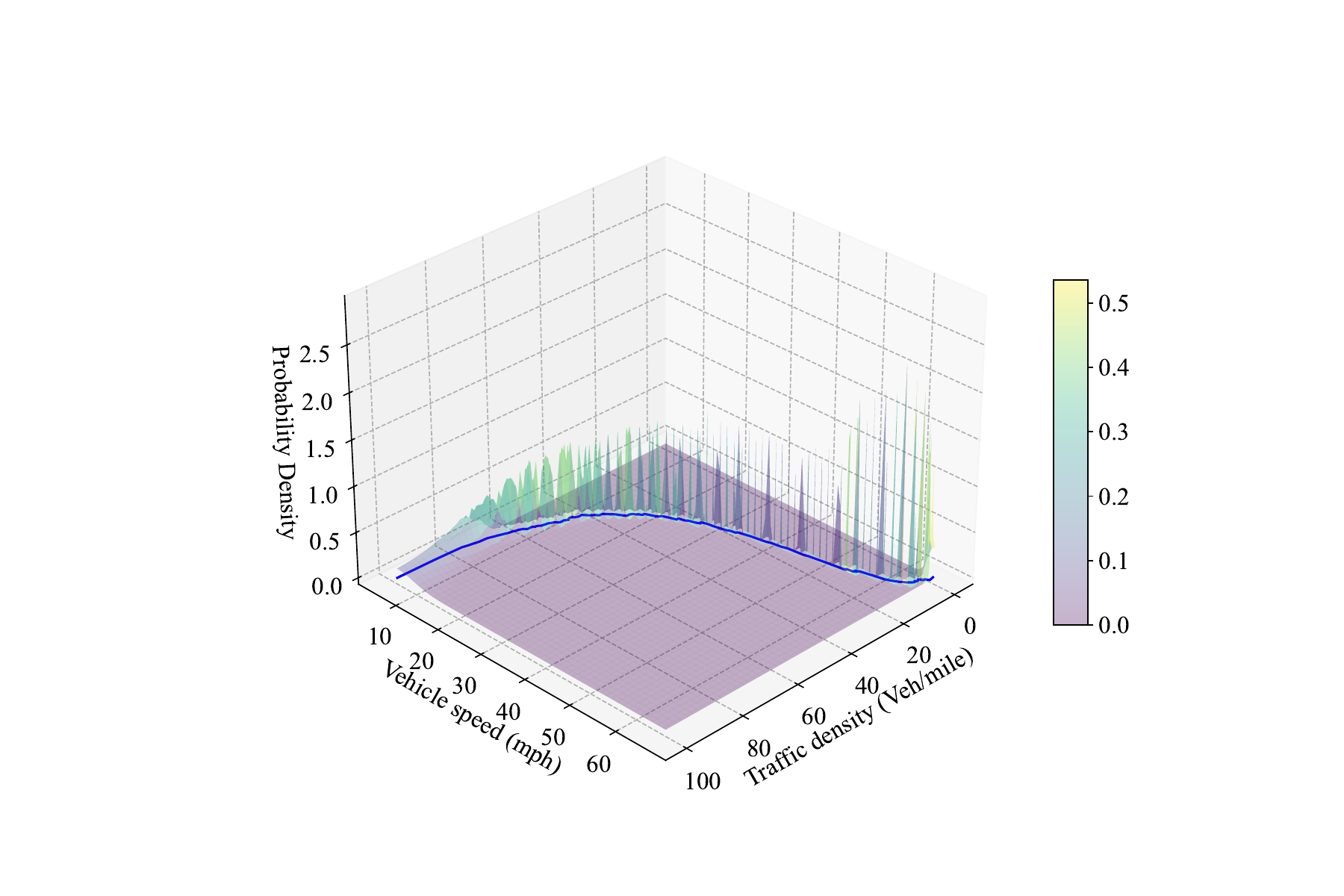}
        \caption{Mac-Nicholas prior}
        \label{figure:41}
    \end{subfigure}
    \begin{subfigure}{.27\textwidth}
        \centering
        \includegraphics[trim=5cm 1cm 4cm 3cm, clip, width=1.0\textwidth]{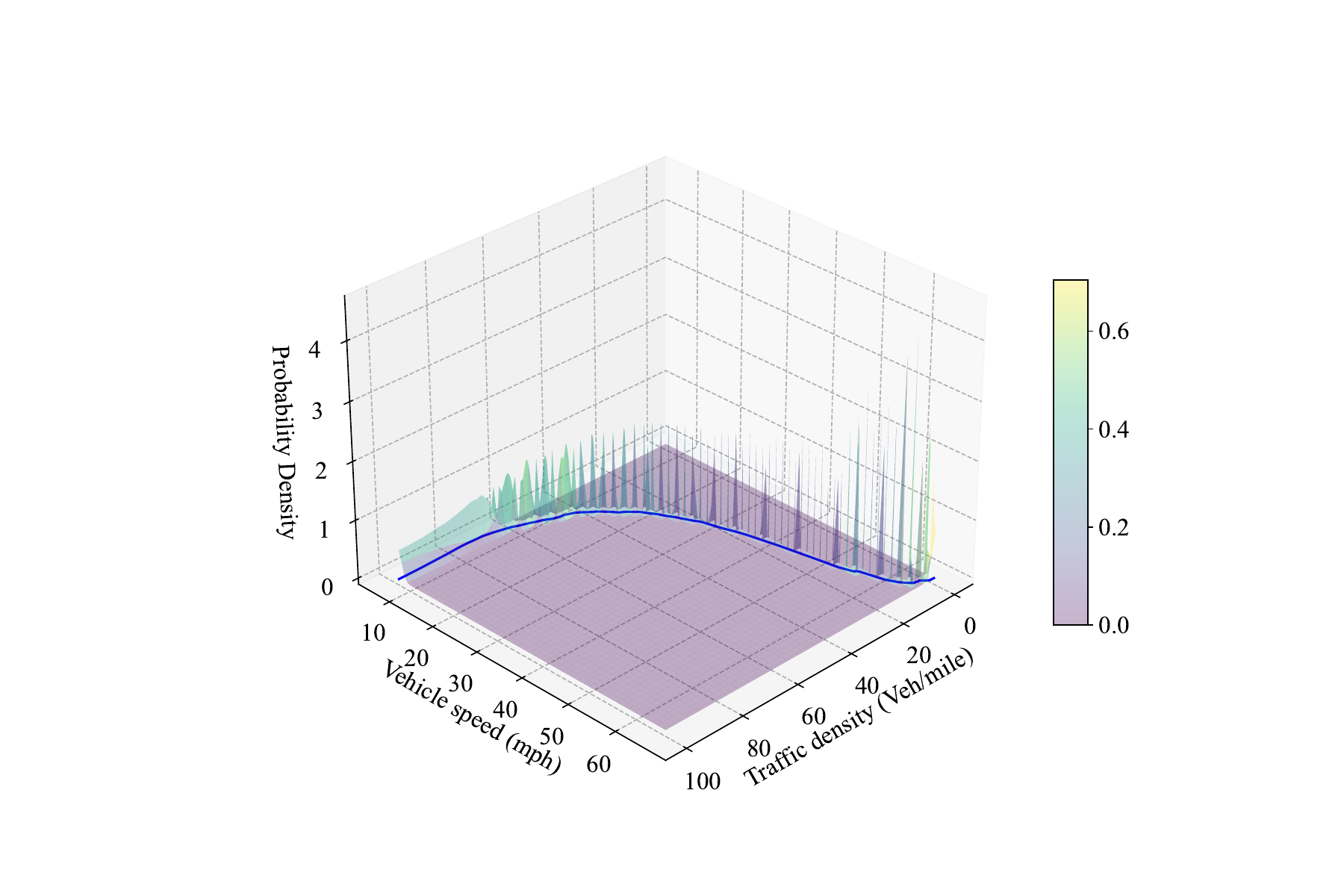}
    \caption{Wang model prior}
    \label{figure:42}
    \end{subfigure}
    \begin{subfigure}{.27\textwidth}
        \centering
        \includegraphics[trim=5cm 1cm 4cm 3cm, clip, width=1.0\textwidth]{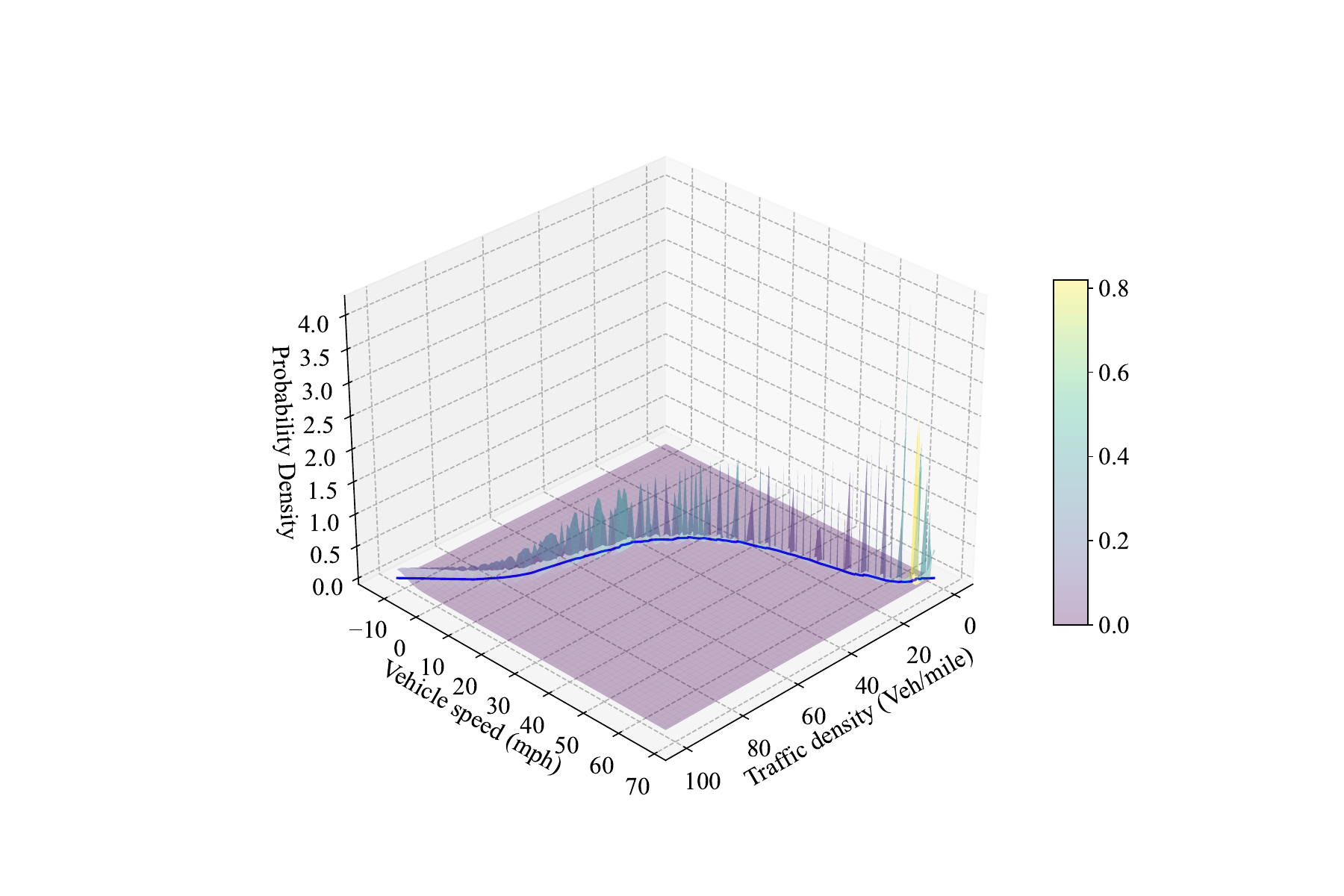}
        \caption{Jayakrishnan prior}
        \label{figure:43}
    \end{subfigure}
    
    \begin{subfigure}{.27\textwidth}
        \centering
        \includegraphics[trim=5cm 1cm 4cm 3cm, clip, width=1.0\textwidth]{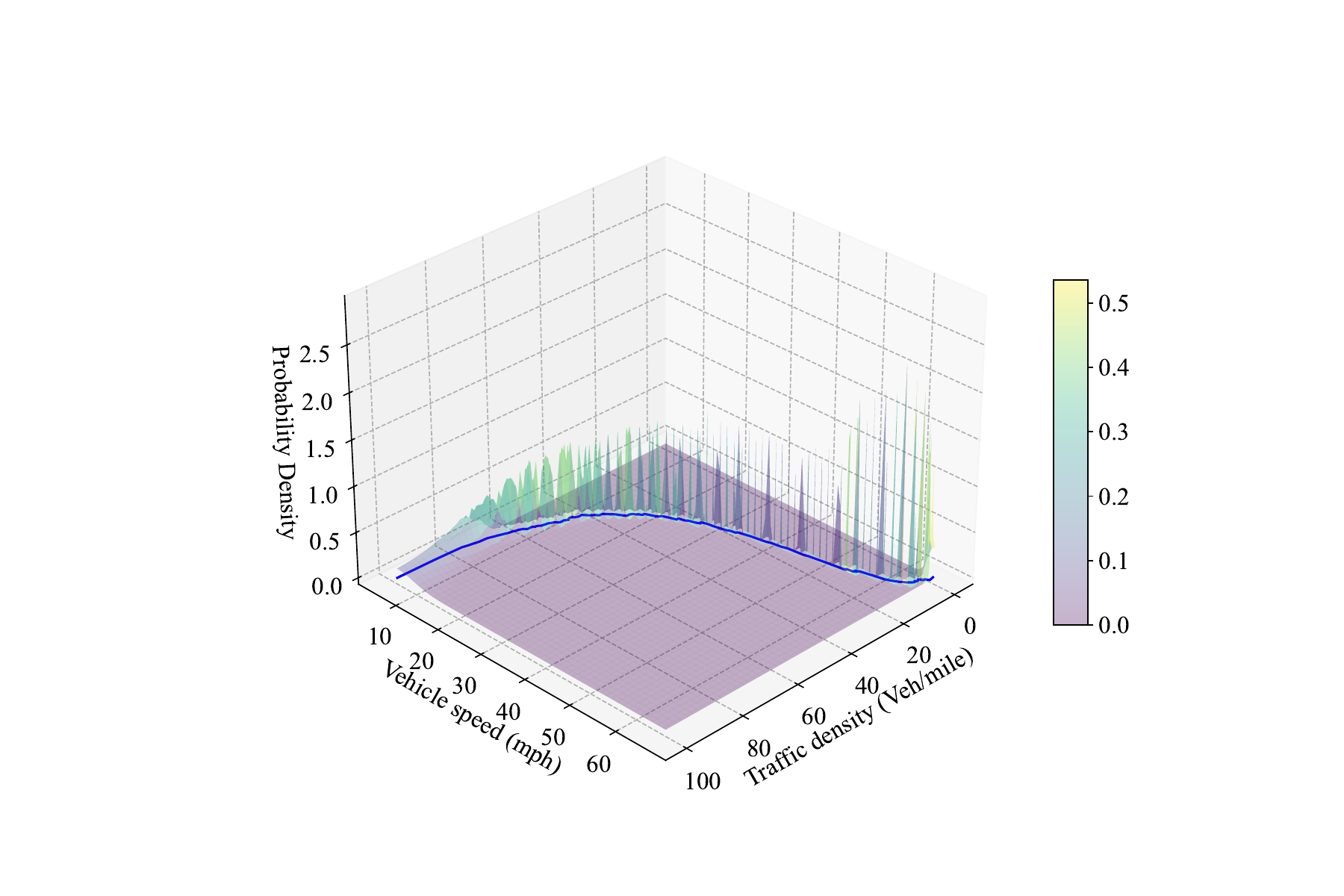}
        \caption{Pure GP regression}
        \label{figure:44}
    \end{subfigure}
    \hfill
    \caption{3D Visualization of speed-density relationships based on different priors}
    \label{figure:45}
\end{figure}
\subsection{Model validation}
\par \par Figures \ref{figure:18} through \ref{figure:30} illustrate the speed-density relationships in a two-dimensional depiction based on prior assumptions. The estimation curve represents the posterior mean function curve, which embodies the speed-density relationship with the maximum likelihood. This means that for any given density value, the speed value situated along this curve has the highest likelihood of occurrence in empirical scenarios. 
\par The stochastic fundamental diagram model proposed in this study is essentially a Gaussian process. Its domain of definition includes an infinite one-dimensional Gaussian distribution looking towards the y-axis direction. The confidence intervals of 90\%, 95\%, and 99\% have been specified, and within the 95\% confidence interval, for any given density, the stochastic model believes the corresponding speed has a 95\% likelihood of falling within this specified range. 
\par Furthermore, Figures \ref{figure:32} through \ref{figure:44} show the corresponding three-dimensional depiction of the speed-density relationships, informed by various prior assumptions, referring to Figure \ref{figure:4}. These figures illustrate the different probability densities of traffic speed given any traffic density. Therefore in this paper, the distribution of traffic speed, denoted as $w(v)$, is considered as a function of traffic density, denoted as $f(\rho)$. This is different from the conventional deterministic models, where the traffic speed is solely a function of traffic density (i.e., $v = f(\rho)$). Here, we take into account the fact that the distribution of traffic speed is a function of traffic density (i.e., $w(v) = f(\rho)$). $w(v) = f(\rho)$ depicts the overall traffic pattern of the density-speed relationship under all situations, and the generated posterior mean function curve $v = m(\rho)$ can be regarded as a deterministic representation of the equilibrium density-speed relationship under a steady state.
\par Two metrics, Root Mean Squared Error (RMSE) and Mean Absolute Percentage Error (MAPE), are employed to evaluate the accuracy of fitting models to empirical data. The formulas for these metrics are as follows:
\begin{equation}
    \mathbf{RMSE} = \sqrt{\frac{1}{n}\sum_{i = 1}^{n}(v^{o}_{i} - v^{e}_{i})^{2}} \label{eq:37}
\end{equation}
\begin{equation}
    \mathbf{MAPE} = \frac{1}{n}\sum_{i = 1}^{n}\left|\frac{v^{o}_{i} - v^{e}_{i}}{v^{o}_{i}}\right| \times 100\% \label{eq:38}
\end{equation}
where \(v^{o}_{i}\) denotes the observed speed corresponding to the observation density \(\rho^{o}_{i}\), and \(v^{e}_{i}\) represents the estimated speed for the same observation density. These metrics provide insights into the comparative performance of different models. 
\par Table \ref{Table:4} shows the comparison of the results of different empirical models. Table \ref{Table:5} compares the estimation accuracy of various empirical prior Gaussian regression models with the pure Gaussian process regression model across different sampling techniques. Table \ref{Table:6} shows the percentage of the empirical data points that fall within the 95\% confidence interval. Here, $\mathcal{RS}$, $\mathcal{SS}$, $\mathcal{CS}$, and $\mathcal{WRS}$ denote simple random sampling, systematic sampling, cluster sampling, and weighted random sampling, respectively. We also used the I-80 detector dataset from NGSIM for testing, and all results are shown in the appendix from table \ref{Table:7} to \ref{Table:9}.
\par  First, it is observed that among empirical models, the most precise one is Cheng's model (\cite{cheng2021s}). However, the Full Gaussian Process (GP) Model surpasses these empirical models in terms of accuracy. The Pure GP Model's superior performance is achieved with only 288 data pairs as opposed to the 44,787 utilized by the empirical models; also, from Table \ref{Table:6}, all models are well-fitted because they all capture the distribution of the empirical data under sparse approximation. Also, we should notice that model performances are nearly the same under all sampling methods for all models, which indicates that all sampling methods in our paper are suitable choices for the SGPR modeling.
\par Secondly, to figure out whether the sparse regression successfully captures the stochastic of the density-speed relationship. The percentage 
 density-speed points that fall within the 95\% confidence interval (PWCI) have been tested. The percentage of points that fall within the 95\% confidence interval (PWCI) of all models is higher than 94\%. The superior performance of sparse Gaussian process regression methods in accurately modeling the stochastic density-speed relationship is convincingly illustrated by the above facts. 
 \par Thirdly, the results shown in Table \ref{Table:5} unequivocally indicate that the Pure GP Model is more accurate compared to other models across all sampling methods employed. This leads to the inference that an accurate prior mean function may not sufficiently compensate for the limitations inherent in sparse approximations when seeking to enhance estimation accuracy. A purely data-driven approach may yield more accurate results in situations with a relatively small amount of quality data. Notably,  the same results are also shown in the I-80 dataset, shown in table \ref{Table:7} to \ref{Table:9}. Such findings imply that even a training dataset with a modest size can contain sufficient information for effective model training. Consequently, these insights prompt further questions: \textbf{\textit{"What is the minimal size of a dataset that remains effective? How much data is necessary to counterbalance the advantages of employing an empirical prior?"}}
 \par Therefore, a series of experiments have been conducted to find the fitting accuracy of both empirical prior Sparse Gaussian process regression model (EPGPR) and pure Sparse Gaussian process regression model under increasing training dataset size. As shown in Figure \ref{figure:46} to \ref{figure:56} and \ref{figure:58} to \ref{figure:68}, a comparison has been made to discover both RSME and MAPE between the pure GPR model and different EPGPR models with the increase of the training dataset size. 
 Our findings indicated no significant improvement in accuracy after the training dataset size exceeded 300 for all the models. It is worth noting that all results shown in Figure \ref{figure:46} to \ref{figure:68} are based on weighted random sampling. Experimental results based on other sampling methods are shown in the supplementary materials. The warm color area indicates the ESME or MPAE of the relevant proposed EPGPR model exceeds that of the pure SGPR model. Conversely, the cold color area indicates the ESME or MPAE of the relevant proposed EPGPR model is lower than the pure SGPR model.  Drawing insights from the above pictures,  only when well-fitted empirical models are utilized as the mean function, and notably, this improvement is largely confined to scenarios involving small training datasets. Despite the potential benefits offered by well-calibrated priors, such as those proposed in Wang's model (\cite{wang2011logistic}), Cheng's model (\cite{cheng2021s}), Papageorgiou's model (\cite{papageorgiou1989macroscopic}), Kerner's model \cite{kerner1994structure}, and Del-Castillo's model (\cite{del1995functional-part-1}), the empirical advantage diminishes as the size of the training dataset expands. Considering RSME and MAPE simultaneously, only the prior based on Cheng's and Kerner's models can benefit the EPGPR under a very small training dataset. Employing a judicious sampling strategy for choosing inducing variables, a training dataset comprising merely 30 density-speed pairs suffices for a purely data-driven model to outperform a model incorporating a well-fitted empirical prior for the GA 400 dataset. The findings of this study suggest that the speed-density relationship may not exhibit a complex nonlinear nature. Applying highly specific prior may lead to overfitting of the entire model. Consequently, selecting a suitable sampling method, coupled with the advantage of relatively clean and ample datasets, indicates that data-driven modeling could emerge as a more effective approach for developing stochastic fundamental diagram models.
\begin{table}[htbp]
  \centering
  \caption{Comparison of the result of different empirical models}
  \label{Table:4}
  \begin{tabular}{>{\centering\arraybackslash}p{0.5\linewidth} >{\centering\arraybackslash}p{0.25\linewidth} >{\centering\arraybackslash}p{0.25\linewidth}}
    \thicktoprule
    \textbf{Model} & \textbf{Speed RMSE (mph)} & \textbf{Speed MAPE}\\
    \midrule
       Original Greenshields' Model & 15.43 & 26.80\% \\
       Original Greenberg's Model & 68.59 & 116.52\% \\
       Original Underwood's Model & 15.58 & 34.82\% \\
       Original Newell's Model & 14.23 & 29.33\% \\
       Original Pipes' Model & 4.25 & 7.41\% \\
       Original Papageorgiou's Model & 5.52 & 8.77\% \\
       Original Kerner-Konhause's Model & 6.73 & 12.49\% \\
       Original Del Castillo-Benitez's Model & 12.14 & 18.97\% \\
       Original Jayakrishnan's Model & 15.43 & 26.80\% \\
       Original Cheng's Model & 3.81 & 5.64\% \\
       Original Wang's Model & 3.39 & 4.48\% \\
    \thickbottomrule
  \end{tabular}
\end{table}
\begin{table}[htbp]
\centering
\caption{Comparison of the result of different EPNR models with Exponential kernel function (m = 288)}
\label{Table:5}
\begin{tabularx}{\textwidth}{Z*{8}{Y}}
\thicktoprule
\textbf{EPNR Model} & \multicolumn{4}{c}{\textbf{Speed RMSE (mph)}} & \multicolumn{4}{c}{\textbf{Speed MAPE}} \\ \cline{2-9}
& $\mathcal{RS}$ & $\mathcal{SS}$ & $\mathcal{CS}$ & $\mathcal{WRS}$ & $\mathcal{RS}$ & $\mathcal{SS}$ & $\mathcal{CS}$ & $\mathcal{WRS}$ \\ \midrule
Pure GP  & 3.31  & 3.32  & 3.31 & 3.31 & 4.43\% & 4.43\% & 4.43\% & 4.43\% \\ 
Cheng's  & 3.32  & 3.32  & 3.32 & 3.32 & 4.43\% & 4.43\% & 4.43\% & 4.43\% \\ 
Wang's & 3.32  & 3.32  & 3.32 & 3.32 & 4.44\% & 4.44\% & 4.44\% & 4.43\% \\
Greenshields's  & 3.32  & 3.32  & 3.32 & 3.32 & 4.44\% & 4.44\% & 4.43\% & 4.43\% \\ 
Greenberg's  & 3.31  & 3.32  & 3.32 & 3.32 & 4.43\% & 4.43\% & 4.43\% & 4.43\% \\ 
Underwood's  & 3.32  & 3.32  & 3.32 & 3.32 & 4.43\% & 4.43\% & 4.43\% & 4.43\% \\ 
Newell's  & 3.32  & 3.32  & 3.32 & 3.32 & 4.43\% & 4.43\% & 4.43\% & 4.43\% \\ 
Pipes's & 3.32  & 3.32  & 3.32 & 3.32 & 4.43\% & 4.43\% & 4.43\% & 4.43\% \\ 
Papageorgiou's & 3.32  & 3.32  & 3.32 & 3.32 & 4.43\% & 4.43\% & 4.43\% & 4.43\% \\ 
Kerner's & 3.32  & 3.32  & 3.32 & 3.32 & 4.43\% & 4.43\% & 4.43\% & 4.43\% \\ 
Del-Castillo's  & 3.32  & 3.32  & 3.32 & 3.32 & 4.44\% & 4.44\% & 4.43\% & 4.43\% \\ 
Jayakrishnan's & 3.32  & 3.32  & 3.32 & 3.32 & 4.44\% & 4.44\% & 4.43\% & 4.43\% \\ 
\thickbottomrule
\end{tabularx}
\end{table}
\begin{table}[htbp]
\centering
\caption{Comparison of the result of percentages of points falling within the 95\% confidence interval}
\label{Table:6}
\begin{tabularx}{\textwidth}{Z*{4}{Y}}
\thicktoprule
\textbf{EPNR Model} & \multicolumn{4}{c}{\textbf{PWCI}} \\ \cline{2-5}
& $\mathcal{RS}$ & $\mathcal{SS}$ & $\mathcal{CS}$ & $\mathcal{WRS}$ \\ \midrule
Pure GP & 94.27\% & 94.27\% & 94.29\% & 94.26\% \\ 
Cheng's & 94.22\% & 94.21\% & 94.23\% & 94.21\% \\ 
Wang's & 94.22\% & 94.22\% & 94.22\% & 94.22\% \\
Greenshields's & 94.26\% & 94.25\% & 94.25\% & 94.26\% \\ 
Greenberg's& 94.26\% & 94.26\% & 94.25\% & 94.26\% \\ 
Underwood's & 94.25\% & 94.26\% & 94.25\% & 94.26\% \\ 
Newell's & 94.23\% & 94.23\% & 94.22\% & 94.23\% \\ 
Pipes's  & 94.24\% & 94.23\% & 94.23\% & 94.24\% \\ 
Papageorgiou's  & 94.21\% & 94.21\% & 94.22\% & 94.23\% \\ 
Kerner's  & 94.23\% & 94.23\% & 94.24\% & 94.24\% \\ 
Del-Castillo's  & 94.22\% & 94.23\% & 94.23\% & 94.22\% \\ 
Jayakrishnan's  & 94.25\% & 94.25\% & 94.25\% & 94.26\% \\ 
\thickbottomrule
\end{tabularx}
\end{table}

\begin{figure}[htbp]
    \centering
    \begin{subfigure}{.27\textwidth}
        \centering
        \includegraphics[width=1.0\textwidth]{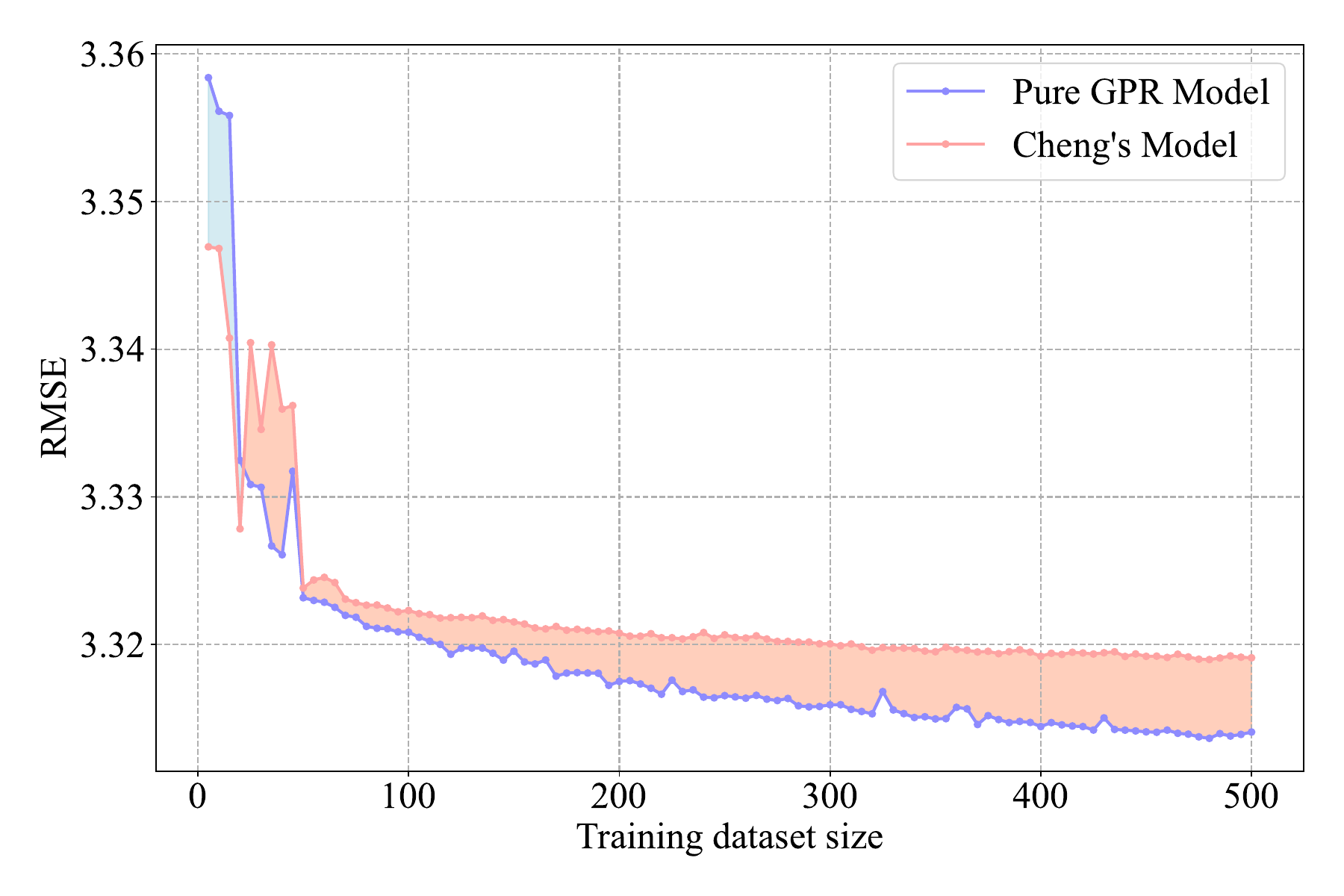}
        \caption{}
        \label{figure:46}
    \end{subfigure}
    \begin{subfigure}{.27\textwidth}
        \centering
        \includegraphics[width=1.0\textwidth]{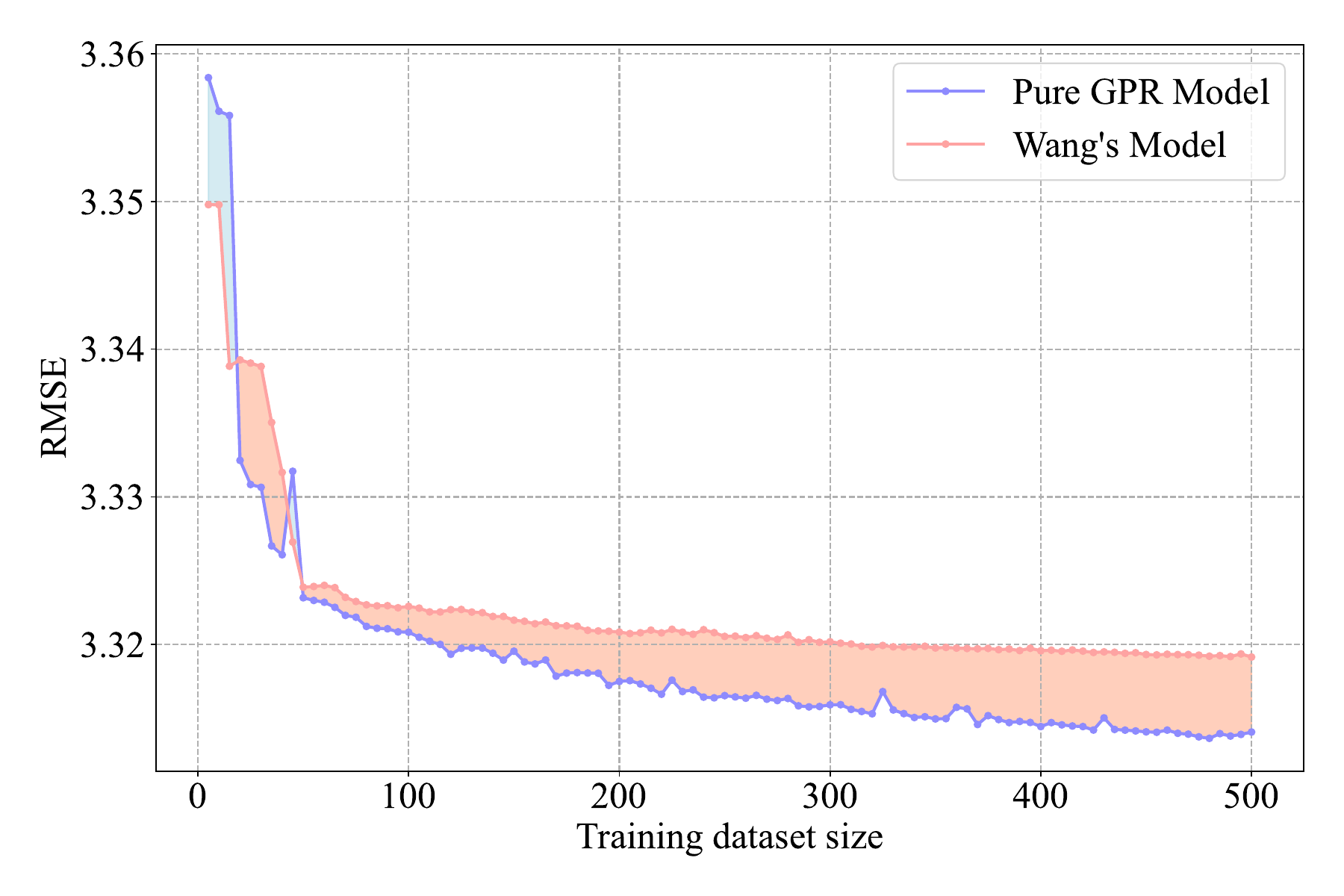}
        \caption{}
        \label{figure:47}
    \end{subfigure}
    \begin{subfigure}{.27\textwidth}
        \centering
        \includegraphics[width=1.0\textwidth]{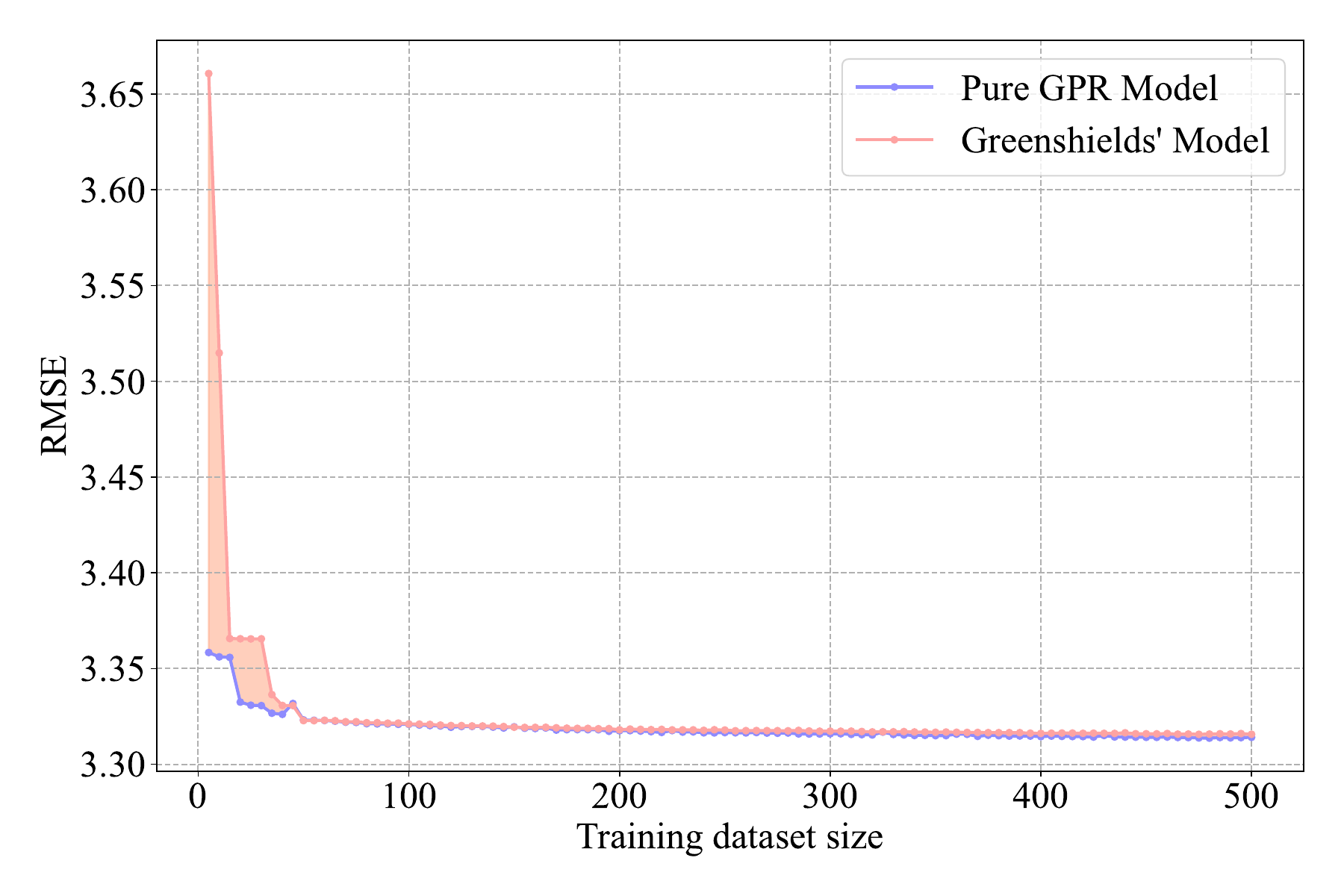}
         \caption{}
        \label{figure:48}
        \label{fig:newell}
    \end{subfigure}

    \begin{subfigure}{.27\textwidth}
        \centering
        \includegraphics[width=1.0\textwidth]{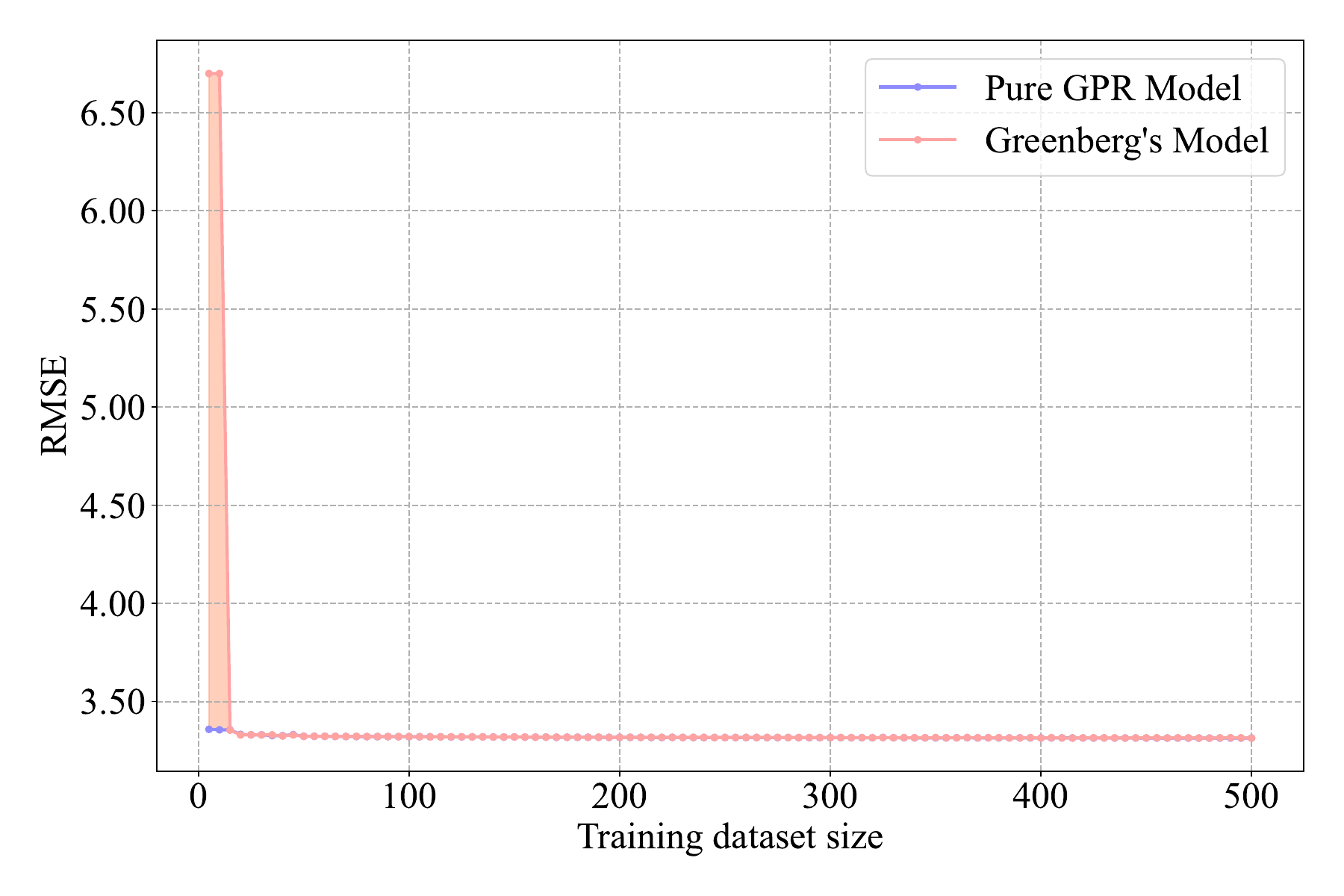}
        \caption{}
        \label{figure:49}
    \end{subfigure}
    \begin{subfigure}{.27\textwidth}
        \centering
        \includegraphics[width=1.0\textwidth]{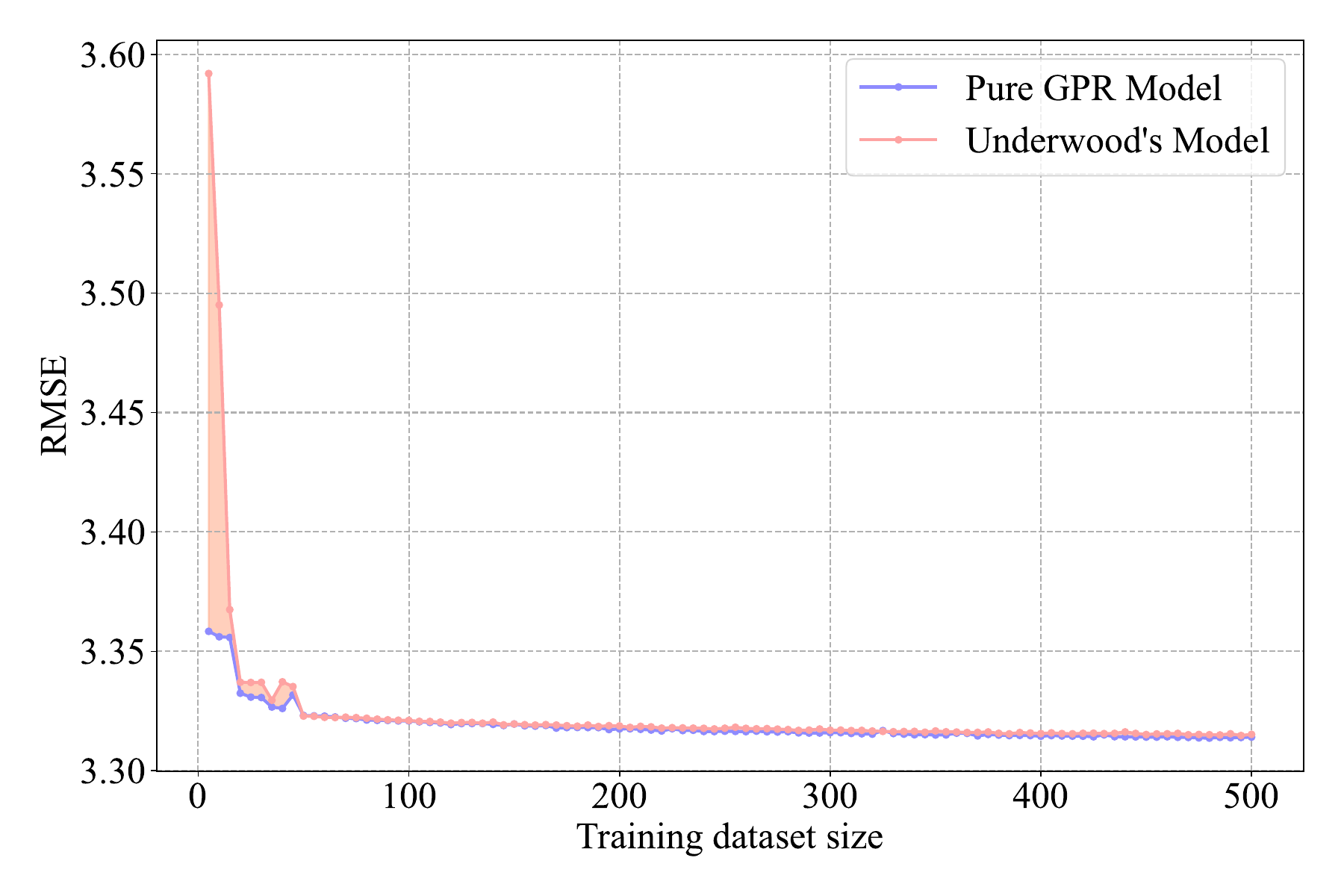}
        \caption{}
        \label{figure:50}
    \end{subfigure}
    \begin{subfigure}{.27\textwidth}
        \centering
        \includegraphics[width=1.0\textwidth]{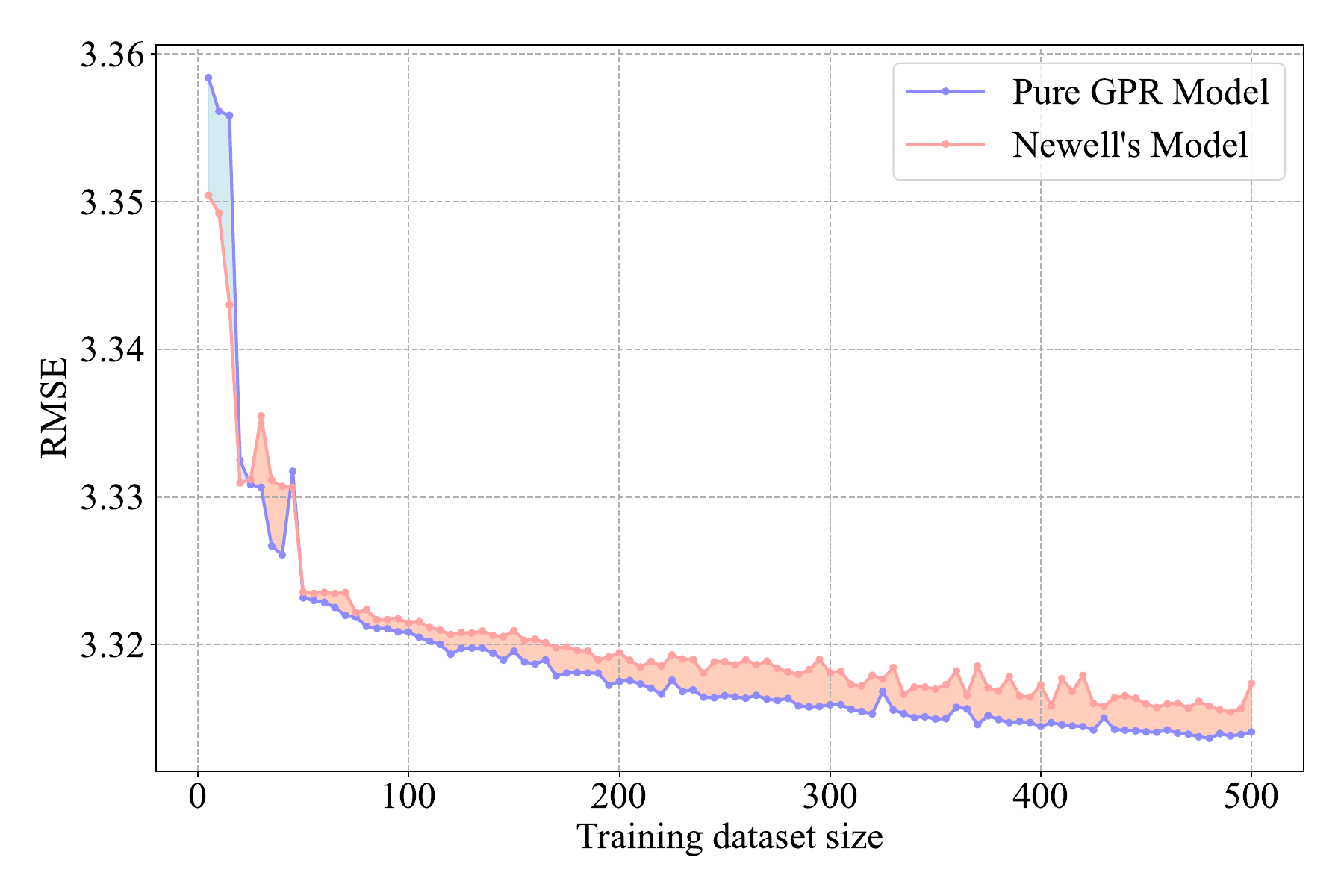}
        \caption{}
        \label{figure:51}
    \end{subfigure}

    \begin{subfigure}{.27\textwidth}
        \centering
        \includegraphics[width=1.0\textwidth]{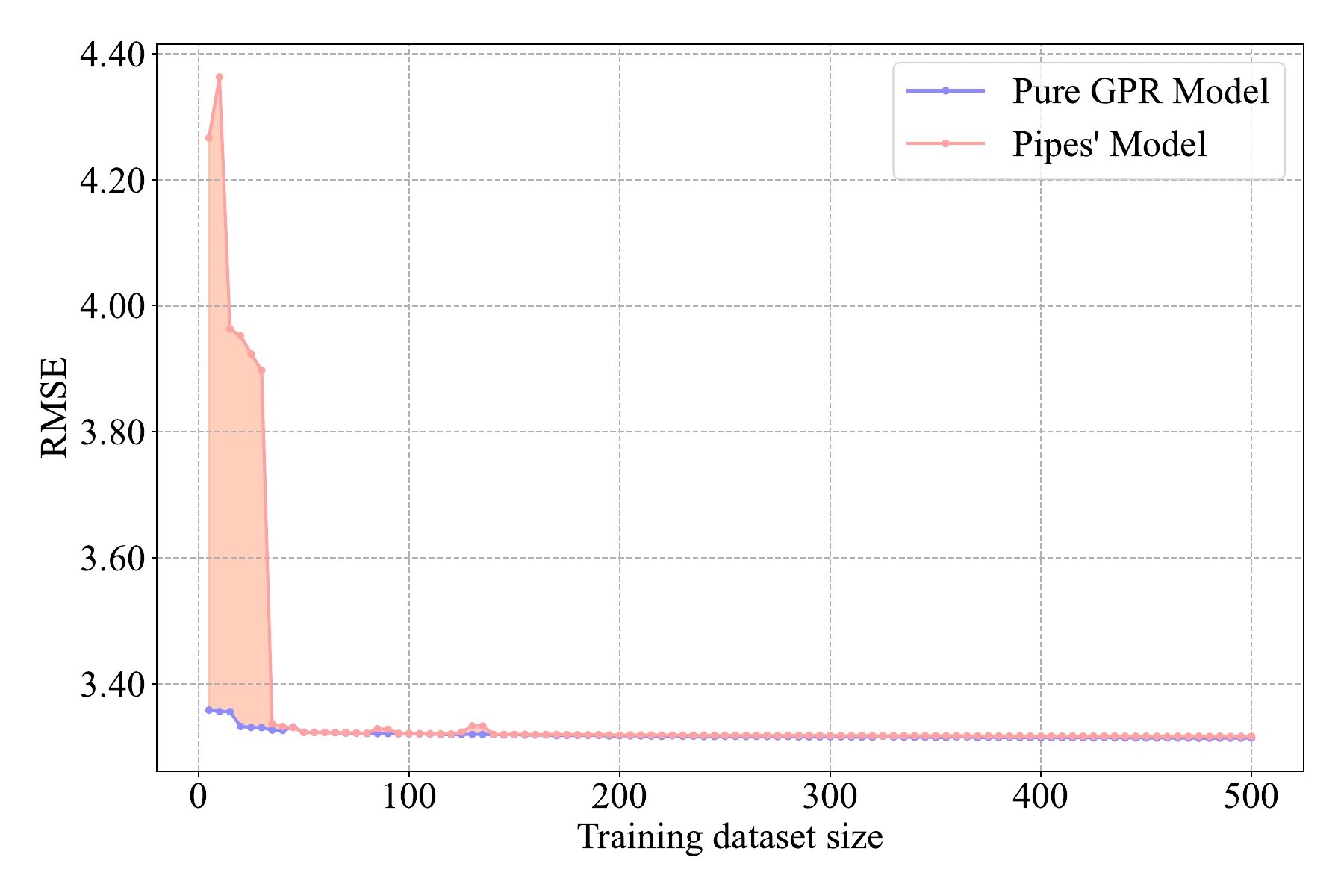}
        \caption{}
        \label{figure:52}
    \end{subfigure}
    \begin{subfigure}{.27\textwidth}
        \centering
        \includegraphics[width=1.0\textwidth]{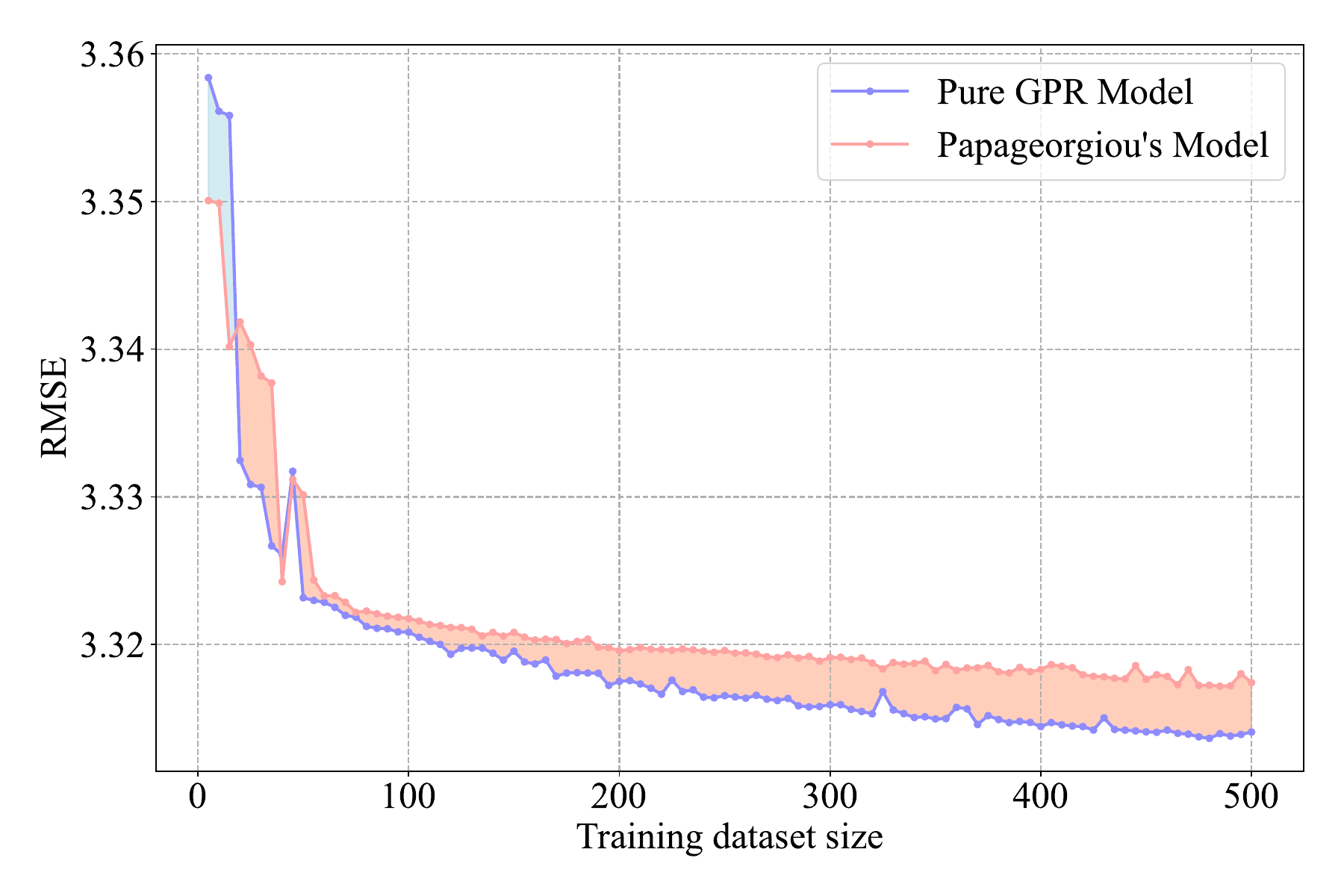}
    \caption{}
    \label{figure:53}
    \end{subfigure}
    \begin{subfigure}{.27\textwidth}
        \centering
        \includegraphics[width=1.0\textwidth]{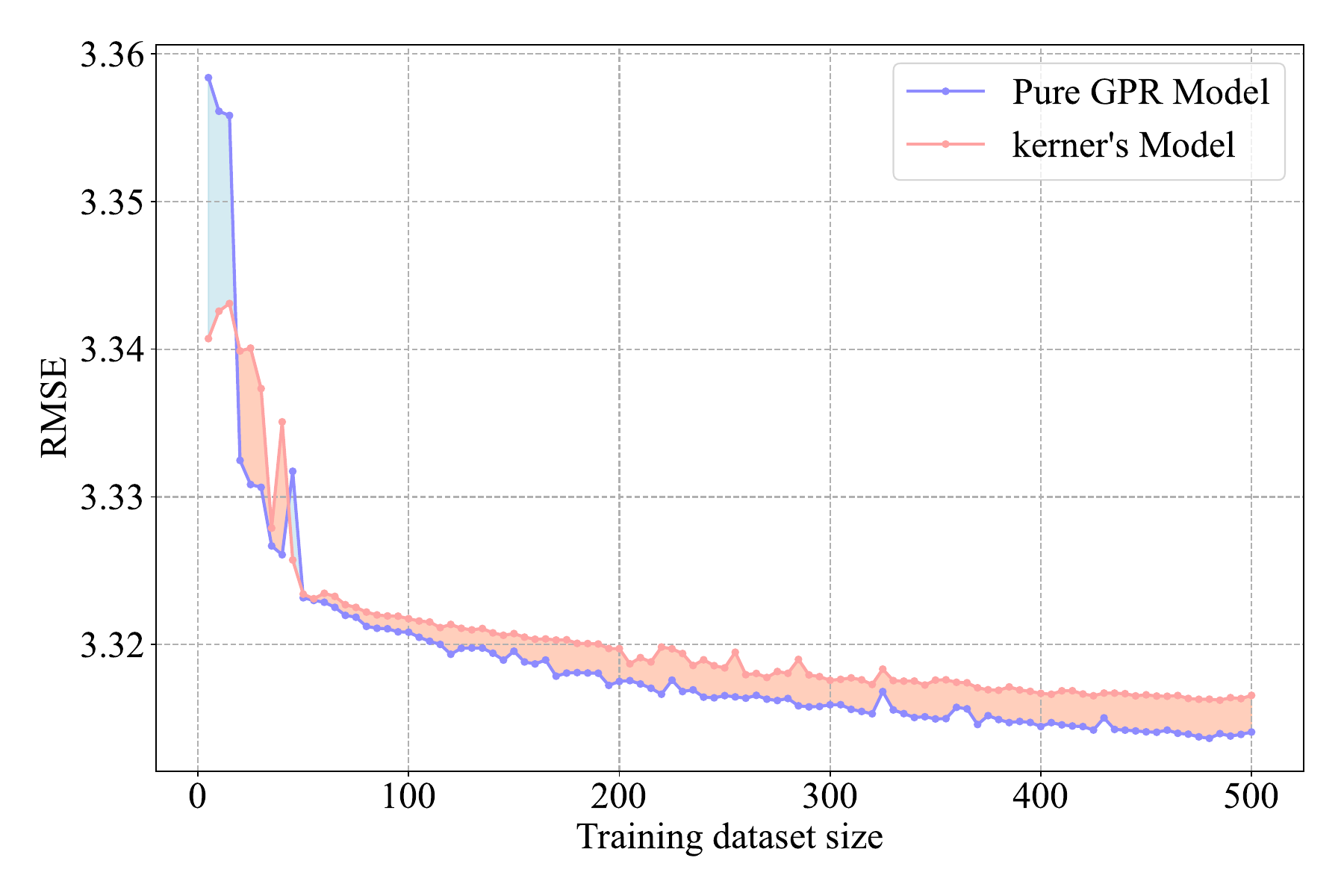}
        \caption{}
        \label{figure:54}
    \end{subfigure}
    
    \begin{subfigure}{.27\textwidth}
        \centering
        \includegraphics[width=1.0\textwidth]{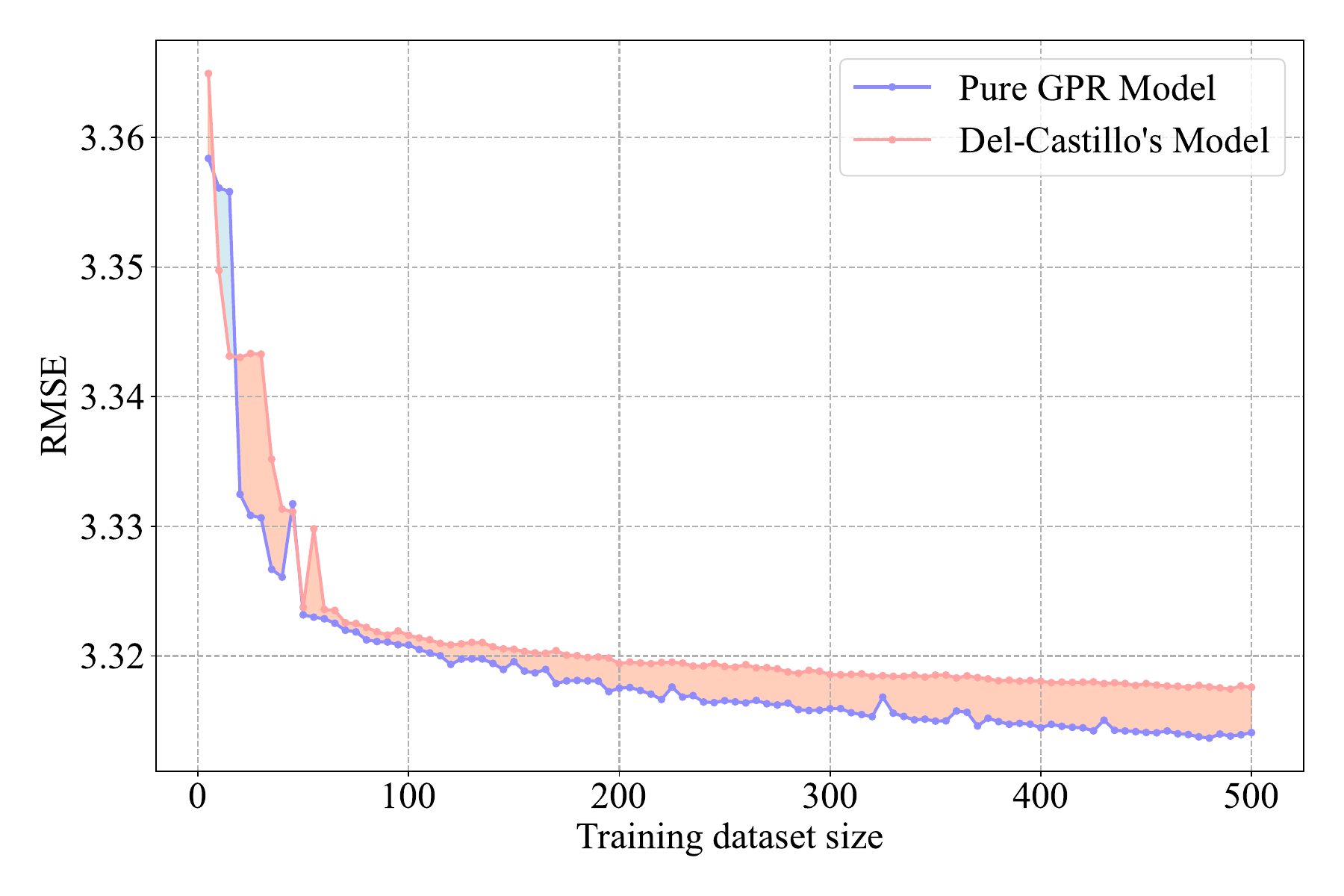}
        \caption{}
        \label{figure:55}
    \end{subfigure}
    \begin{subfigure}{.27\textwidth}
        \centering
        \includegraphics[width=1.0\textwidth]{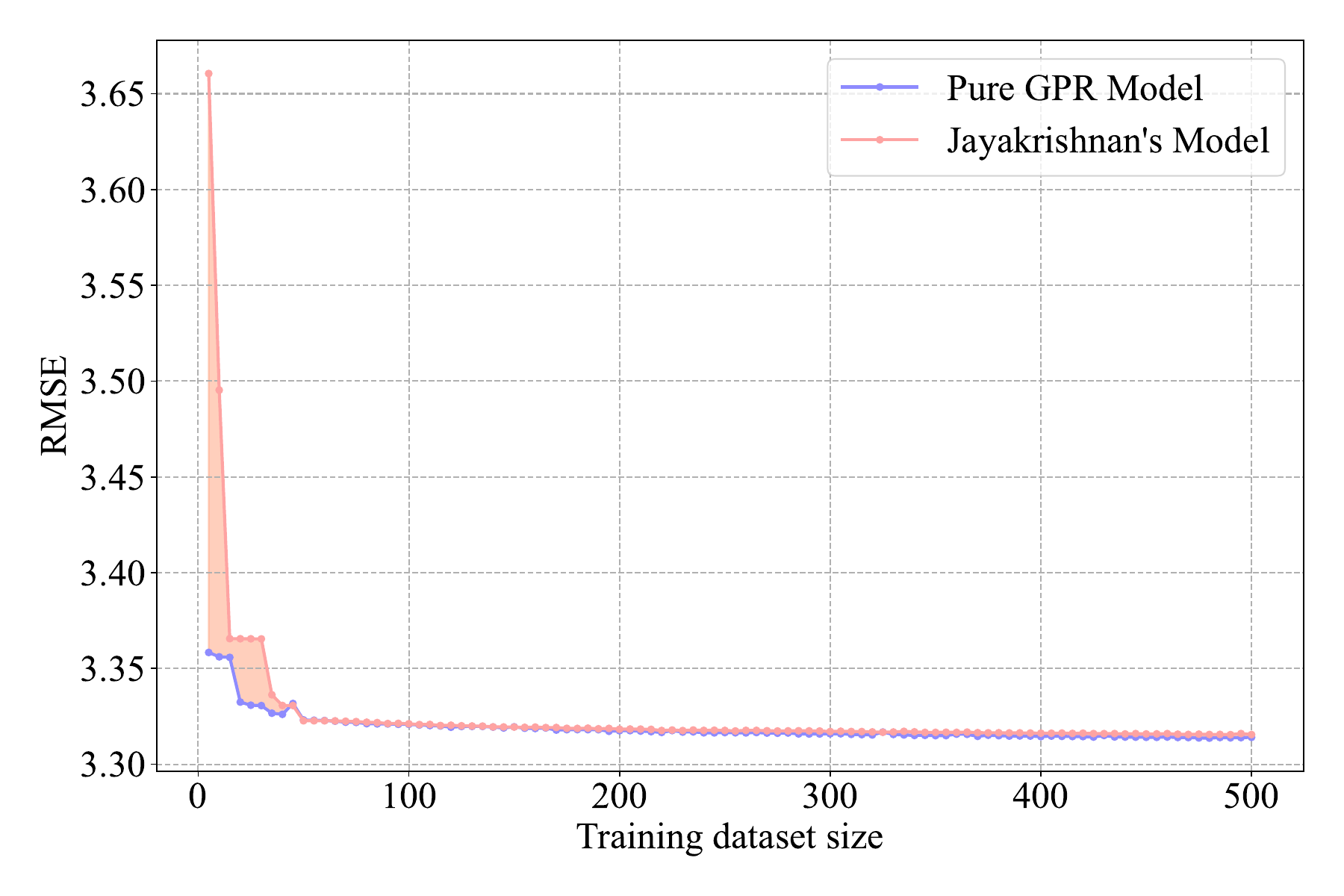}
    \caption{}
    \label{figure:56}
    \end{subfigure}
    
    \hfill
    \caption{RSME comparison between pure GPR and EPGPR model under different training dataset size}
    \label{figure:57}
\end{figure}

\begin{figure}[htbp]
    \centering
    \begin{subfigure}{.27\textwidth}
        \centering
        \includegraphics[width=1.0\textwidth]{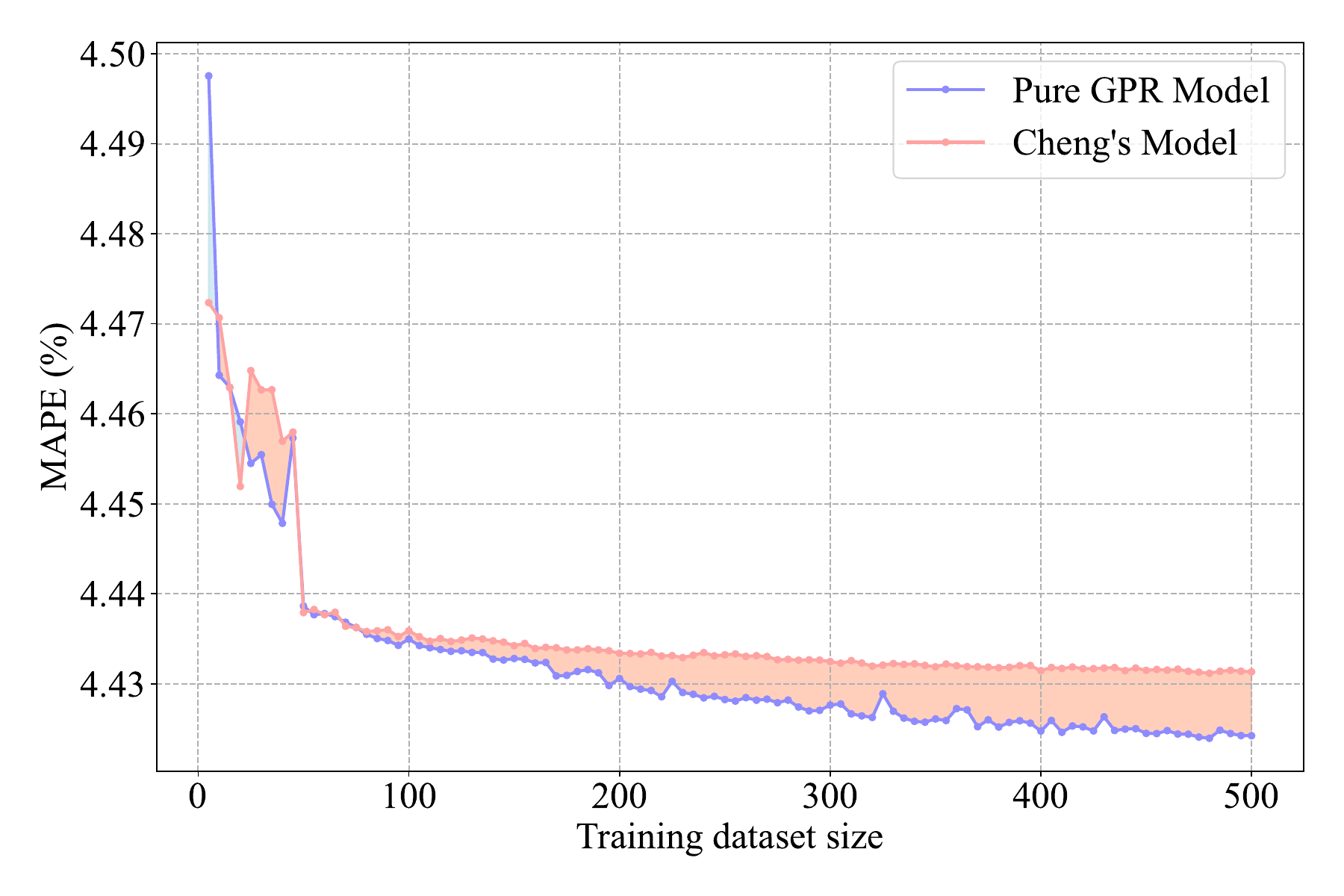}
        \caption{}
        \label{figure:58}
    \end{subfigure}
    \begin{subfigure}{.27\textwidth}
        \centering
        \includegraphics[width=1.0\textwidth]{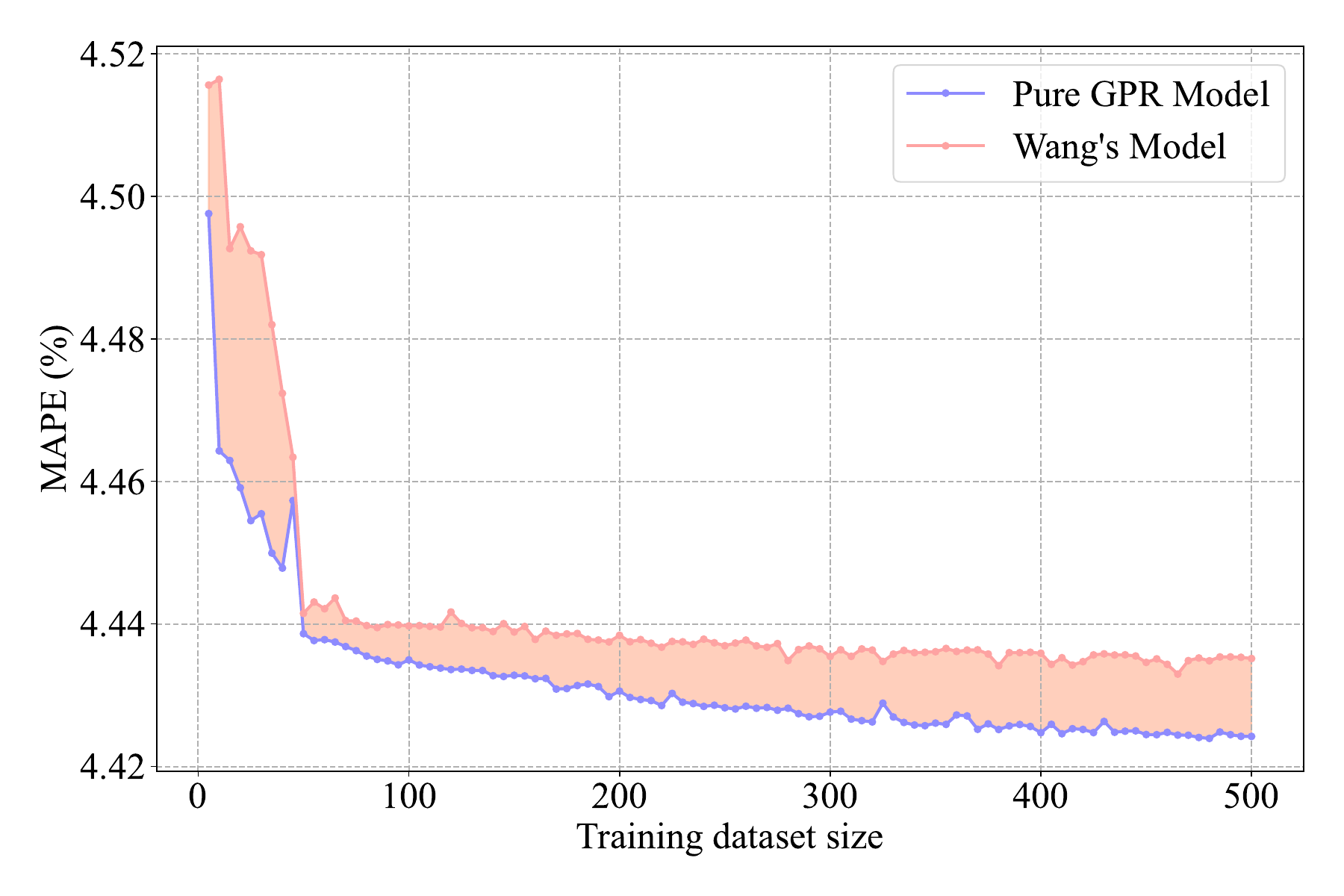}
        \caption{}
        \label{figure:59}
    \end{subfigure}
    \begin{subfigure}{.27\textwidth}
        \centering
        \includegraphics[width=1.0\textwidth]{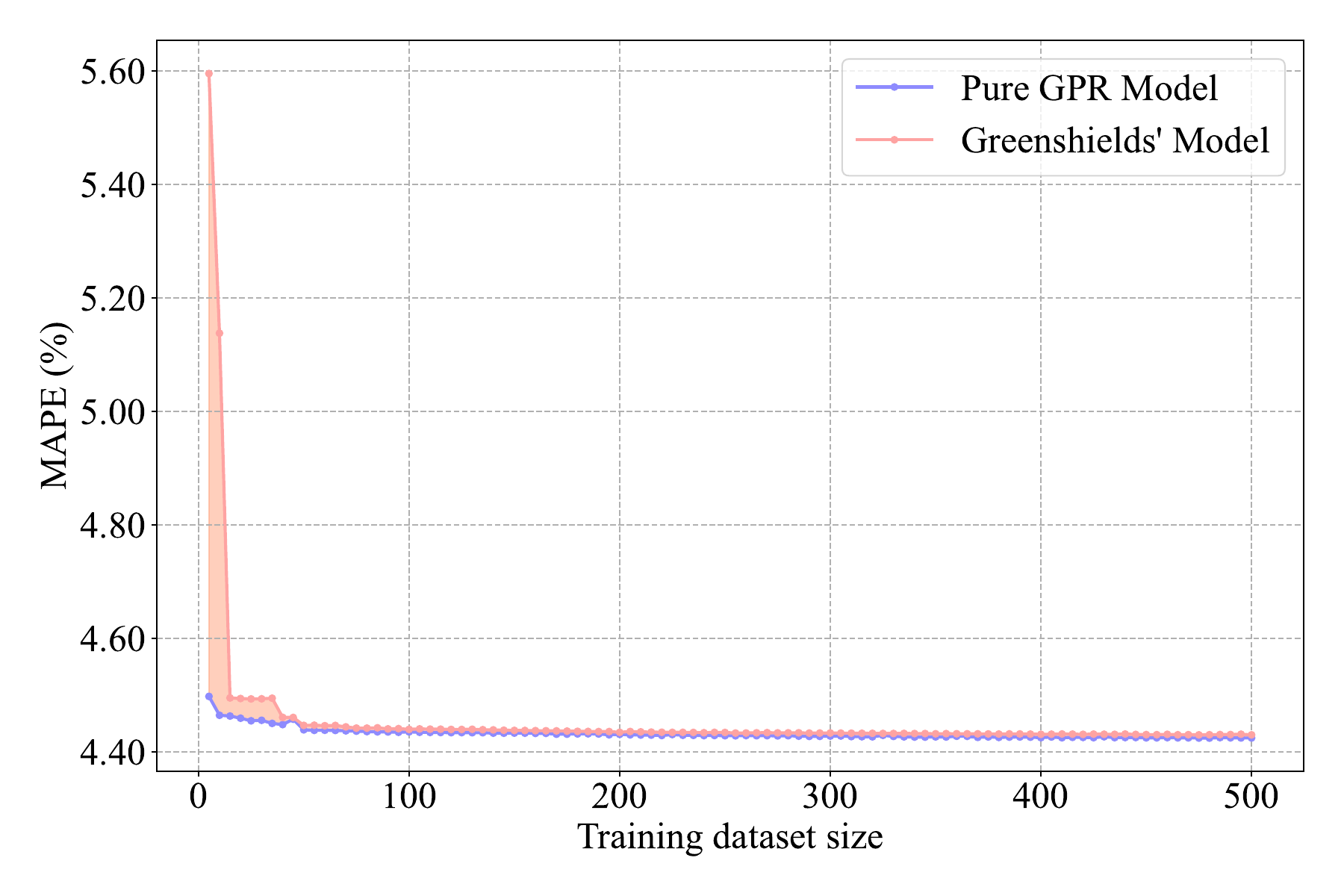}
         \caption{}
        \label{figure:60}
        \label{fig:newell}
    \end{subfigure}

    \begin{subfigure}{.27\textwidth}
        \centering
        \includegraphics[width=1.0\textwidth]{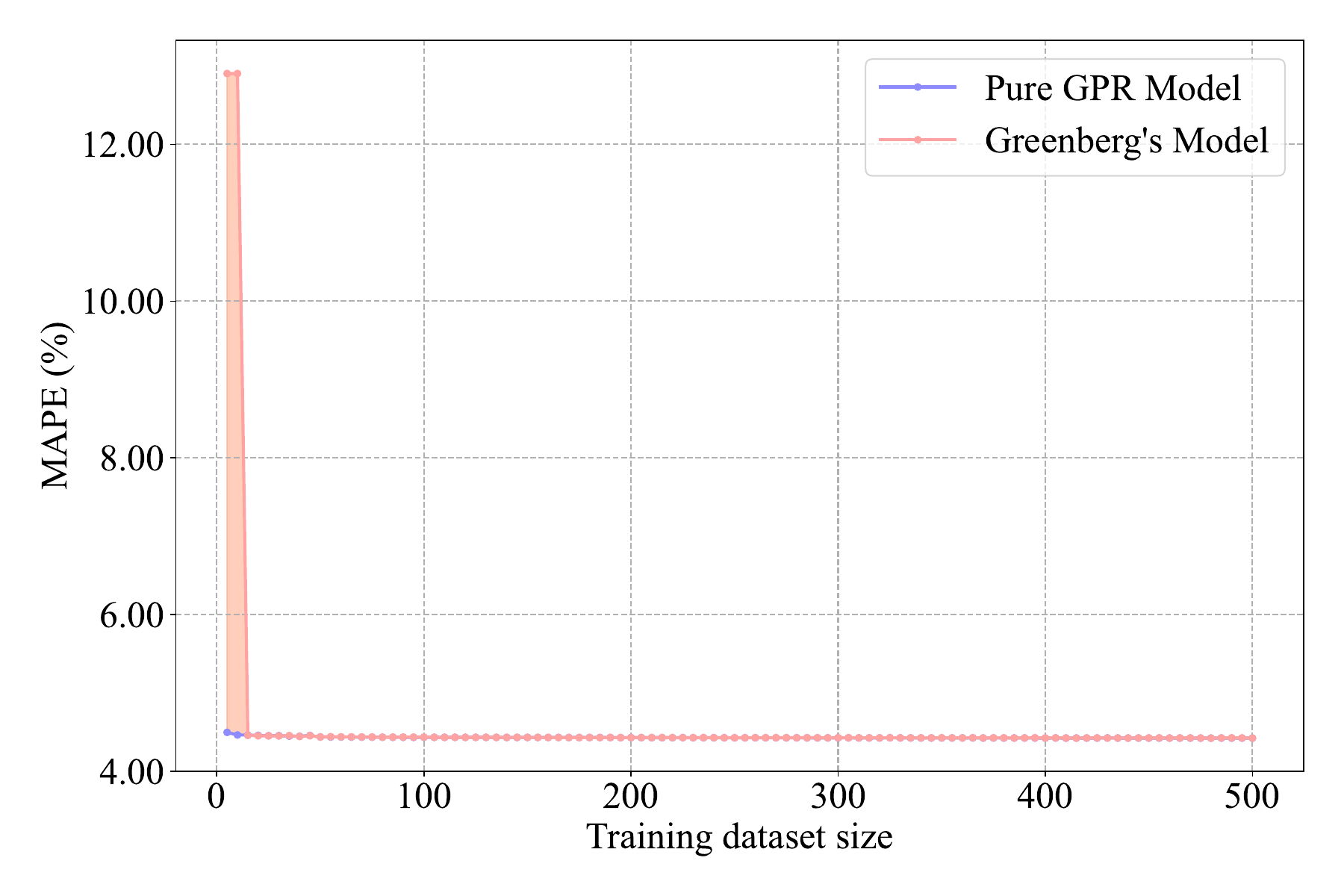}
        \caption{}
        \label{figure:61}
    \end{subfigure}
    \begin{subfigure}{.27\textwidth}
        \centering
        \includegraphics[width=1.0\textwidth]{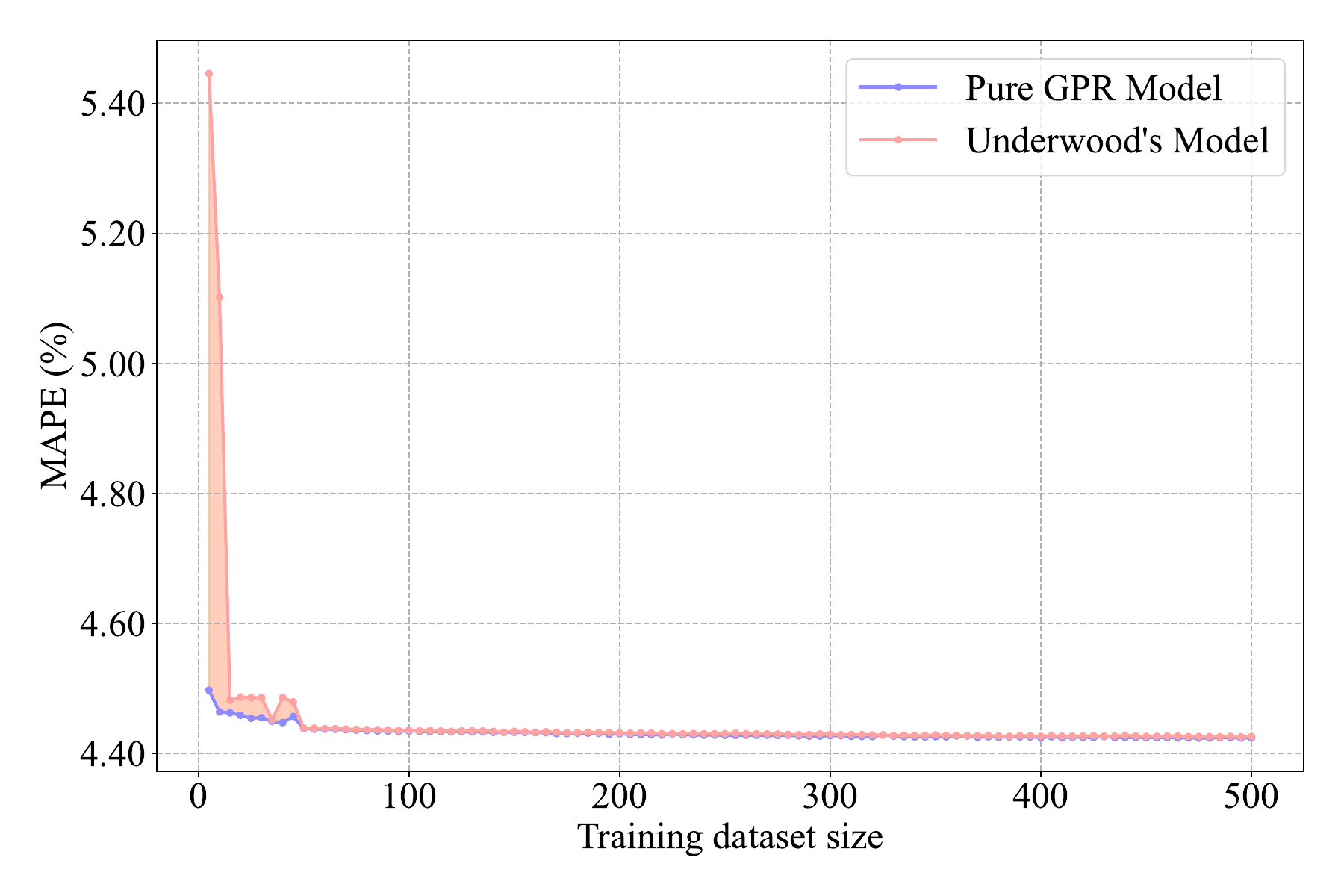}
        \caption{}
        \label{figure:62}
    \end{subfigure}
    \begin{subfigure}{.27\textwidth}
        \centering
        \includegraphics[width=1.0\textwidth]{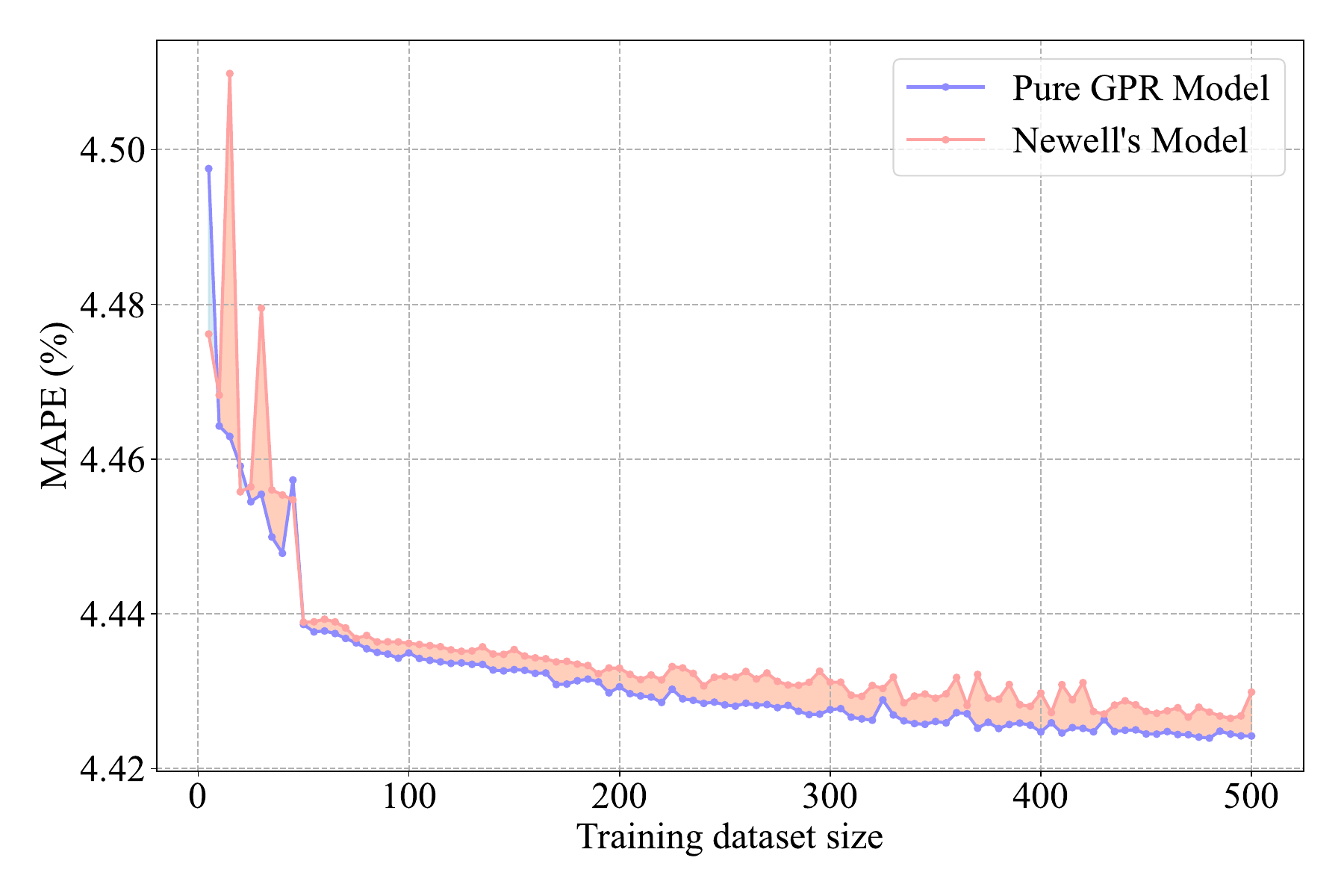}
        \caption{}
        \label{figure:63}
    \end{subfigure}

    \begin{subfigure}{.27\textwidth}
        \centering
        \includegraphics[width=1.0\textwidth]{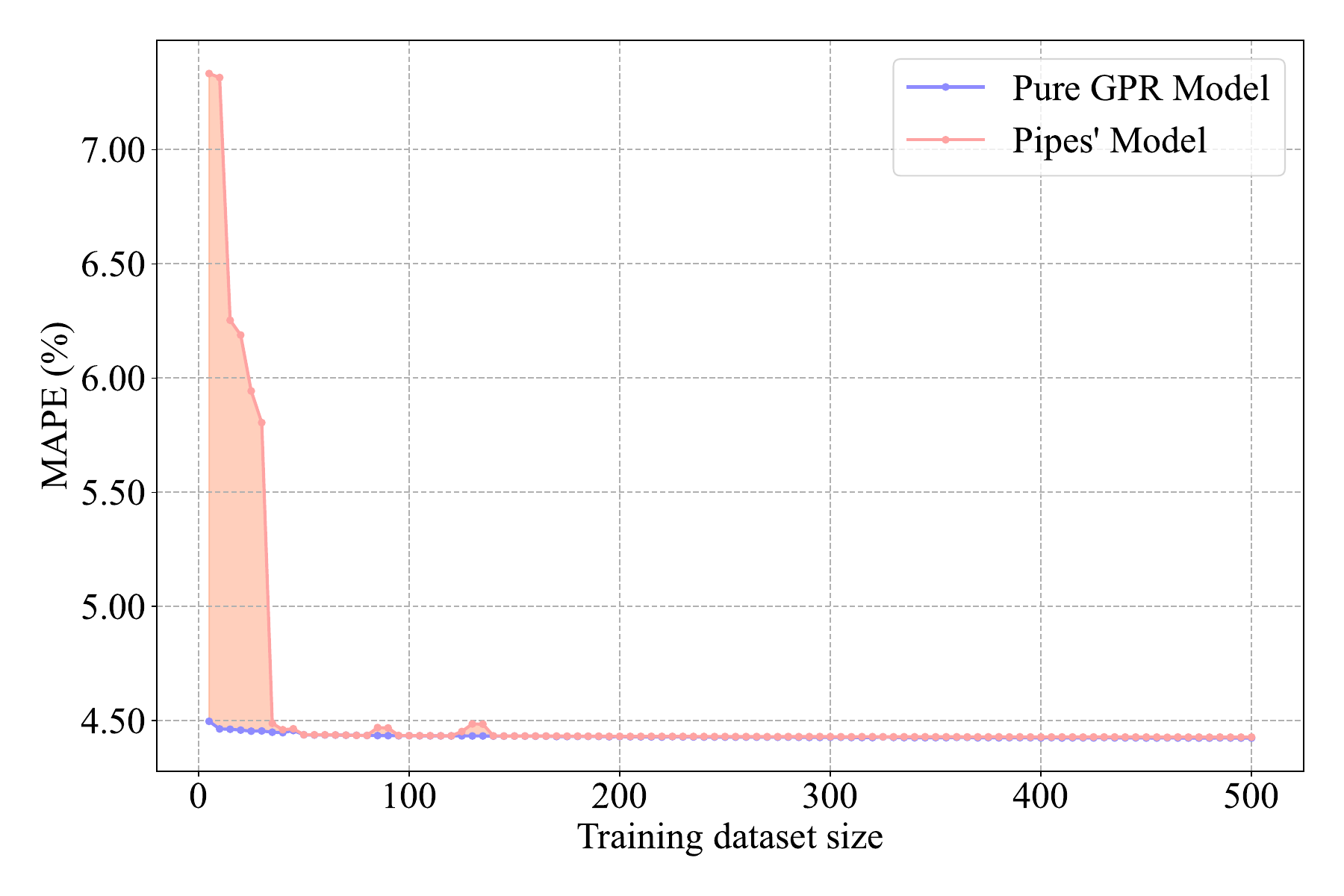}
        \caption{}
        \label{figure:64}
    \end{subfigure}
    \begin{subfigure}{.27\textwidth}
        \centering
        \includegraphics[width=1.0\textwidth]{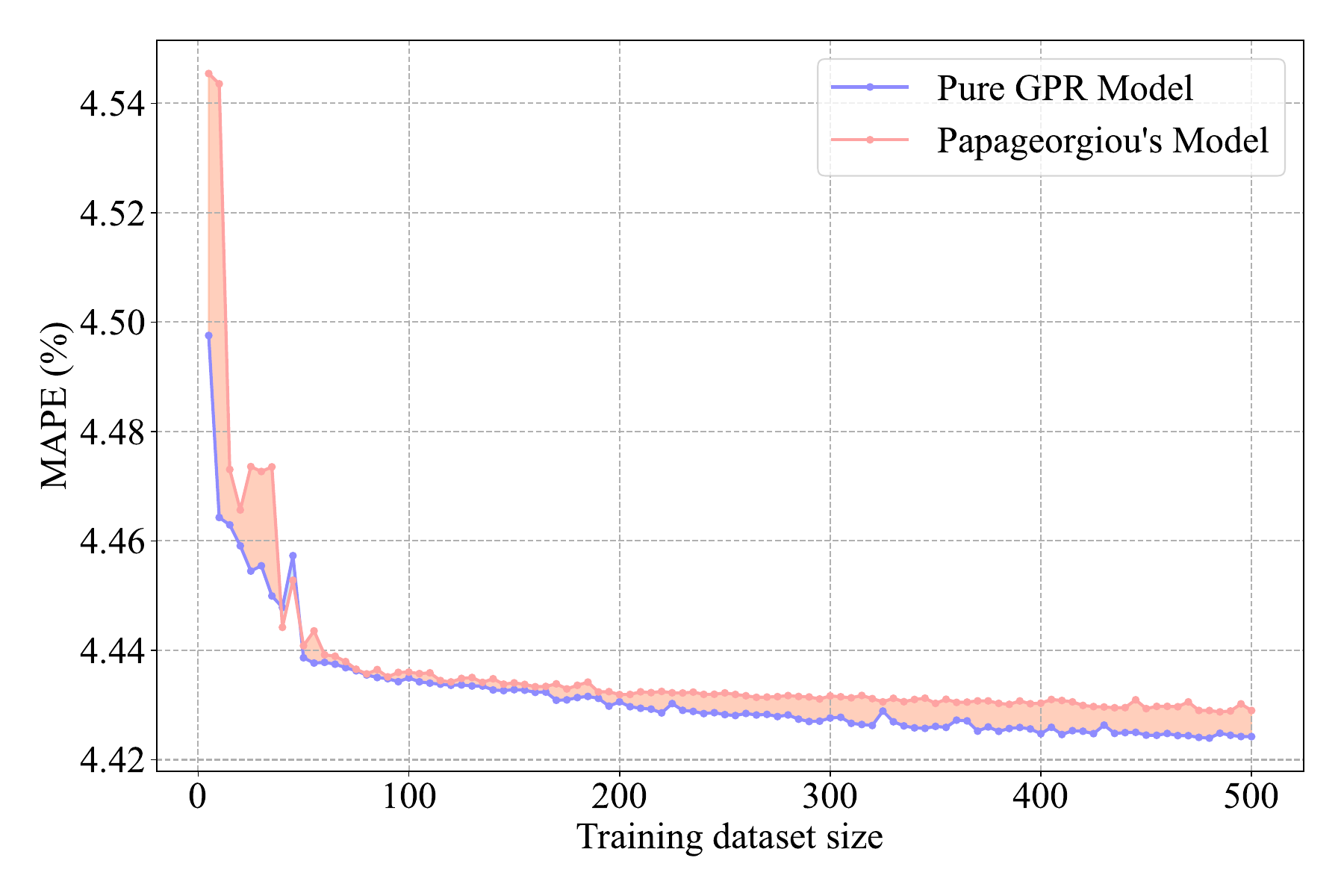}
    \caption{}
    \label{figure:65}
    \end{subfigure}
    \begin{subfigure}{.27\textwidth}
        \centering
        \includegraphics[width=1.0\textwidth]{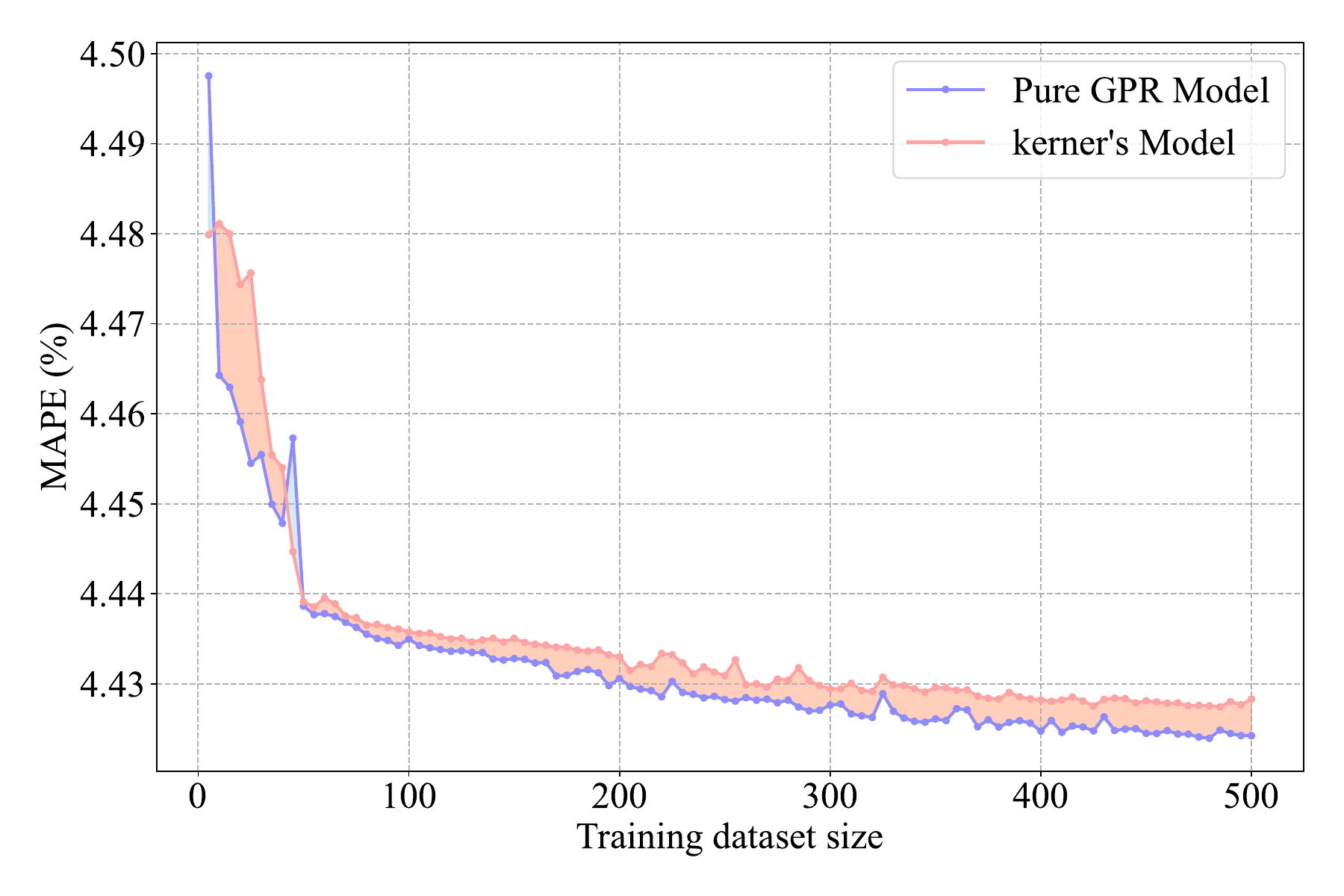}
        \caption{}
        \label{figure:66}
    \end{subfigure}
    
    \begin{subfigure}{.27\textwidth}
        \centering
        \includegraphics[width=1.0\textwidth]{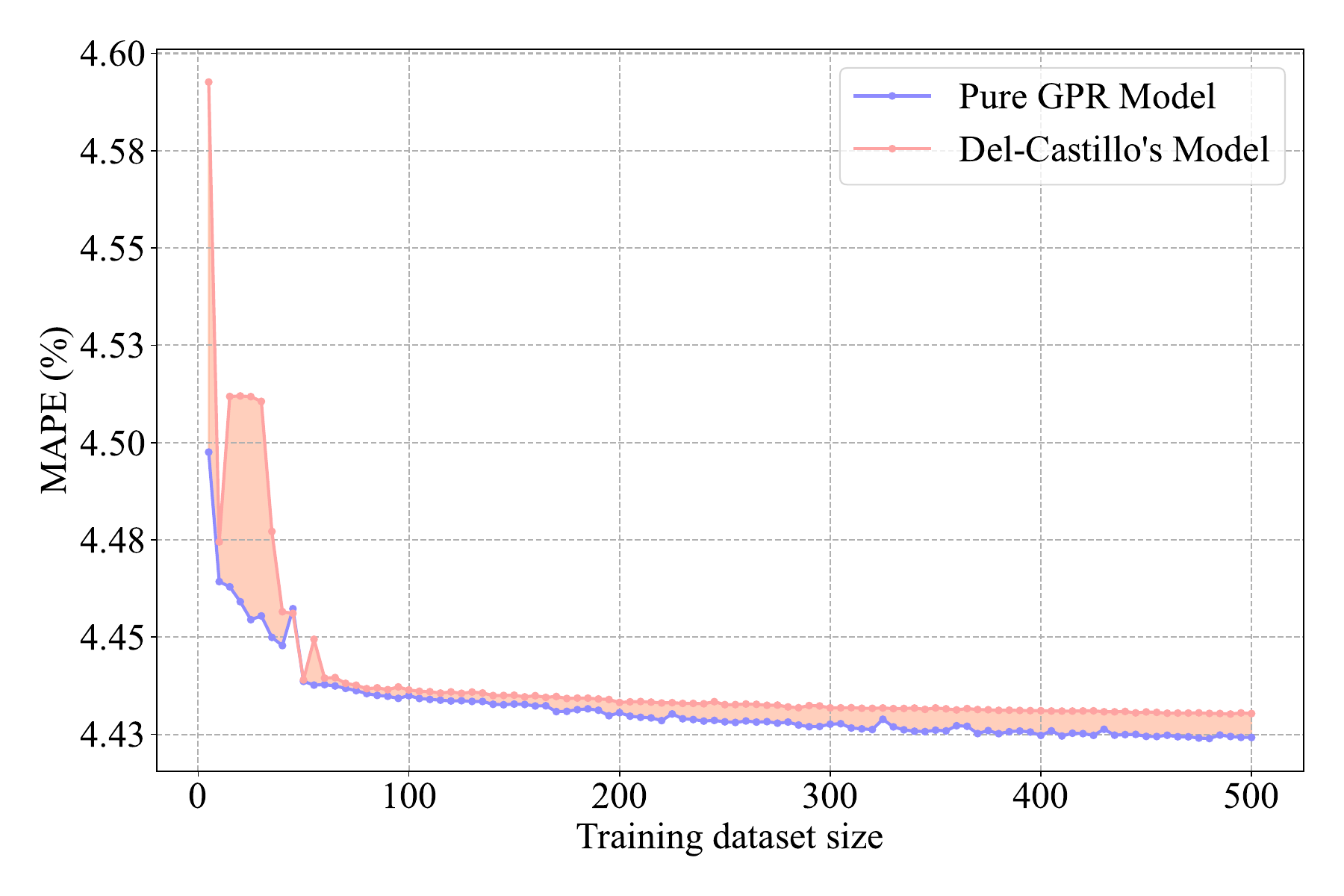}
        \caption{}
        \label{figure:67}
    \end{subfigure}
    \begin{subfigure}{.27\textwidth}
        \centering
        \includegraphics[width=1.0\textwidth]{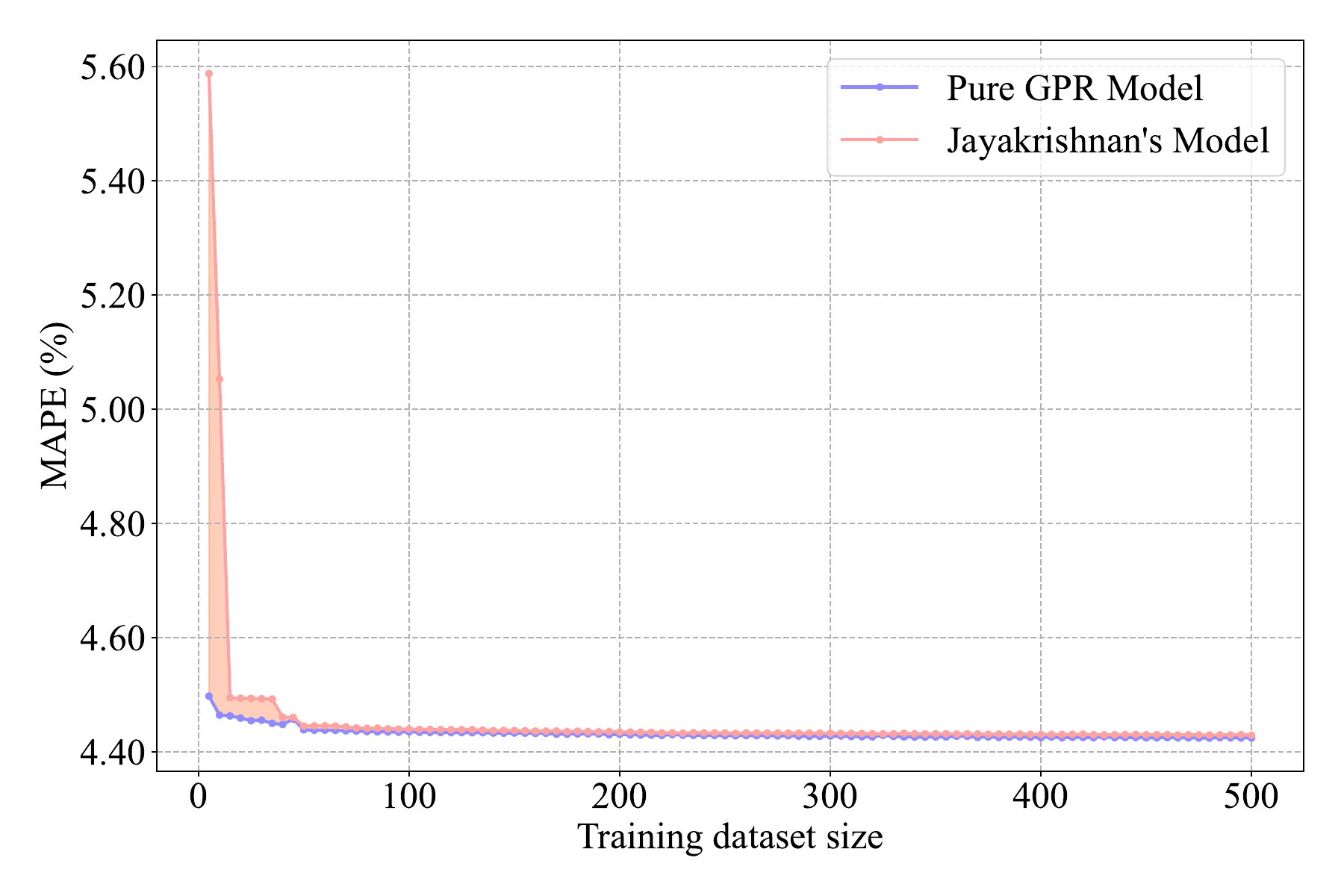}
    \caption{}
    \label{figure:68}
    \end{subfigure}
    
    \hfill
    \caption{MAPE comparison between pure GPR and EPGPR model under different training dataset size}
    \label{figure:69}
\end{figure}
\section{Conclusion} \label{5}
\par This research employs a sparse Gaussian process regression framework to analyze the stochastic fundamental diagram model. SGPR, a non-parametric regression technique, counters its intrinsic high computational requirements with a sparse approximation strategy, thereby enhancing fitting accuracy beyond that of all empirical models discussed herein. The Gaussian process distinguishes itself from other non-parametric regressions by its inherent stochastic nature, which is well-suited for modeling the speed distribution at a given traffic density.
\par The integration of the empirical model into the prior demonstrates that empirical models, derived and calibrated from comprehensive datasets, improve fitting performance over fully data-driven non-parametric regression models only when applied to smaller training datasets. It is noteworthy that the precision of an empirical model does not invariably translate to superior performance in fitting empirical data. The findings suggest that a moderately sized training dataset of relatively clean data suffices for the pure GPR model to represent the stochastic relationship between speed and density accurately.
\par Looking ahead, future investigations could incorporate methodologies from information theory and statistics to enhance the selection of the input-inducing variables that convey greater informational content about the data. Instead of solely encoding empirical knowledge into the mean function, as explored in this study, it may prove beneficial to embed such knowledge within the kernel function, potentially improving the precision and robustness of non-parametric regression models.
\section{CRediT} \label{6}
\textbf{Yuan-Zheng Lei}: Conceptualization, Methodology, Writing - original draft. \textbf{Yaobang Gong}: Conceptualization, Writing - original draft. \textbf{Xianfeng Terry Yang}: Conceptualization, Methodology and Supervision.

\section{Model Source Code} \label{7}
The source code of the proposed model is available for readers to download via the following link: \url{https://github.com/YuanzhengLei/Unraveling-stochastic-fundamental-diagrams-considering-empirical-knowledge}.

\section{Acknowledgement} \label{8}
This research is supported by the award "CAREER: Physics Regularized Machine Learning Theory: Modeling Stochastic Traffic Flow Patterns for Smart Mobility Systems (\# 2234289)" which is funded by the National Science Foundation. All authors would like to thank Professor Daiheng Ni of the University of Massachusetts Amherst for sharing the dataset kindly. 

\newpage
\section{Appendix} \label{9}
\begin{table}[hthp]
  \centering
  \caption{Parameters of single-regime traffic flow model after calibration\textsuperscript{*}}
  \label{Table:7}
  \begin{tabular}{>{\centering\arraybackslash}p{0.35\linewidth} >{\centering\arraybackslash}p{0.65\linewidth}}
    \thicktoprule
    \textbf{Model} &  \textbf{Parameters}\\
    \midrule
    \cite{greenshields1935study} & \sout{$v_{f} = 18.22\ \rho_{j} = -111232.99$} \\
    \cite{greenberg1959analysis} & \sout{$v_{critical} = -29357.73\ \rho_{j} = -29083.32$} \\
    \cite{newell1961nonlinear} & $v_{f} = 44.66\ \rho_{j} = 107.05\ \lambda = 3063.25$\\
            \cite{underwood1961speed} & \sout{$v_{f} = 49.35\ \rho_{critical} = -29083.32$} \\
        \cite{drake1965statistical} & $v_{f} = 49.35\ \rho_{critical} = 214.11$ \\
        \cite{papageorgiou1989macroscopic} & $v_{f} = 47.92\ \rho_{j} = 102.13\ \alpha = 1.09$\\
        \cite{kerner1994structure} & $v_{f} = 41.34\ \rho_{critical} = 391.92$ \\
       \cite{del1995functional-part-1}   & $v_{f} = 44.66\ \rho_{j} = 485.62\ v_{j} = 6.31$ \\
       \cite{jayakrishnan1995dynamic}  & $v_{f} = 33.72\ v_{min} = 37.35\ \rho_{j} = 27.30$ \\
       \cite{ardekani2008modified} & $v_{critical} = 46.64\ \rho_{j} = 52.24\ \rho_{min} = 0.38$ \\
       \cite{macnicholas2011simple} & \sout{$v_{f} = 45.22\ \rho_{j} = 10544.64\ n = 1.95\ m = 12412.20$} \\
       \cite{pipes1966car} & $v_{f} = 46.89\ \rho_{j} = 132.87\ n = 0.69$ \\
       \cite{wang2011logistic} & $v_{f} = 42.72\ v_{critical} = 6.64\ \rho_{critical} = 40.97\ \theta_{1} = 3.50\ \theta_{2} = 0.08$ \\
       \cite{cheng2021s} & $v_{f} = 44.42\ \rho_{critical} = 80.09 \ m = 2.21$ \\
    \thickbottomrule
  \end{tabular}
  \footnotesize\textsuperscript{*} Parameters are calibrated based on I-80 detector data (Station id:1).
\end{table}
\begin{table}[htbp]
\centering
\caption{Comparison of the result of different EPNR models with Exponential kernel function (I-80)}
\label{Table:8}
\begin{tabularx}{\textwidth}{Z*{8}{Y}}
\thicktoprule
\textbf{EPNR Model} & \multicolumn{4}{c}{\textbf{Speed RMSE (mph)}} & \multicolumn{4}{c}{\textbf{Speed MAPE}} \\ \cline{2-9}
& $\mathcal{RS}$ & $\mathcal{SS}$ & $\mathcal{CS}$ & $\mathcal{WRS}$ & $\mathcal{RS}$ & $\mathcal{SS}$ & $\mathcal{CS}$ & $\mathcal{WRS}$ \\ \midrule
Pure GP  & 5.27  & 5.27  & 5.26 & 5.27 & 12.74\% & 12.74\% & 12.73\% & 12.74\% \\ 
Cheng's  & 5.27  & 5.27  & 5.27 & 5.27 & 12.76\% & 12.76\% & 12.75\% & 12.75\% \\ 
Wang's & 5.27  & 5.27  & 5.27 & 5.27 & 12.75\% & 12.75\% & 12.75\% & 12.75\% \\
Newell's  & 5.27  & 5.27  & 5.27 & 5.27 & 12.76\% & 12.76\% & 12.74\% & 12.75\% \\ 
Papageorgiou's & 5.27  & 5.26  & 5.27 & 5.27 & 12.76\% & 12.76\% & 12.75\% & 12.75\% \\ 
Kerner's & 5.27  & 5.27  & 5.27 & 5.27 & 12.75\% & 12.75\% & 12.75\% & 12.75\% \\ 
Del-Castillo's  & 5.27  & 5.27  & 5.27 & 5.27 & 12.76\% & 12.75\% & 12.75\% & 12.75\% \\ 
Jayakrishnan's & 5.27  & 5.27  & 5.27 & 5.27 & 12.76\% & 12.76\% & 12.75\% & 12.75\% \\ 
\thickbottomrule
\end{tabularx}
\end{table}
\begin{table}[htbp]
\centering
\caption{Comparison of the result of percentages of points falling within the 95\% confidence interval}
\label{Table:9}
\begin{tabularx}{\textwidth}{Z*{4}{Y}}
\thicktoprule
\textbf{EPNR Model} & \multicolumn{4}{c}{\textbf{PWCI}} \\ \cline{2-5}
& $\mathcal{RS}$ & $\mathcal{SS}$ & $\mathcal{CS}$ & $\mathcal{WRS}$ \\ \midrule
Pure GP & 95.79\% & 95.78\% & 95.77\% & 95.80\% \\ 
Cheng's & 95.80\% & 95.80\% & 95.78\% & 95.80\% \\ 
Wang's & 95.79\% & 95.79\% & 95.79\% & 95.79\% \\
Newell's & 95.79\% & 95.79\% & 95.78\% & 95.80\% \\ 
Papageorgiou's  & 95.78\% & 86.55\% & 95.78\% & 95.79\% \\ 
Kerner's  & 95.79\% & 95.79\% & 95.80\% & 95.79\% \\ 
Del-Castillo's  & 95.79\% & 95.80\% & 95.79\% & 95.80\% \\ 
Jayakrishnan's  & 95.78\% & 95.78\% & 95.77\% & 95.78\% \\ 
\thickbottomrule
\end{tabularx}
\end{table}

\begin{lemma}
   \begin{equation}
        q(\mathbf{f}_{*}) \approx p(\mathbf{f}_{*}|\mathbf{y}), q(\mathbf{f}_{*}) =  \int p(\mathbf{f}_{*}|\mathbf{f}_{m})\phi(\mathbf{f}_{m})d\mathbf{f}_{m} \label{eq:39}
   \end{equation} \label{lemma:1}
\end{lemma}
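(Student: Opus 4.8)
The plan is to obtain $q(\mathbf{f}_{*})$ directly from the exact marginalization in Eq \ref{eq:20} by invoking the defining conditional-independence assumption of inducing-point sparse Gaussian processes and then replacing the single intractable factor with its variational surrogate. First I would take as given the exact identity already displayed in Eq \ref{eq:20},
\begin{equation}
p(\mathbf{f}_{*}|\mathbf{y}) = \int p(\mathbf{f}_{*}|\mathbf{f}_{m},\mathbf{f})\, p(\mathbf{f}|\mathbf{f}_{m},\mathbf{y})\, p(\mathbf{f}_{m}|\mathbf{y})\, d\mathbf{f}\, d\mathbf{f}_{m},
\end{equation}
which follows from the chain rule applied to the jointly Gaussian collection $(\mathbf{f},\mathbf{f}_{m},\mathbf{f}_{*})$ together with marginalization over $\mathbf{f}$ and $\mathbf{f}_{m}$. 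The central modeling step is to impose $p(\mathbf{f}_{*}|\mathbf{f}_{m},\mathbf{f}) \approx p(\mathbf{f}_{*}|\mathbf{f}_{m})$, i.e.\ the inducing variables $\mathbf{f}_{m}$ act as a sufficient statistic so that, conditioned on them, the test latent $\mathbf{f}_{*}$ carries no residual dependence on the training latents $\mathbf{f}$.

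With this substitution the factor $p(\mathbf{f}_{*}|\mathbf{f}_{m})$ is independent of $\mathbf{f}$ and can be pulled outside the inner integral, so that
\begin{equation}
p(\mathbf{f}_{*}|\mathbf{y}) \approx \int p(\mathbf{f}_{*}|\mathbf{f}_{m}) \left[ \int p(\mathbf{f}|\mathbf{f}_{m},\mathbf{y})\, d\mathbf{f} \right] p(\mathbf{f}_{m}|\mathbf{y})\, d\mathbf{f}_{m}.
\end{equation}
Because $p(\mathbf{f}|\mathbf{f}_{m},\mathbf{y})$ is a normalized density in $\mathbf{f}$, the bracketed inner integral equals one, collapsing the expression to $\int p(\mathbf{f}_{*}|\mathbf{f}_{m})\, p(\mathbf{f}_{m}|\mathbf{y})\, d\mathbf{f}_{m}$. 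The remaining obstruction is that $p(\mathbf{f}_{m}|\mathbf{y})$ is itself intractable; I would therefore replace it by the Gaussian variational approximation $\phi(\mathbf{f}_{m})$ of Eq \ref{eq:22}, which yields exactly $q(\mathbf{f}_{*}) = \int p(\mathbf{f}_{*}|\mathbf{f}_{m})\phi(\mathbf{f}_{m})\, d\mathbf{f}_{m}$ as the claimed approximation to $p(\mathbf{f}_{*}|\mathbf{y})$.

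The step I expect to be the crux is the conditional-independence assumption, since this is where all approximation error is introduced: it is exact only in the degenerate limit where $\mathbf{f}_{m}$ fully determines $\mathbf{f}$ on the training inputs. Following \cite{titsias2009variational}, I would stress that the $\approx$ is to be read in a variational sense rather than as an uncontrolled heuristic: the inducing locations $\mathbf{x}_{m}$ and the parameters of $\phi$ are afterwards chosen to minimize the Kullback--Leibler gap created by this very factorization, which is the content of the variational bound derived in Lemma \ref{lemma:2}. Both surviving integrals are ordinary Gaussian marginalizations under the exponential kernel of Eq \ref{eq:36}, so once the assumption is granted the remaining manipulations are routine and produce the closed-form moments in Eqs \ref{eq:24}--\ref{eq:25}.
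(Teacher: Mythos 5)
Your proposal follows essentially the same route as the paper's proof: start from the exact marginalization in Eq \ref{eq:20}, invoke the conditional independence $p(\mathbf{f}_{*}|\mathbf{f}_{m},\mathbf{f}) = p(\mathbf{f}_{*}|\mathbf{f}_{m})$, integrate out $\mathbf{f}$, and replace the intractable $p(\mathbf{f}_{m}|\mathbf{y})$ by $\phi(\mathbf{f}_{m})$. If anything, your version is slightly more careful than the paper's: you collapse the inner integral by observing that $p(\mathbf{f}|\mathbf{f}_{m},\mathbf{y})$ is already a normalized density in $\mathbf{f}$, and you correctly flag both the sufficient-statistic assumption and the substitution $\phi(\mathbf{f}_{m}) \approx p(\mathbf{f}_{m}|\mathbf{y})$ as the sources of approximation, whereas the paper's proof justifies the same steps by asserting a (literally false) marginal independence of $\mathbf{f}_{m}$ from $\mathbf{f}$ and writes $\phi(\mathbf{f}_{m}) = p(\mathbf{f}_{m}|\mathbf{y})$ as an equality, which would make $q(\mathbf{f}_{*})$ exactly equal to the posterior rather than an approximation of it.
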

\begin{proof}
\begin{align*}
    p(\mathbf{f}_{*}|\mathbf{y}) &= \int p(\mathbf{f}_{*}|\mathbf{f})p(\mathbf{f}|\mathbf{y})d\mathbf{f} \\
    &=  \int p(\mathbf{f}_{*}|\mathbf{f}_{m},\mathbf{f})p(\mathbf{f}|\mathbf{f}_{m},\mathbf{y})p(\mathbf{f}_{m}|\mathbf{y})d\mathbf{f}d\mathbf{f}_{m} 
\end{align*}
Since $\mathbf{f}_{m}$ is 
 independent from $\mathbf{f}$ and $\mathbf{y} = \mathbf{f} + \sigma$,so $\mathbf{f}_{m}$ is also
 independent from $\mathbf{y}$,thus, $p(\mathbf{f}_{*}|\mathbf{f}_{m}) = p(\mathbf{f}_{*}|\mathbf{f}_{m},\mathbf{f}),p(\mathbf{f}|\mathbf{f}_{m}) = p(\mathbf{f}|\mathbf{f}_{m},\mathbf{y})$, thus, we can have:
\begin{align*}
     p(\mathbf{f}_{*}|\mathbf{y}) &= \int p(\mathbf{f}_{*}|\mathbf{f}_{m})p(\mathbf{f}|\mathbf{f}_{m})p(\mathbf{f}_{m}|\mathbf{y})d\mathbf{f}d\mathbf{f}_{m}
\end{align*}
And since $\int p(\mathbf{f}|\mathbf{f}_{m}) d\mathbf{f} = 1, \phi(\mathbf{f}_{m}) = p(\mathbf{f}_{m}|\mathbf{y})$, thus:
\begin{align*}
     p(\mathbf{f}_{*}|\mathbf{y}) &= \int p(\mathbf{f}_{*}|\mathbf{f}_{m})p(\mathbf{f}_{m}|\mathbf{y})d\mathbf{f}_{m} \\
     &= \int p(\mathbf{f}_{*}|\mathbf{f}_{m})\phi(\mathbf{f}_{m})d\mathbf{f}_{m} \\
     &= q(\mathbf{f}_{*})
\end{align*}
\end{proof}
\begin{lemma}
Minimize $KL(q(\mathbf{f},\mathbf{f}_{m})\|p(\mathbf{f},\mathbf{f}_{m}|\mathbf{y}))$ is equivalent to maximize:
   \begin{equation}
   F_{V}(\mathbf{x}_{m},\phi) = \int p(\mathbf{f}|\mathbf{f}_{m})\phi(\mathbf{f}_{m})\log\frac{p(\mathbf{f}|\mathbf{y})p(\mathbf{f}_{m})}{\phi(\mathbf{f}_{m})}d\mathbf{f}d\mathbf{f}_{m}  \label{eq:40}
   \end{equation} \label{lemma:2}
\end{lemma}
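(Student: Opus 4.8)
The plan is to run the standard variational-inference decomposition of the log marginal likelihood, fixing the augmented variational posterior to the Titsias form $q(\mathbf{f},\mathbf{f}_{m}) = p(\mathbf{f}|\mathbf{f}_{m})\phi(\mathbf{f}_{m})$, in which the conditional prior $p(\mathbf{f}|\mathbf{f}_{m})$ is held fixed and only the free factor $\phi(\mathbf{f}_{m})$ and the inducing locations $\mathbf{x}_{m}$ are optimized. First I would expand the divergence and apply Bayes' rule $p(\mathbf{f},\mathbf{f}_{m}\mid\mathbf{y}) = p(\mathbf{y},\mathbf{f},\mathbf{f}_{m})/p(\mathbf{y})$ together with the normalization $\int q(\mathbf{f},\mathbf{f}_{m})\,d\mathbf{f}\,d\mathbf{f}_{m}=1$ to isolate the evidence:
\begin{align*}
KL\big(q(\mathbf{f},\mathbf{f}_{m})\,\big\|\,p(\mathbf{f},\mathbf{f}_{m}\mid\mathbf{y})\big)
&= \int q(\mathbf{f},\mathbf{f}_{m})\,\log\frac{q(\mathbf{f},\mathbf{f}_{m})\,p(\mathbf{y})}{p(\mathbf{y},\mathbf{f},\mathbf{f}_{m})}\,d\mathbf{f}\,d\mathbf{f}_{m} \\
&= \log p(\mathbf{y}) \;-\; \int q(\mathbf{f},\mathbf{f}_{m})\,\log\frac{p(\mathbf{y},\mathbf{f},\mathbf{f}_{m})}{q(\mathbf{f},\mathbf{f}_{m})}\,d\mathbf{f}\,d\mathbf{f}_{m}.
\end{align*}

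Naming the second integral $F_{V}(\mathbf{x}_{m},\phi)$ yields the exact identity $\log p(\mathbf{y}) = F_{V}(\mathbf{x}_{m},\phi) + KL(q\|p)$. Because the evidence $\log p(\mathbf{y})$ does not depend on $\mathbf{x}_{m}$ or $\phi$, minimizing the KL term is equivalent to maximizing $F_{V}$, which is the claim of the lemma. It then remains to collapse $F_{V}$ to the displayed integrand. Substituting $q(\mathbf{f},\mathbf{f}_{m}) = p(\mathbf{f}|\mathbf{f}_{m})\phi(\mathbf{f}_{m})$ and factoring the joint as $p(\mathbf{y},\mathbf{f},\mathbf{f}_{m}) = p(\mathbf{y}\mid\mathbf{f})\,p(\mathbf{f}\mid\mathbf{f}_{m})\,p(\mathbf{f}_{m})$, the conditional prior $p(\mathbf{f}\mid\mathbf{f}_{m})$ cancels between numerator and denominator, leaving
\begin{equation*}
F_{V}(\mathbf{x}_{m},\phi) = \int p(\mathbf{f}\mid\mathbf{f}_{m})\,\phi(\mathbf{f}_{m})\,\log\frac{p(\mathbf{y}\mid\mathbf{f})\,p(\mathbf{f}_{m})}{\phi(\mathbf{f}_{m})}\,d\mathbf{f}\,d\mathbf{f}_{m}.
\end{equation*}
This cancellation of $p(\mathbf{f}\mid\mathbf{f}_{m})$ is the structural feature that later makes the inner integral over $\mathbf{f}$ tractable in closed form and ultimately produces the optimal $\boldsymbol{\mu}_{m}$ and $\mathbf{A}_{m}$ of Lemma \ref{lemma:3}.

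The step I expect to be delicate is pinning down the likelihood factor in the numerator, not the evidence bookkeeping (which is routine). The decomposition above produces the observation likelihood $p(\mathbf{y}\mid\mathbf{f})$, whereas the statement writes $p(\mathbf{f}\mid\mathbf{y})$; by Bayes' rule these differ through the factor $p(\mathbf{f})/p(\mathbf{y})$, where $-\log p(\mathbf{y})$ is an absorbable constant but $\log p(\mathbf{f})$ is $\mathbf{f}$-dependent and integrates against $q$ to a $\phi$-dependent quantity. I would therefore either read the displayed $p(\mathbf{f}\mid\mathbf{y})$ as shorthand for the correct likelihood term $p(\mathbf{y}\mid\mathbf{f})$, or carry the Bayes' substitution explicitly and verify which residual $\mathbf{f}$-dependent terms survive, so as to confirm that the bound being maximized is exactly the Titsias evidence lower bound rather than one differing from it by a $\phi$-dependent term. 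Getting this factor right is the crux, since it is what guarantees the equivalence "minimize KL $\Leftrightarrow$ maximize $F_{V}$" holds for the $F_{V}$ as written.
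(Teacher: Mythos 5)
Your proof is correct and follows essentially the same route as the paper: multiply inside the log by $p(\mathbf{y})/p(\mathbf{y})$ to peel off the evidence, substitute the variational family $q(\mathbf{f},\mathbf{f}_{m})=p(\mathbf{f}|\mathbf{f}_{m})\phi(\mathbf{f}_{m})$, factor the joint as $p(\mathbf{y}|\mathbf{f})p(\mathbf{f}|\mathbf{f}_{m})p(\mathbf{f}_{m})$, and cancel $p(\mathbf{f}|\mathbf{f}_{m})$. Your suspicion about the numerator is right: the $p(\mathbf{f}|\mathbf{y})$ in the displayed $F_{V}$ is a typo for the likelihood $p(\mathbf{y}|\mathbf{f})$ --- the paper's own proof derives $p(\mathbf{y}|\mathbf{f})$ in its penultimate line and then silently relabels it, and the proof of Lemma~\ref{lemma:3} immediately integrates $\int p(\mathbf{f}|\mathbf{f}_{m})\log p(\mathbf{y}|\mathbf{f})\,d\mathbf{f}$, confirming that the intended bound is the Titsias ELBO with $p(\mathbf{y}|\mathbf{f})$, exactly as you wrote it.
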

\begin{proof}
\begin{align*}
    KL(q(\mathbf{f},\mathbf{f}_{m})\|p(\mathbf{f},\mathbf{f}_{m}|\mathbf{y}))) &= \mathbb{E}_{q_{\phi(\mathbf{f},\mathbf{f}_{m})}}[\log\frac{q_{\phi(\mathbf{f},\mathbf{f}_{m})}}{p(\mathbf{f},\mathbf{f}_{m}|\mathbf{y})}] \\
    &= \mathbb{E}_{q_{\phi(\mathbf{f},\mathbf{f}_{m})}}[\log\frac{q_{\phi(\mathbf{f},\mathbf{f}_{m})}p(\mathbf{y})}{p(\mathbf{f},\mathbf{f}_{m}|\mathbf{y})p(\mathbf{y}}] \\
    &= \mathbb{E}_{q_{\phi(\mathbf{f},\mathbf{f}_{m})}}[\log\frac{q_{\phi(\mathbf{f},\mathbf{f}_{m})}p(\mathbf{y})}{p(\mathbf{f},\mathbf{f}_{m}, \mathbf{y})}] \\
    &= \log p(\mathbf{y}) + \mathbb{E}_{q_{\phi(\mathbf{f},\mathbf{f}_{m})}}[\log\frac{q_{\phi(\mathbf{f},\mathbf{f}_{m})}}{p(\mathbf{f},\mathbf{f}_{m}, \mathbf{y})}] \\
    &= \log p(\mathbf{y}) - \mathbb{E}_{q_{\phi(\mathbf{f},\mathbf{f}_{m})}}[\log\frac{p(\mathbf{f},\mathbf{f}_{m}, \mathbf{y})}{q_{\phi(\mathbf{f},\mathbf{f}_{m})}}] \\
    &= \log p(\mathbf{y}) -\int q(\mathbf{f}, \mathbf{f}_{m}) \log \left( \frac{p(\mathbf{f}, \mathbf{f}_{m}, \mathbf{y})}{q(\mathbf{f}, \mathbf{f}_{m})} \right) d\mathbf{f} \, d\mathbf{f}_{m} \\
    &= \log p(\mathbf{y}) -\int p(\mathbf{f}| \mathbf{f}_{m})p(\mathbf{f}_{m}) \log \left( \frac{p(\mathbf{y}|\mathbf{f})p(\mathbf{f}|\mathbf{f}_{m})p(\mathbf{f}_{m})}{q(\mathbf{f}, \mathbf{f}_{m})} \right) d\mathbf{f} \, d\mathbf{f}_{m}
\end{align*}
From equation \ref{eq:21} and \ref{eq:22}, $q(\mathbf{f}_{*})$ is an approximation to $p(\mathbf{f}_{m}|\mathbf{y})$, i.e. $q(\mathbf{f}_{*}) = p(\mathbf{f}_{*}|\mathbf{y})$, also, $\phi(\mathbf{f}_{m}) = p(\mathbf{f}_{m}|\mathbf{y})$, therefore, we know that $q(\mathbf{f}, \mathbf{f}_{m}) = p(\mathbf{f}, \mathbf{f}_{m}|\mathbf{y}) = p(\mathbf{f}| \mathbf{f}_{m})p(\mathbf{f}_{m}|\mathbf{y}) =  p(\mathbf{f}| \mathbf{f}_{m})\phi(\mathbf{f}_{m})$, therefore:
\begin{align*}
    KL(q(\mathbf{f},\mathbf{f}_{m})\|p(\mathbf{f},\mathbf{f}_{m}|\mathbf{y}))) 
    &= \log p(\mathbf{y}) -\int p(\mathbf{f}| \mathbf{f}_{m})\phi(\mathbf{f}_{m}) \log \left( \frac{p(\mathbf{y}|\mathbf{f})p(\mathbf{f}|\mathbf{f}_{m})p(\mathbf{f}_{m})}{p(\mathbf{f}| \mathbf{f}_{m})\phi(\mathbf{f}_{m})} \right) d\mathbf{f} \, d\mathbf{f}_{m} \\
    &= \log p(\mathbf{y}) - \int p(\mathbf{f}|\mathbf{f}_{m})\phi(\mathbf{f}_{m})\log\frac{p(\mathbf{f}|\mathbf{y})p(\mathbf{f}_{m})}{\phi(\mathbf{f}_{m})}d\mathbf{f}d\mathbf{f}_{m}
\end{align*}
And since KL divergence $\geq 0$, the lemma proved.
\end{proof}
\begin{lemma}
The variation lower bound after taking the optimal choice of the variation distribution $\phi_{*}$ is:
   \begin{equation}
   F_{V}(\mathbf{x}_{m},\phi_{*}) = \log [\mathcal{N}(\mathbf{y}|\boldsymbol{0},\sigma^{2}\mathbf{I}+\mathbf{Q}_{nn})] -\frac{1}{2\sigma^{2}}Tr(\mathbf{K}_{nn} - \mathbf{Q}_{nn}) \label{eq:41}
   \end{equation} \label{lemma:3}
\end{lemma}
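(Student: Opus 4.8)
The plan is to start from the variational lower bound of Lemma \ref{lemma:2}, writing the joint variational density as $q(\mathbf{f},\mathbf{f}_m)=p(\mathbf{f}|\mathbf{f}_m)\phi(\mathbf{f}_m)$ and keeping the data-dependent factor as the Gaussian likelihood $p(\mathbf{y}|\mathbf{f})=\mathcal{N}(\mathbf{y}|\mathbf{f},\sigma^{2}\mathbf{I})$. Throughout I abbreviate the Nystr\"om matrix as $\mathbf{Q}_{nn}=\mathbf{K}_{nm}\mathbf{K}_{mm}^{-1}\mathbf{K}_{mn}$, and I use the standard conditional and inducing prior
\begin{equation}
p(\mathbf{f}|\mathbf{f}_m)=\mathcal{N}\!\left(\mathbf{f}\,\middle|\,\mathbf{K}_{nm}\mathbf{K}_{mm}^{-1}\mathbf{f}_m,\ \mathbf{K}_{nn}-\mathbf{Q}_{nn}\right),\qquad p(\mathbf{f}_m)=\mathcal{N}(\mathbf{f}_m|\mathbf{0},\mathbf{K}_{mm}).
\end{equation}
The first step is to integrate out $\mathbf{f}$ analytically. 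Since only the likelihood depends on $\mathbf{f}$ inside the logarithm, I would isolate the inner expectation
\begin{equation}
G(\mathbf{f}_m):=\int p(\mathbf{f}|\mathbf{f}_m)\,\log p(\mathbf{y}|\mathbf{f})\,d\mathbf{f},
\end{equation}
which is the expectation of a log-Gaussian under a Gaussian and therefore reduces to a quadratic-form computation.

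Evaluating $G$ rests on the moment identity $\mathbb{E}\|\mathbf{y}-\mathbf{f}\|^{2}=\|\mathbf{y}-\mathbf{K}_{nm}\mathbf{K}_{mm}^{-1}\mathbf{f}_m\|^{2}+\mathrm{Tr}(\mathbf{K}_{nn}-\mathbf{Q}_{nn})$, which cleanly splits the expected log-likelihood into a genuine log-density plus a mean-independent correction,
\begin{equation}
G(\mathbf{f}_m)=\log\mathcal{N}\!\left(\mathbf{y}\,\middle|\,\mathbf{K}_{nm}\mathbf{K}_{mm}^{-1}\mathbf{f}_m,\sigma^{2}\mathbf{I}\right)-\frac{1}{2\sigma^{2}}\mathrm{Tr}(\mathbf{K}_{nn}-\mathbf{Q}_{nn}).
\end{equation}
Substituting this back, the bound collapses into a functional of $\phi$ alone, with the trace term pulled out front because it does not depend on $\mathbf{f}_m$:
\begin{equation}
F_V(\mathbf{x}_m,\phi)=\int \phi(\mathbf{f}_m)\log\frac{h(\mathbf{f}_m)}{\phi(\mathbf{f}_m)}\,d\mathbf{f}_m-\frac{1}{2\sigma^{2}}\mathrm{Tr}(\mathbf{K}_{nn}-\mathbf{Q}_{nn}),
\end{equation}
where $h(\mathbf{f}_m)=\mathcal{N}(\mathbf{y}|\mathbf{K}_{nm}\mathbf{K}_{mm}^{-1}\mathbf{f}_m,\sigma^{2}\mathbf{I})\,p(\mathbf{f}_m)$.

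The conceptual crux is then the maximization over $\phi$, which I expect to be the main obstacle since it requires justifying that the unconstrained maximizer is exactly the normalized product of the two Gaussians in $h$. The remaining functional has the form $\int\phi\log(h/\phi)$; rewriting it as $-\mathrm{KL}\!\big(\phi\,\|\,h/Z\big)+\log Z$ with $Z=\int h\,d\mathbf{f}_m$ shows it is maximized pointwise by $\phi_{*}\propto h$, attaining the value $\log Z$. It remains to evaluate $Z$ as a Gaussian marginal: the linear-Gaussian hierarchy $\mathbf{f}_m\sim\mathcal{N}(\mathbf{0},\mathbf{K}_{mm})$, $\mathbf{y}|\mathbf{f}_m\sim\mathcal{N}(\mathbf{K}_{nm}\mathbf{K}_{mm}^{-1}\mathbf{f}_m,\sigma^{2}\mathbf{I})$ yields marginal covariance $\sigma^{2}\mathbf{I}+\mathbf{K}_{nm}\mathbf{K}_{mm}^{-1}\mathbf{K}_{mm}\mathbf{K}_{mm}^{-1}\mathbf{K}_{mn}=\sigma^{2}\mathbf{I}+\mathbf{Q}_{nn}$, so $Z=\mathcal{N}(\mathbf{y}|\mathbf{0},\sigma^{2}\mathbf{I}+\mathbf{Q}_{nn})$. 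Combining $\log Z$ with the previously isolated trace correction reproduces Eq.~\ref{eq:41} exactly.
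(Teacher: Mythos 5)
Your proposal is correct and follows the same overall decomposition as the paper: both isolate the inner expectation $\int p(\mathbf{f}|\mathbf{f}_{m})\log p(\mathbf{y}|\mathbf{f})\,d\mathbf{f}$, evaluate it via the Gaussian conditional moments of $\mathbf{f}|\mathbf{f}_{m}$ to split off the correction $-\tfrac{1}{2\sigma^{2}}\mathrm{Tr}(\mathbf{K}_{nn}-\mathbf{Q}_{nn})$, and then reduce the remaining functional of $\phi$ to the log of a Gaussian marginal. Where you genuinely diverge is the final optimization over $\phi$. The paper appeals to Jensen's inequality, asserting equality $\varphi(\mathbb{E}[\mathbf{X}])=\mathbb{E}[\varphi(\mathbf{X})]$ to pull the logarithm outside the integral, and moreover its Eq.~\ref{eq:51} writes the numerator as $\mathcal{N}(\mathbf{y}|\boldsymbol{\alpha},\sigma^{2}\mathbf{I})$ alone, silently omitting the factor $p(\mathbf{f}_{m})$ that is needed for the subsequent integral over $\mathbf{f}_{m}$ to produce the marginal $\mathcal{N}(\mathbf{y}|\boldsymbol{0},\sigma^{2}\mathbf{I}+\mathbf{Q}_{nn})$. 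Your route repairs both points: you keep $h(\mathbf{f}_{m})=\mathcal{N}(\mathbf{y}|\mathbf{K}_{nm}\mathbf{K}_{mm}^{-1}\mathbf{f}_{m},\sigma^{2}\mathbf{I})\,p(\mathbf{f}_{m})$ explicitly, and you replace the Jensen step by the exact identity $\int\phi\log(h/\phi)=-\mathrm{KL}\bigl(\phi\,\|\,h/Z\bigr)+\log Z$, which simultaneously identifies the maximizer $\phi_{*}\propto h$ (consistent with Lemma~\ref{lemma:4}) and certifies that the optimum equals $\log Z$. The paper's version is shorter but leans on an equality case of Jensen that it does not justify; yours is self-contained and makes the attainment of the bound explicit, at the modest cost of introducing the normalized density $h/Z$.
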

where $\mathbf{Q}_{nn} = \mathbf{K}_{nm}\mathbf{K}_{mm}^{-1}\mathbf{K}_{mn}$.
\begin{proof}
\begin{align*}
F_{V}(\mathbf{x}_{m},\phi) &= \int p(\mathbf{f}|\mathbf{f}_{m})\phi(\mathbf{f}_{m})\log\frac{p(\mathbf{f}|\mathbf{y})p(\mathbf{f}_{m})}{\phi(\mathbf{f}_{m})}d\mathbf{f}d\mathbf{f}_{m} \\
&= \int \phi(\mathbf{f}_{m})\left\{\int p(\mathbf{f}|\mathbf{f}_{m})\log p(\mathbf{y}|\mathbf{f})d\mathbf{f} + \log \frac{p(\mathbf{f}_{m})}{\phi(\mathbf{f}_{m})} \right\}d\mathbf{f}_{m}
\end{align*}
The integral in the brace has two parts, only the first part involving $\mathbf{f}$, which can be computed as:
\begin{equation}
    \log G(\mathbf{f}_{m},\mathbf{y}) = \int p(\mathbf{f}|\mathbf{f}_{m})\log p(\mathbf{y}|\mathbf{f})d\mathbf{f} \label{eq:42}
\end{equation}
Since $\mathbf{y} = \mathbf{f} + \boldsymbol{\epsilon}$, where $\boldsymbol{\epsilon} \thicksim \mathcal{N}(\boldsymbol{0},\sigma^{2}\mathbf{I})$, therefore,$p(\mathbf{y}|\mathbf{f}) \thicksim \mathcal{N}(\mathbf{f}, \sigma^{2}\mathbf{I})$. $\Rightarrow$
\begin{align*}  
   \log G(\mathbf{f}_{m},\mathbf{y}) &= \int p(\mathbf{f}|\mathbf{f}_{m})\left\{-\frac{n}{2}\log(2\pi \sigma^{2}) - \frac{1}{2\sigma^{2}}(\mathbf{y} - \mathbf{f})^{T}(\mathbf{y} - \mathbf{f})\right\}d\mathbf{f}  \\
   &= \int p(\mathbf{f}|\mathbf{f}_{m})\left\{-\frac{n}{2}\log(2\pi \sigma^{2}) - \frac{1}{2\sigma^{2}}(\mathbf{y}^{T}\mathbf{y} - 2\mathbf{y}^{T}\mathbf{f} + \mathbf{f}^{T}\mathbf{f})\right\}d\mathbf{f}  \\
   &= \int p(\mathbf{f}|\mathbf{f}_{m})\left\{-\frac{n}{2}\log(2\pi \sigma^{2}) - \frac{1}{2\sigma^{2}}Tr(\mathbf{y}\mathbf{y}^{T} - 2\mathbf{y}\mathbf{f}^{T} + \mathbf{f}\mathbf{f}^{T})\right\} d\mathbf{f} 
\end{align*}
Since $\int p(\mathbf{f}|\mathbf{f}_{m})d\mathbf{f} = 1$, and $\mathbf{y}\mathbf{y}^{T}$ doesn't involve $\mathbf{f}$, therefore, let us focus on the last two integrals.
\begin{equation}
    -\int p(\mathbf{f}|\mathbf{f}_{m})2\mathbf{y}\mathbf{f}^{T}d\mathbf{f} = -\mathbb{E}[2\mathbf{y}\mathbf{f}|\mathbf{f}_{m}] = -2\mathbf{y}\mathbb{E}[\mathbf{f}|\mathbf{f}_{m}] \label{eq:43}
\end{equation}
\begin{equation}
    \int p(\mathbf{f}|\mathbf{f}_{m})\mathbf{f}\mathbf{f}^{T}d\mathbf{f} = \mathbb{E}[\mathbf{f}\mathbf{f}^{T}|\mathbf{f}_{m}] = Cov(\mathbf{f}|\mathbf{f}_{m}) + \mathbb{E}[\mathbf{f}|\mathbf{f}_{m}]\mathbb{E}[\mathbf{f}|\mathbf{f}_{m}]^{T} \label{eq:44}
\end{equation}
From \cite{williams2006gaussian}, we know that, given:
\begin{equation}
\begin{bmatrix}
\mathbf{X} \\
\mathbf{Y}
\end{bmatrix} \sim \mathcal{N} \left(\begin{bmatrix}
\mu_{\mathbf{X}}\\
\mu_{\mathbf{Y}}
\end{bmatrix}, \begin{bmatrix}
\mathbf{\Sigma}_{\mathbf{XX}} & \mathbf{\Sigma}_{\mathbf{XY}} \\
\mathbf{\Sigma}_{\mathbf{YX}} & \mathbf{\Sigma}_{\mathbf{YY}}
\end{bmatrix} \right) \label{eq:45}
\end{equation}
The conditional expectation and covariance are calculated by:
\begin{equation}
    \mathbb{E}[\mathbf{X}|\mathbf{Y}] = \mu_{\mathbf{X}} + \Sigma_{\mathbf{XY}}\Sigma_{\mathbf{YY}}^{-1}(\mathbf{Y} - \mu_{\mathbf{Y}}) \label{eq:46}
\end{equation}
\begin{equation}
    Cov[\mathbf{X}|\mathbf{Y}] = \Sigma_{\mathbf{XX}} - \Sigma_{\mathbf{XY}}\Sigma_{\mathbf{YY}}^{-1}\Sigma_{\mathbf{YX}} \label{eq:47}
\end{equation}
In our case:
\begin{equation}
\begin{bmatrix}
\mathbf{f} \\
\mathbf{f}_{m}
\end{bmatrix} \sim \mathcal{N} \left(\begin{bmatrix}
\boldsymbol{0}\\
\boldsymbol{0}
\end{bmatrix}, \begin{bmatrix}
\mathbf{K}_{nn} & \mathbf{K}_{nm} \\
\mathbf{K}_{mn} & \mathbf{K}_{mm}
\end{bmatrix} \right) \label{eq:48}
\end{equation}
Therefore:
\begin{equation}
    -2\mathbf{y}\mathbb{E}[\mathbf{f}|\mathbf{f}_{m}] = -2\mathbf{y}\mathbf{K}_{nm}\mathbf{K}_{mm}^{-1}\mathbf{f}_{m} \label{eq:49}
\end{equation}
\begin{equation}
 \mathbb{E}[\mathbf{f}\mathbf{f}^{T}|\mathbf{f}_{m}] = Cov(\mathbf{f}|\mathbf{f}_{m}) + \mathbb{E}[\mathbf{f}|\mathbf{f}_{m}]\mathbb{E}[\mathbf{f}|\mathbf{f}_{m}]^{T} = \mathbf{K}_{nn} - \mathbf{K}_{nm}\mathbf{K}_{mm}^{-1}\mathbf{K}_{mn} + \mathbf{K}_{nm}\mathbf{K}_{mm}^{-1}\mathbf{f}_{m}\left(\mathbf{K}_{nm}\mathbf{K}_{mm}^{-1}\mathbf{f}_{m}\right)^{T} \label{eq:50}
\end{equation}
Therefore, $\log G(\mathbf{f}_{m},\mathbf{y})$ can be written as:
\begin{align*}
    \log G(\mathbf{f}_{m},\mathbf{y}) &= -\frac{n}{2}\log(2\pi \sigma^{2}) - \frac{1}{2\sigma^{2}}Tr(\mathbf{y}\mathbf{y}^{T} - 2\mathbf{y}\boldsymbol{\alpha}^{T} + \boldsymbol{\alpha}\boldsymbol{\alpha}^{T} + \mathbf{K}_{nn} - \mathbf{Q}_{nn}) \\
    &= \log[\mathcal{N}(\mathbf{y}|\boldsymbol{\alpha}, \sigma^{2}\mathbf{I})] -\frac{1}{2\sigma^{2}}Tr(\mathbf{K}_{nn} - \mathbf{Q}_{nn})
\end{align*}
where $\boldsymbol{\alpha} = \mathbf{K}_{nm}\mathbf{K}_{mm}^{-1}\mathbf{f}_{m}$,$\mathbf{Q}_{nn} = \mathbf{K}_{nm}\mathbf{K}_{mm}^{-1}\mathbf{K}_{mn}$. Therefore, $F_{V}(\mathbf{x}_{m},\phi)$ can be written as:
\begin{equation}
    F_{V}(\mathbf{x}_{m},\phi) = \int \phi(\mathbf{f}_{m})\log\frac{\mathcal{N}(\mathbf{y}|\boldsymbol{\alpha}, \sigma^{2}\mathbf{I})}{\phi(\mathbf{f}_{m})}d\mathbf{f}_{m} -\frac{1}{2\sigma^{2}}Tr(\mathbf{K}_{nn} - \mathbf{Q}_{nn}) \label{eq:51}
\end{equation}
Based on Jensen's inequality (\cite{jensen1906fonctions}), $\varphi(\mathbb{E}[\mathbf{X}]) = \mathbb{E}[\varphi(\mathbf{X})]$($\varphi$ is a convex function), taking the log function outside the above integral, we can have:
\begin{align*}
    F_{V}(\mathbf{x}_{m},\phi) &= \log\int \phi(\mathbf{f}_{m})\frac{\mathcal{N}(\mathbf{y}|\boldsymbol{\alpha}, \sigma^{2}\mathbf{I})}{\phi(\mathbf{f}_{m})}d\mathbf{f}_{m} -\frac{1}{2\sigma^{2}}Tr(\mathbf{K}_{nn} - \mathbf{Q}_{nn}) \\
    &= \log\int \mathcal{N}(\mathbf{y}|\boldsymbol{\alpha}, \sigma^{2}\mathbf{I})d\mathbf{f}_{m} -\frac{1}{2\sigma^{2}}Tr(\mathbf{K}_{nn} - \mathbf{Q}_{nn}) \\
    &= \log \mathcal{N}(\mathbf{y}|\boldsymbol{0}, \sigma^{2}\mathbf{I} +\mathbf{Q}_{nn}) -\frac{1}{2\sigma^{2}}Tr(\mathbf{K}_{nn} - \mathbf{Q}_{nn})
\end{align*}
\end{proof}
\begin{lemma}
\begin{equation}
    \phi(\mathbf{f}_{m}) = \mathcal{N}(\mathbf{f}_{m}|\boldsymbol{\mu}_{m},\mathbf{A}_{m}) \label{eq:52}
\end{equation} \label{lemma:4}
\end{lemma}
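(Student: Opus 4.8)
The plan is to obtain $\phi(\mathbf{f}_m)$ as the exact maximizer of the variational bound $F_V(\mathbf{x}_m,\phi)$ over all densities $\phi$, rather than by collapsing the bound through Jensen's inequality as was done to prove Lemma \ref{lemma:3}. First I would return to the intermediate identity established in the proof of Lemma \ref{lemma:3}, namely
\begin{equation*}
F_V(\mathbf{x}_m,\phi) = \int \phi(\mathbf{f}_m)\log\frac{G(\mathbf{f}_m,\mathbf{y})\,p(\mathbf{f}_m)}{\phi(\mathbf{f}_m)}\,d\mathbf{f}_m,
\end{equation*}
where $G(\mathbf{f}_m,\mathbf{y}) = \mathcal{N}(\mathbf{y}|\boldsymbol{\alpha},\sigma^{2}\mathbf{I})\exp(-\frac{1}{2\sigma^{2}}Tr(\mathbf{K}_{nn}-\mathbf{Q}_{nn}))$ and $\boldsymbol{\alpha} = \mathbf{K}_{nm}\mathbf{K}_{mm}^{-1}\mathbf{f}_m$. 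Since the trace term is constant in $\mathbf{f}_m$, it only shifts $F_V$ by a constant, and the remaining integral equals $-\mathrm{KL}(\phi(\mathbf{f}_m)\,\|\,\tilde{\phi}(\mathbf{f}_m))$ up to that constant, where $\tilde{\phi}$ is the normalized version of the product $\mathcal{N}(\mathbf{y}|\boldsymbol{\alpha},\sigma^{2}\mathbf{I})\,p(\mathbf{f}_m)$. Because the KL divergence is nonnegative and vanishes precisely when its two arguments coincide, the optimal $\phi_{*}$ is exactly $\tilde{\phi}$; that is, $\phi(\mathbf{f}_m)\propto \mathcal{N}(\mathbf{y}|\boldsymbol{\alpha},\sigma^{2}\mathbf{I})\,p(\mathbf{f}_m)$.

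Next I would make the two Gaussian factors explicit. Regarding $\mathcal{N}(\mathbf{y}|\mathbf{K}_{nm}\mathbf{K}_{mm}^{-1}\mathbf{f}_m,\sigma^{2}\mathbf{I})$ as a function of $\mathbf{f}_m$ yields a Gaussian factor whose exponent contributes a quadratic precision term $\sigma^{-2}\mathbf{K}_{mm}^{-1}\mathbf{K}_{mn}\mathbf{K}_{nm}\mathbf{K}_{mm}^{-1}$ and a linear term $\sigma^{-2}\mathbf{K}_{mm}^{-1}\mathbf{K}_{mn}\mathbf{y}$, while the inducing-point prior $p(\mathbf{f}_m) = \mathcal{N}(\mathbf{f}_m|\mathbf{0},\mathbf{K}_{mm})$ from Equation \ref{eq:48} contributes the precision $\mathbf{K}_{mm}^{-1}$. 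Multiplying the two and collecting the quadratic and linear terms in the exponent shows immediately that the product is Gaussian in $\mathbf{f}_m$, which already establishes the form $\phi(\mathbf{f}_m) = \mathcal{N}(\mathbf{f}_m|\boldsymbol{\mu}_m,\mathbf{A}_m)$ asserted in the lemma; what remains is to read off $\boldsymbol{\mu}_m$ and $\mathbf{A}_m$.

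The final step is to simplify the precision and mean into the compact closed forms of Equations \ref{eq:27} and \ref{eq:28}. Completing the square gives
\begin{equation*}
\mathbf{A}_m^{-1} = \mathbf{K}_{mm}^{-1} + \frac{1}{\sigma^{2}}\mathbf{K}_{mm}^{-1}\mathbf{K}_{mn}\mathbf{K}_{nm}\mathbf{K}_{mm}^{-1} = \mathbf{K}_{mm}^{-1}\left(\mathbf{K}_{mm} + \frac{1}{\sigma^{2}}\mathbf{K}_{mn}\mathbf{K}_{nm}\right)\mathbf{K}_{mm}^{-1} = \mathbf{K}_{mm}^{-1}\mathbf{\Sigma}^{-1}\mathbf{K}_{mm}^{-1},
\end{equation*}
where $\mathbf{\Sigma} = (\mathbf{K}_{mm}+\sigma^{-2}\mathbf{K}_{mn}\mathbf{K}_{nm})^{-1}$, so that $\mathbf{A}_m = \mathbf{K}_{mm}\mathbf{\Sigma}\mathbf{K}_{mm}$, matching Equation \ref{eq:28}. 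The mean then follows from multiplying the inverse precision by the linear term, $\boldsymbol{\mu}_m = \mathbf{A}_m(\sigma^{-2}\mathbf{K}_{mm}^{-1}\mathbf{K}_{mn}\mathbf{y}) = \sigma^{-2}\mathbf{K}_{mm}\mathbf{\Sigma}\mathbf{K}_{mn}\mathbf{y}$, matching Equation \ref{eq:27}. I expect the main obstacle to be the matrix bookkeeping in this last step: one must recognize the symmetric sandwiching $\mathbf{K}_{mm}^{-1}(\cdot)\mathbf{K}_{mm}^{-1}$ so that the precision collapses cleanly to $\mathbf{K}_{mm}^{-1}\mathbf{\Sigma}^{-1}\mathbf{K}_{mm}^{-1}$; without spotting this factorization the inverse does not obviously reduce to $\mathbf{K}_{mm}\mathbf{\Sigma}\mathbf{K}_{mm}$, and one must keep the $\mathbf{K}_{mn}$ versus $\mathbf{K}_{nm}$ orientations consistent throughout so that the final $\boldsymbol{\mu}_m$ carries $\mathbf{K}_{mn}\mathbf{y}$ rather than its transpose.
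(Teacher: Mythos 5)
Your proposal is correct and takes essentially the same route as the paper: identify the optimal variational distribution as the normalized product $\mathcal{N}(\mathbf{y}\mid\boldsymbol{\alpha},\sigma^{2}\mathbf{I})\,p(\mathbf{f}_{m})$ (the paper phrases this as equality in Jensen's inequality, you phrase it as the vanishing of a KL divergence — these are the same observation), then complete the square in $\mathbf{f}_{m}$ and collapse the precision to $\mathbf{K}_{mm}^{-1}\boldsymbol{\Sigma}^{-1}\mathbf{K}_{mm}^{-1}$. Your direct symmetric factorization of the precision is if anything cleaner than the paper's appeal to the Sherman--Morrison formula, and your bookkeeping of the linear term correctly retains the factor $\mathbf{K}_{mm}^{-1}$ that the paper's intermediate display drops before arriving at the same final mean.
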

where:
\begin{equation}
  \boldsymbol{\mu}_{m} = \frac{1}{\sigma^{2}}\mathbf{K}_{mm}\mathbf{\Sigma}\mathbf{K}_{mn}\mathbf{y}    \label{eq:53}
\end{equation}
\begin{equation}
  \mathbf{A}_{m} = \mathbf{K}_{mm}\mathbf{\Sigma}\mathbf{K}_{mm}   \label{eq:54}
\end{equation}
\begin{equation}
\mathbf{\Sigma} = (\mathbf{K}_{mm} + \sigma^{-2}\mathbf{K}_{mn}\mathbf{K}_{nm})^{-1}  \label{eq:55}
\end{equation}
\begin{proof}
Based on lemma \ref{lemma:2} and \cite{titsias2009variational-1}, achieving the optimal bound occurs when equality holds in Jensen’s inequality. i.e. :
\begin{align*}
    \phi^{*}(\mathbf{f}_{m}) &\propto \mathcal{N}(\mathbf{y}|\boldsymbol{\alpha},\sigma^{2})p(\mathbf{f}_{m}) \\
     &\propto p(\mathbf{f}|\mathbf{f}_{m})p(\mathbf{f}_{m})\\
     &\propto \exp(\log p(\mathbf{f}|\mathbf{f}_{m}) + \log p(\mathbf{f}_{m}))
\end{align*}
Since $p(\mathbf{f}|\mathbf{f}_{m}) = \mathcal{N}(\mathbf{y}|\boldsymbol{\alpha},\sigma^{2})$,$p(\mathbf{f}_{m}) = \mathcal{N}(\mathbf{y}|\boldsymbol{0},\mathbf{K}_{mm})$, $\phi^{*}(\mathbf{f}_{m})$ can be further written as:
\begin{align*}
      \phi^{*}(\mathbf{f}_{m}) &= \exp \left\{-\log\left((2\pi)^{\frac{n}{2}}|\sigma^{2}|^{\frac{1}{2}} \right) - \frac{1}{2}(\mathbf{y} - \boldsymbol{\alpha})^{T}\frac{1}{\sigma^{2}}(\mathbf{y} - \boldsymbol{\alpha}) -\log\left((2\pi)^{\frac{n}{2}}|\mathbf{K}_{mm}|^{\frac{1}{2}} \right) - \frac{1}{2}\mathbf{f}_{m}^{T}\mathbf{K}_{mm}^{-1}\mathbf{f}_{m} \right\}  \\
      &=  c\exp \left\{ - \frac{1}{2\sigma^{2}}(\mathbf{y} - \mathbf{K}_{nm}\mathbf{K}_{mm}^{-1}\mathbf{f}_{m})^{T}(\mathbf{y} - \mathbf{K}_{nm}\mathbf{K}_{mm}^{-1}\mathbf{f}_{m}) - \frac{1}{2}\mathbf{f}_{m}^{T}\mathbf{K}_{mm}^{-1}\mathbf{f}_{m} \right\} \\
    &= c\exp \left\{ -\frac{1}{2\sigma^{2}}\mathbf{y}^{T}\mathbf{y}-\frac{1}{2}\mathbf{f}_{m}^{T}(\mathbf{K}_{mm}^{-1}+\frac{1}{\sigma^{2}}\mathbf{K}_{mm}^{-1}\mathbf{K}_{mn}\mathbf{K}_{nm}\mathbf{K}_{mm}^{-1})\mathbf{f}_{m} + \frac{1}{\sigma^{2}}\mathbf{y}^{T}\mathbf{K}_{nm}\mathbf{K}_{mm}^{-1}\mathbf{f}_{m}\right\} \\
    &= \mathcal{N}(\mathbf{f}_{m}|\left(\mathbf{K}_{mm}^{-1} + \frac{1}{\sigma^{2}}\mathbf{K}_{mm}^{-1}\mathbf{K}_{mn}\mathbf{K}_{nm}\mathbf{K}_{mm}^{-1}\right)^{-1}\frac{1}{\sigma^{2}}\mathbf{K}_{mn}\mathbf{y},\left(\mathbf{K}_{mm}^{-1} + \frac{1}{\sigma^{2}}\mathbf{K}_{mm}^{-1}\mathbf{K}_{mn}\mathbf{K}_{nm}\mathbf{K}_{mm}^{-1}\right)^{-1})
\end{align*}
Based on the Sherman–Morrison formula (\cite{sherman1950adjustment}), we can have:
\begin{equation}
    \left(\mathbf{K}_{mm}^{-1} + \frac{1}{\sigma^{2}}\mathbf{K}_{mm}^{-1}\mathbf{K}_{mn}\mathbf{K}_{nm}\mathbf{K}_{mm}^{-1}\right)^{-1} = \mathbf{K}_{mm}\left(\mathbf{K}_{mm} + \frac{1}{\sigma^{2}}\mathbf{K}_{mn}\mathbf{K}_{nm} \right)^{-1}\mathbf{K}_{mm} \label{eq:56}
\end{equation}
Therefore:
\begin{align*}
   \phi^{*}(\mathbf{f}_{m}) &= \mathcal{N}(\mathbf{f}_{m}|\sigma^{-2}\mathbf{K}_{mm}\mathbf{\Sigma}\mathbf{K}_{mn}\mathbf{y}, \mathbf{K}_{mm}\mathbf{\Sigma}\mathbf{K}_{mm})
\end{align*}
where $\mathbf{\Sigma} = (\mathbf{K}_{mm} + \sigma^{-2}\mathbf{K}_{mn}\mathbf{K}_{nm})^{-1}$,$c$ is a constant.
\end{proof}
\begin{lemma}
When the prior of $\mathbf{f}$ and $\mathbf{f}_{m}$ is not zero, then:
\begin{equation}
        \phi(\mathbf{f}_{m}) = \mathcal{N}(\mathbf{f}_{m}|\boldsymbol{\mu}_{*},\boldsymbol{\Sigma}_{*}) \label{eq:57}
\end{equation}
\begin{equation}
        \boldsymbol{\Sigma}_{*} = (\mathbf{K}_{mm}^{-1} + \frac{1}{2\sigma^{2}}\mathbf{K}_{mm}^{-1}\mathbf{K}_{mn}\mathbf{K}_{nm}\mathbf{K}_{mm}^{-1})^{-1}  \label{eq:58}
\end{equation}
\begin{equation}
        \boldsymbol{\mu}_{*} = \mathbf{\Sigma}_{*}\left(\frac{1}{\sigma^{2}}\mathbf{K}_{mm}^{-1}\mathbf{K}_{nm}(\mathbf{y} - m(\mathbf{x}))\right) \label{eq:59}
\end{equation} \label{lemma:5}
\end{lemma}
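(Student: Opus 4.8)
The plan is to replay the argument of Lemma \ref{lemma:4} essentially verbatim, carrying along the two new mean-function offsets. First I would write the joint prior of $\mathbf{f}$ and $\mathbf{f}_{m}$ exactly as in Eq \ref{eq:48}, except that the zero mean vector is replaced by $(m(\mathbf{x}),m(\mathbf{x}_{m}))$. Applying the conditional formulas Eq \ref{eq:46}--\ref{eq:47} then gives $\mathbb{E}[\mathbf{f}|\mathbf{f}_{m}] = m(\mathbf{x}) + \mathbf{K}_{nm}\mathbf{K}_{mm}^{-1}(\mathbf{f}_{m} - m(\mathbf{x}_{m}))$, whereas the conditional covariance $\mathrm{Cov}[\mathbf{f}|\mathbf{f}_{m}] = \mathbf{K}_{nn} - \mathbf{Q}_{nn}$ is untouched, since shifting the mean of a Gaussian leaves its covariance invariant. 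The likelihood $p(\mathbf{y}|\mathbf{f}) = \mathcal{N}(\mathbf{f},\sigma^{2}\mathbf{I})$ is unchanged.

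Second, as established in Lemma \ref{lemma:4}, equality in Jensen's inequality at the optimum forces $\phi^{*}(\mathbf{f}_{m}) \propto \mathcal{N}(\mathbf{y}|\boldsymbol{\alpha},\sigma^{2}\mathbf{I})\,p(\mathbf{f}_{m})$, where now $\boldsymbol{\alpha} = \mathbb{E}[\mathbf{f}|\mathbf{f}_{m}]$ carries the offset above and $p(\mathbf{f}_{m}) = \mathcal{N}(\mathbf{f}_{m}|m(\mathbf{x}_{m}),\mathbf{K}_{mm})$. The cleanest route is to introduce the centered variables $\tilde{\mathbf{y}} = \mathbf{y} - m(\mathbf{x})$ and $\tilde{\mathbf{f}}_{m} = \mathbf{f}_{m} - m(\mathbf{x}_{m})$, so that $\mathbf{y} - \boldsymbol{\alpha} = \tilde{\mathbf{y}} - \mathbf{K}_{nm}\mathbf{K}_{mm}^{-1}\tilde{\mathbf{f}}_{m}$. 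The exponent then collapses to exactly the zero-mean expression already expanded in Lemma \ref{lemma:4}, with $\mathbf{y}$ and $\mathbf{f}_{m}$ replaced throughout by $\tilde{\mathbf{y}}$ and $\tilde{\mathbf{f}}_{m}$. Collecting the quadratic and linear terms in $\tilde{\mathbf{f}}_{m}$ and completing the square, the quadratic coefficient produces the precision $\boldsymbol{\Sigma}_{*}^{-1} = \mathbf{K}_{mm}^{-1} + \sigma^{-2}\mathbf{K}_{mm}^{-1}\mathbf{K}_{mn}\mathbf{K}_{nm}\mathbf{K}_{mm}^{-1}$, which gives Eq \ref{eq:58}, and the linear coefficient $\sigma^{-2}\mathbf{K}_{mm}^{-1}\mathbf{K}_{mn}\tilde{\mathbf{y}}$ yields $\tilde{\mathbf{f}}_{m}$ with mean $\boldsymbol{\Sigma}_{*}\,\sigma^{-2}\mathbf{K}_{mm}^{-1}\mathbf{K}_{mn}(\mathbf{y} - m(\mathbf{x}))$. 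Undoing the centering restores the inducing offset and delivers $\boldsymbol{\mu}_{*} = m(\mathbf{x}_{m}) + \boldsymbol{\Sigma}_{*}\,\sigma^{-2}\mathbf{K}_{mm}^{-1}\mathbf{K}_{mn}(\mathbf{y} - m(\mathbf{x}))$, matching Eq \ref{eq:31}; if desired, the Sherman--Morrison identity Eq \ref{eq:56} recasts $\boldsymbol{\Sigma}_{*}$ in the symmetric factored form used downstream.

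The main bookkeeping hazard is the simultaneous presence of two offsets: $m(\mathbf{x})$ entering through the likelihood residual and $m(\mathbf{x}_{m})$ entering through the prior on $\mathbf{f}_{m}$. The centering substitution is what keeps these separate and makes the reduction to the zero-mean calculation transparent, but one must remember to reverse $\tilde{\mathbf{f}}_{m}\mapsto \mathbf{f}_{m}-m(\mathbf{x}_{m})$ at the very end: a careless expansion that treats $\mathbf{f}_{m}$ directly tends to drop the $m(\mathbf{x}_{m})$ term from the mean and to mis-pair the cross terms. Keeping track of that shift, together with checking that $\mathbf{K}_{nm}^{T} = \mathbf{K}_{mn}$ is applied consistently so the precision comes out symmetric, is the only delicate point; everything else is a routine re-run of Lemma \ref{lemma:4}.
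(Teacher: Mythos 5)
Your proposal is correct and follows essentially the same route as the paper's own proof: shift the joint prior mean to $(m(\mathbf{x}),m(\mathbf{x}_{m}))$, re-derive $\boldsymbol{\alpha}=\mathbb{E}[\mathbf{f}|\mathbf{f}_{m}]$, invoke equality in Jensen's inequality so that $\phi^{*}(\mathbf{f}_{m})\propto\mathcal{N}(\mathbf{y}|\boldsymbol{\alpha},\sigma^{2}\mathbf{I})\,p(\mathbf{f}_{m})$, and complete the square in $\mathbf{f}_{m}-m(\mathbf{x}_{m})$; your explicit centering substitution is just a tidier bookkeeping device for the same expansion. Note that your conclusion $\boldsymbol{\mu}_{*}=m(\mathbf{x}_{m})+\boldsymbol{\Sigma}_{*}\bigl(\tfrac{1}{\sigma^{2}}\mathbf{K}_{mm}^{-1}\mathbf{K}_{mn}(\mathbf{y}-m(\mathbf{x}))\bigr)$ with precision $\mathbf{K}_{mm}^{-1}+\tfrac{1}{\sigma^{2}}\mathbf{K}_{mm}^{-1}\mathbf{K}_{mn}\mathbf{K}_{nm}\mathbf{K}_{mm}^{-1}$ agrees with the paper's own derivation and with Eq~\ref{eq:31}, and correctly identifies the missing $m(\mathbf{x}_{m})$ offset and the spurious factor $\tfrac{1}{2}$ in the lemma statement as typeset in Eqs~\ref{eq:58}--\ref{eq:59}.
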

\begin{proof}
Since the prior the prior of $\mathbf{f}$ and $\mathbf{f}_{m}$ is not zero, then the joint distribution changed from:
\begin{equation}
\begin{bmatrix}
\mathbf{f} \\
\mathbf{f}_{m}
\end{bmatrix} \sim \mathcal{N} \left(\begin{bmatrix}
\boldsymbol{0}\\
\boldsymbol{0}
\end{bmatrix}, \begin{bmatrix}
\mathbf{K}_{nn} & \mathbf{K}_{nm} \\
\mathbf{K}_{mn} & \mathbf{K}_{mm}
\end{bmatrix} \right) \label{eq:60}
\end{equation}
to:
\begin{equation}
\begin{bmatrix}
\mathbf{f} \\
\mathbf{f}_{m}
\end{bmatrix} \sim \mathcal{N} \left(\begin{bmatrix}
m(\mathbf{x})\\
m(\mathbf{x}_{m})
\end{bmatrix}, \begin{bmatrix}
\mathbf{K}_{nn} & \mathbf{K}_{nm} \\
\mathbf{K}_{mn} & \mathbf{K}_{mm}
\end{bmatrix} \right) \label{eq:61}
\end{equation}
Then equations \ref{eq:50} and \ref{eq:51} will be written as:
\begin{equation}
    -2\mathbf{y}\mathbb{E}[\mathbf{f}|\mathbf{f}_{m}] = -2\mathbf{y}\left\{m(\mathbf{x}) + \mathbf{K}_{nm}\mathbf{K}_{mm}^{-1}(\mathbf{f}_{m} - m(\mathbf{x}_{m})) \right\} \label{eq:62}
\end{equation}
\begin{equation}
 \mathbb{E}[\mathbf{f}\mathbf{f}^{T}|\mathbf{f}_{m}] =  \mathbf{K}_{nn} - \mathbf{K}_{nm}\mathbf{K}_{mm}^{-1}\mathbf{K}_{mn} + \left\{m(\mathbf{x}) + \mathbf{K}_{nm}\mathbf{K}_{mm}^{-1}(\mathbf{f}_{m} - m(\mathbf{x}_{m})) \right\}\left\{m(\mathbf{x}) + \mathbf{K}_{nm}\mathbf{K}_{mm}^{-1}(\mathbf{f}_{m} - m(\mathbf{x}_{m})) \right\}^{T} \label{eq:63}
\end{equation}
Therefore, $\log G(\mathbf{f}_{m},\mathbf{y})$ can be written as:
\begin{align*}
    \log G(\mathbf{f}_{m},\mathbf{y}) &= -\frac{n}{2}\log(2\pi \sigma^{2}) - \frac{1}{2\sigma^{2}}Tr(\mathbf{y}\mathbf{y}^{T} - 2\mathbf{y}\boldsymbol{\alpha}^{T} + \boldsymbol{\alpha}\boldsymbol{\alpha}^{T} + \mathbf{K}_{nn} - \mathbf{Q}_{nn}) \\
    &= \log[\mathcal{N}(\mathbf{y}|\boldsymbol{\alpha}, \sigma^{2}\mathbf{I})] -\frac{1}{2\sigma^{2}}Tr(\mathbf{K}_{nn} - \mathbf{Q}_{nn})
\end{align*}
where $\boldsymbol{\alpha} = m(\mathbf{x}) + \mathbf{K}_{nm}\mathbf{K}_{mm}^{-1}(\mathbf{f}_{m} - m(\mathbf{x}_{m}))$,$\mathbf{Q}_{nn} = \mathbf{K}_{nm}\mathbf{K}_{mm}^{-1}\mathbf{K}_{mn}$. Therefore, $F_{V}(\mathbf{x}_{m},\phi)$ can be written as:
\begin{equation}
    F_{V}(\mathbf{x}_{m},\phi) = \int \phi(\mathbf{f}_{m})\log\frac{\mathcal{N}(\mathbf{y}|\boldsymbol{\alpha}, \sigma^{2}\mathbf{I})}{\phi(\mathbf{f}_{m})}d\mathbf{f}_{m} -\frac{1}{2\sigma^{2}}Tr(\mathbf{K}_{nn} - \mathbf{Q}_{nn}) \label{eq:64}
\end{equation}
Similarly, based on Jensen's inequality, taking the log function outside the above integral, we can have:
\begin{align*}
    F_{V}(\mathbf{x}_{m},\phi) &= \log\int \phi(\mathbf{f}_{m})\frac{\mathcal{N}(\mathbf{y}|\boldsymbol{\alpha}, \sigma^{2}\mathbf{I})}{\phi(\mathbf{f}_{m})}d\mathbf{f}_{m} -\frac{1}{2\sigma^{2}}Tr(\mathbf{K}_{nn} - \mathbf{Q}_{nn}) \\
    &= \log\int \mathcal{N}(\mathbf{y}|\boldsymbol{\alpha}, \sigma^{2}\mathbf{I})d\mathbf{f}_{m} -\frac{1}{2\sigma^{2}}Tr(\mathbf{K}_{nn} - \mathbf{Q}_{nn}) \\
    &= \log \mathcal{N}(\mathbf{y}|\boldsymbol{0}, \sigma^{2}\mathbf{I} +\mathbf{Q}_{nn}) -\frac{1}{2\sigma^{2}}Tr(\mathbf{K}_{nn} - \mathbf{Q}_{nn})
\end{align*}
which has the same structure as when the prior is zero, but $\boldsymbol{\alpha}$ is different. Also, based on lemma \ref{lemma:4}, we can recalculate $\phi(\mathbf{f}_{m})$ as:
\begin{align*}
      \phi^{*}(\mathbf{f}_{m}) &= \exp \left\{-\log\left((2\pi)^{\frac{n}{2}}|\sigma^{2}|^{\frac{1}{2}} \right) - \frac{1}{2}(\mathbf{y} - \boldsymbol{\alpha})^{T}\frac{1}{\sigma^{2}}(\mathbf{y} - \boldsymbol{\alpha}) -\log\left((2\pi)^{\frac{n}{2}}|\mathbf{K}_{mm}|^{\frac{1}{2}} \right) - \frac{1}{2}(\mathbf{f}_{m} - m(\mathbf{x}_{m}))^{T}\mathbf{K}_{mm}^{-1}(\mathbf{f}_{m} - m(\mathbf{x}_{m})) \right\}  \\
      &=  c\exp \left\{  -\frac{1}{2}(\mathbf{y} - \boldsymbol{\alpha})^{T}\frac{1}{\sigma^{2}}(\mathbf{y} - \boldsymbol{\alpha}) -  \frac{1}{2}(\mathbf{f}_{m} - m(\mathbf{x}_{m}))^{T}\mathbf{K}_{mm}^{-1}(\mathbf{f}_{m} - m(\mathbf{x}_{m}))\right\} \\
    &= c \exp \{ \frac{1}{2\sigma^{2}} (\mathbf{y} - m(\mathbf{x}))^{T} (\mathbf{y} - m(\mathbf{x})) - \frac{1}{\sigma^{2}} (\mathbf{y} - m(\mathbf{x}))^{T} \mathbf{K}_{nm} \mathbf{K}_{mm}^{-1} (\mathbf{f}_{m} - m(\mathbf{x}_{m})) \\
    &- \frac{1}{2} (\mathbf{f}_{m} - m(\mathbf{x}_{m}))^{T} \left( \mathbf{K}_{mm}^{-1} + \frac{1}{\sigma^{2}} \mathbf{K}_{mm}^{-1} \mathbf{K}_{mn} \mathbf{K}_{nm} \mathbf{K}_{mm}^{-1} \right) (\mathbf{f}_{m} - m(\mathbf{x}_{m})) \}
 \\
    &= \mathcal{N}(\mathbf{f}_{m}|\boldsymbol{\mu}_{*},\mathbf{\Sigma}_{*})
\end{align*}
where $\mathbf{\Sigma}_{*} = \left(\mathbf{K}_{mm}^{-1}+\frac{1}{\sigma^{2}}\mathbf{K}_{mm}^{-1}\mathbf{K}_{mn}\mathbf{K}_{nm}\mathbf{K}_{mm}\right)^{-1}$,$\boldsymbol{\mu}_{*} = m(\mathbf{x}_m) + \mathbf{\Sigma}_{*}\left(\frac{1}{\sigma^{2}}\mathbf{K}_{mm}^{-1}\mathbf{K}_{nm}(\mathbf{y} - m(\mathbf{x}))\right)$ ,$c$ is a constant.
\end{proof}
\begin{lemma}
When the prior of $\mathbf{f}_{*}$ and $\mathbf{f}_{m}$ is not zero, then:
\begin{equation}
    q(\mathbf{f}_{*}) = \mathcal{N}(m(\mathbf{x}_{*}) + \mathbf{K}_{*m}\mathbf{K}_{mm}^{-1}(\boldsymbol{\mu}_{*}-m(\mathbf{x}_{m}),\mathbf{K}_{**}-\mathbf{K}_{*m}\mathbf{K}_{mm}^{-1}\mathbf{K}_{m*} + \mathbf{K}_{*m}\mathbf{K}_{mm}^{-1}\boldsymbol{\mu}_{*}\mathbf{K}_{mm}^{-1}\mathbf{K}_{m*}  \label{eq:65}
\end{equation} \label{lemma:6}
\end{lemma}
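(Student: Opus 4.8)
The plan is to evaluate the marginalization integral supplied by Lemma~\ref{lemma:1}, namely $q(\mathbf{f}_{*}) = \int p(\mathbf{f}_{*}|\mathbf{f}_{m})\phi(\mathbf{f}_{m})\,d\mathbf{f}_{m}$, but now carrying the nonzero prior mean $m(\cdot)$ through every term. Structurally this is the same linear-Gaussian marginalization that produces the zero-mean result in equations~\ref{eq:23}--\ref{eq:25}; the only difference is that each mean is shifted by the appropriate $m(\cdot)$ offset, so setting $m(\cdot)\equiv\boldsymbol{0}$ must recover \ref{eq:23}--\ref{eq:25} as a consistency check.

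First I would write the conditional $p(\mathbf{f}_{*}|\mathbf{f}_{m})$ under the nonzero prior. The joint law of $(\mathbf{f}_{*},\mathbf{f}_{m})$ has mean $(m(\mathbf{x}_{*}),m(\mathbf{x}_{m}))$ together with the usual kernel covariance blocks $\mathbf{K}_{**},\mathbf{K}_{*m},\mathbf{K}_{m*},\mathbf{K}_{mm}$, exactly as in \ref{eq:61} but with the testing index $*$ in place of the training index $n$. Applying the Gaussian conditioning identities \ref{eq:46}--\ref{eq:47} with $\mathbf{X}=\mathbf{f}_{*}$ and $\mathbf{Y}=\mathbf{f}_{m}$ yields
\begin{equation*}
p(\mathbf{f}_{*}|\mathbf{f}_{m}) = \mathcal{N}(\mathbf{f}_{*}\,|\,m(\mathbf{x}_{*}) + \mathbf{K}_{*m}\mathbf{K}_{mm}^{-1}(\mathbf{f}_{m}-m(\mathbf{x}_{m})),\ \mathbf{K}_{**}-\mathbf{K}_{*m}\mathbf{K}_{mm}^{-1}\mathbf{K}_{m*}).
\end{equation*}

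Next I would insert $\phi(\mathbf{f}_{m})=\mathcal{N}(\mathbf{f}_{m}|\boldsymbol{\mu}_{*},\boldsymbol{\Sigma}_{*})$ from Lemma~\ref{lemma:5} and invoke the standard rule for the marginal of a linear-Gaussian system: if $p(\mathbf{f}_{*}|\mathbf{f}_{m})=\mathcal{N}(\mathbf{A}\mathbf{f}_{m}+\mathbf{b},\mathbf{L})$ and $\phi(\mathbf{f}_{m})=\mathcal{N}(\boldsymbol{\mu}_{*},\boldsymbol{\Sigma}_{*})$, then $q(\mathbf{f}_{*})=\mathcal{N}(\mathbf{A}\boldsymbol{\mu}_{*}+\mathbf{b},\ \mathbf{L}+\mathbf{A}\boldsymbol{\Sigma}_{*}\mathbf{A}^{T})$. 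Reading off $\mathbf{A}=\mathbf{K}_{*m}\mathbf{K}_{mm}^{-1}$, $\mathbf{b}=m(\mathbf{x}_{*})-\mathbf{K}_{*m}\mathbf{K}_{mm}^{-1}m(\mathbf{x}_{m})$ and $\mathbf{L}=\mathbf{K}_{**}-\mathbf{K}_{*m}\mathbf{K}_{mm}^{-1}\mathbf{K}_{m*}$, the mean telescopes to $\mathbf{A}\boldsymbol{\mu}_{*}+\mathbf{b}=m(\mathbf{x}_{*})+\mathbf{K}_{*m}\mathbf{K}_{mm}^{-1}(\boldsymbol{\mu}_{*}-m(\mathbf{x}_{m}))$, which is precisely the claimed mean, while using $\mathbf{A}^{T}=\mathbf{K}_{mm}^{-1}\mathbf{K}_{m*}$ the covariance becomes $\mathbf{K}_{**}-\mathbf{K}_{*m}\mathbf{K}_{mm}^{-1}\mathbf{K}_{m*}+\mathbf{K}_{*m}\mathbf{K}_{mm}^{-1}\boldsymbol{\Sigma}_{*}\mathbf{K}_{mm}^{-1}\mathbf{K}_{m*}$.

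There is no genuine mathematical obstacle here; the work is careful bookkeeping that parallels the zero-mean derivation, and the main thing to get right is tracking the $m(\cdot)$ offsets so the mean collapses cleanly. Two presentational points I would fix in the final statement of \ref{eq:65}: the covariance term should carry $\boldsymbol{\Sigma}_{*}$ rather than $\boldsymbol{\mu}_{*}$ in the middle factor (matching the role played by $\mathbf{A}_{m}$ in \ref{eq:25}), so the $\boldsymbol{\mu}_{*}$ there is a typographical slip; and the outer parenthesis closing the $\mathcal{N}(\cdot,\cdot)$ in \ref{eq:65} is missing and should be restored.
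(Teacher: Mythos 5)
Your proof is correct and follows essentially the same route as the paper's: write down the joint Gaussian of $(\mathbf{f}_{*},\mathbf{f}_{m})$ with the nonzero means, condition to get $p(\mathbf{f}_{*}|\mathbf{f}_{m})$, and marginalize against $\phi(\mathbf{f}_{m})$ via the standard linear-Gaussian rule. You are also right on both presentational points about \ref{eq:65} --- the middle factor of the covariance should be $\boldsymbol{\Sigma}_{*}$ rather than $\boldsymbol{\mu}_{*}$ and the closing parenthesis is missing --- and in fact the paper's own intermediate display of the conditional prematurely substitutes $\boldsymbol{\mu}_{*}$ for $\mathbf{f}_{m}$, a slip your more explicit two-step presentation avoids.
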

\begin{proof}
Since the joint distribution is now written as:
\begin{equation}
\begin{bmatrix}
\mathbf{f}_{*} \\
\mathbf{f}_{m}
\end{bmatrix} \sim \mathcal{N} \left(\begin{bmatrix}
m(\mathbf{x}_{*})\\
m(\mathbf{x}_{m})
\end{bmatrix}, \begin{bmatrix}
\mathbf{K}_{**} & \mathbf{K}_{*m} \\
\mathbf{K}_{m*} & \mathbf{K}_{mm}
\end{bmatrix} \right) \label{eq:66}
\end{equation}
Based on conditional expectation and covariance formulation shown in equation \ref{eq:49} and \ref{eq:50}, and $q(\mathbf{f}_{*})  =  \int p(\mathbf{f}_{*}|\mathbf{f}_{m})\phi(\mathbf{f}_{m})d\mathbf{f}_{m}$
\begin{equation}
    \mathbf{f}_{*}|\mathbf{f}_{m} \sim \mathcal{N}(m(\mathbf{x}_{*}) + \mathbf{K}_{*m}\mathbf{K}_{mm}^{-1}(\boldsymbol{\mu}_{*}-m(\mathbf{x}_{m}),\mathbf{K}_{**}-\mathbf{K}_{*m}\mathbf{K}_{mm}^{-1}\mathbf{K}_{m*}) \label{eq:67}
\end{equation}
\begin{equation}
    q(\mathbf{f}_{*}) =  \mathcal{N}(m(\mathbf{x}_{*}) + \mathbf{K}_{*m}\mathbf{K}_{mm}^{-1}(\boldsymbol{\mu}_{*}-m(\mathbf{x}_{m}),\mathbf{K}_{**}-\mathbf{K}_{*m}\mathbf{K}_{mm}^{-1}\mathbf{K}_{m*} + \mathbf{K}_{*m}\mathbf{K}_{mm}^{-1}\boldsymbol{\mu}_{*}\mathbf{K}_{mm}^{-1}\mathbf{K}_{m*}) \label{eq:68}
\end{equation}
\end{proof}
\begin{definition}[Simple random sampling]
\par Simple random sampling is the basic sampling technique where we select a group of subjects for study from a larger group (\cite{easton1997statistics}). Each individual is chosen entirely by chance, and each member of the population has an equal chance of being included in the sample. Every possible sample of a given size has the same chance of selection. 
\par The simple random sampling without replacement that is adopted to do the comparison study is the famous reservoir sampling \cite{vitter1985random} shown as follows:
\begin{algorithm}[htbp]
    \caption{Reservoir sampling}
    \label{algorithm:1}
    \LinesNumbered
    \KwIn {Input: $\mathcal{P}$[1,2,..,$n$],$S$[1,2,...,$k$] }{
        \For{i $\in$ \textbf{Range} (1 \textbf{to} $k$+1)}{
            $\mathcal{S}[i] = \mathcal{P}[i]$
        }
        \For{i $\in$ \textbf{Range} ($k$+1 \textbf{to} $n$+1)}{
        $j$ = RandomInteger(1, $i$) \\
        \If{$j$ $\leq$ $k$}{
            $\mathcal{S}[j] = \mathcal{P}[i]$
        }
        }
    }
\end{algorithm}
\end{definition}
\begin{definition}[Systematic sampling]
\par Systematic sampling is a statistical method involving the selection of elements from an ordered sampling frame. The most common form of systematic sampling is an equiprobability method, which is shown as follows:
\begin{algorithm}[htbp]
    \caption{Systematic sampling}
    \label{algorithm:2}
    \LinesNumbered
    \KwIn {Input: $\mathcal{P}$[1,2,..,$N$],$\mathcal{S}$ = $\varnothing$,$n$ }{
        $k = \frac{N}{n}$ \\
        Randomly choose an index $I$, $I \in \lceil[1,k]\rceil$ \\
        $\mathcal{S}[1] = \mathcal{P}[I]$ \\
        \For{i $\in$ \textbf{Range} (2 \textbf{to} $k$+1)}{
            $\mathcal{R}[i] = \mathcal{S}[\lceil I + k * (i - 1) \rceil]$
        }
    }
\end{algorithm}
where $N$ is the population size, $n$ is the sample size. To avoid the risk that the last house chosen does not exist ($I + k * (n - 1) > N$), the initial index $I$ must be located within $[1,k]$ and should be rounded up to the next integer if it is not an integer.
\end{definition}
\begin{definition}[Cluster sampling]
\par Cluster sampling is a method used when mutually homogeneous yet internally heterogeneous groupings are evident in a statistical population. Based on k-mean algorithm \cite{macqueen1967some}, the cluster sampling can be expressed as:
\begin{algorithm}[htbp]
    \caption{Cluster sampling algorithm}
    \label{algorithm:3}
    \LinesNumbered
    \KwIn {Input: $\mathcal{P} = \{x_{1},x_{2},...,x_{n}\}$,$\mathcal{C}_{i} = \varnothing, i = 1, 2,..., k$,$\mathcal{S} = \varnothing$,\textbf{M} }{
        Randomly selecting $k$ random observations $m_{i}^{(1)}, i = 1,2,...,k$ from $\mathcal{P}$ \\
        \For{$t$ \textbf{Range}(1, \textbf{M}+1)}{
            \For{$x_{p}$ \textbf{in} $\mathcal{P}$}{
                \For{$i$ = \textbf{Range}(1,$k$+1)}{
                    \If{$\|x_{p} - m_{i}^{(t)}\|^{2} \leq \|x_{p} - m_{j}^{(t)}\|^{2}\ \forall 1 \leq j \leq k$}{
                        Add $x_{p}$ to $C_{i}^{(t)}$ and remove $x_{p}$ from $\mathcal{P}$
                    }
                }
            }
            \For{$i$ = \textbf{Range}(1,$k$+1)}{
                $m_{i}^{(t+1)} = \frac{1}{|\mathcal{C}_{i}^{t}|}\sum_{x \in \mathcal{C}_{i}^{t}}x$
            }   
        }
        \For{$i$ = \textbf{Range}(1,$k$+1)}{
            $\mathcal{S}[i] = x^{*}, \{x^{*}| \textbf{min}_{x^{*}}:\|x - m_{i}^{\textbf{M}}\|^{2}, \forall x \in \mathcal{C}_{i}^{\textbf{M}}\}$
        } 
    }
\end{algorithm}
\end{definition}
\begin{definition}[Weighted random sampling] \label{definition:5}
\par This sampling method is motivated by \cite{qu2015fundamental}, which aims to avoid selecting high percentage samples with a relatively low traffic density. The core idea is that first sort the $\mathcal{S} =\{(\rho_{1},v_{1}),(\rho_{2},v_{2}),...,(\rho_{n},v_{n})\}$ with a density ascending order. And then compute the weight $w$ for each density-speed pair based on \cite{qu2015fundamental}. Then for each density-speed pair $(\rho_{i},v_{i})$, it will has a probability $\frac{w_{i}}{\sum_{j = 1}^{n}w_{j}}$ to be selected.
\end{definition}
\bibliography{main}
\end{document}